\newcommand{\cM}{\mathcal{M}}
\DeclareMathOperator*{\E}{\mathbb{E}}
\newtheorem*{rep@theorem}{\rep@title}
\newcommand{\newreptheorem}[2]{%
\newenvironment{rep#1}[1]{%
 \def\rep@title{#2 \ref{##1}}%
 \begin{rep@theorem}}%
 {\end{rep@theorem}}}
\theoremstyle{definition}
\newtheorem{theorem}{Theorem}[section]
\newtheorem{lemma}[theorem]{Lemma}
\newtheorem{definition}[theorem]{Definition}
\newtheorem*{remark}{Remark}
\newcommand\point{\@startsection{paragraph}{4}{\z@}%
                                    {3.25ex \@plus1ex \@minus.2ex}%
                                    {-1em}%
                                    {\normalfont\normalsize\slshape}}
\newcommand{\cX}{\mathcal{X}}
\newcommand{\cY}{\mathcal{Y}}
\newcommand{\cR}{\mathcal{R}}
\newcommand{\cT}{\mathcal{T}}
\newcommand{\cF}{\mathcal{F}}
\DeclareMathOperator*{\argmin}{arg\,min}
\DeclareMathOperator*{\argmax}{arg\,max}
\DeclareMathOperator*{\err}{err}
\newcommand{\as}{\texttt{AS}}
\newcommand{\osas}{\texttt{OSAS}}
\newcommand{\rp}{\texttt{RP}}
\newcommand{\rap}{\texttt{RAP}}
\newcommand{\bma}{\texttt{All-0}}
\newcommand{\rnm}{\texttt{RNM}}
\newcommand{\gm}{\texttt{GM}}
\title{Pushing the Boundaries of Private, Large-Scale Query Answering}
\newcommand\email[2][]%
   {\newaffiltrue\let\AB@blk@and\AB@pand
      \if\relax#1\relax\def\AB@note{\AB@thenote}\else\def\AB@note{\relax}%
        \setcounter{Maxaffil}{0}\fi
      \begingroup
        \let\protect\@unexpandable@protect
        \def\thanks{\protect\thanks}\def\footnote{\protect\footnote}%
        \@temptokena=\expandafter{\AB@authors}%
        {\def\\{\protect\\\protect\Affilfont}\xdef\AB@temp{#2}}%
         \xdef\AB@authors{\the\@temptokena\AB@las\AB@au@str
         \protect\\[\affilsep]\protect\Affilfont\AB@temp}%
         \gdef\AB@las{}\gdef\AB@au@str{}%
        {\def\\{, \ignorespaces}\xdef\AB@temp{#2}}%
        \@temptokena=\expandafter{\AB@affillist}%
        \xdef\AB@affillist{\the\@temptokena \AB@affilsep
          \AB@affilnote{}\protect\Affilfont\AB@temp}%
      \endgroup
       \let\AB@affilsep\AB@affilsepx
}
\author[*]{Brendan Avent}
\author[**]{Aleksandra Korolova}
\affil[*]{University of Southern California}
\email{\url{bavent@usc.edu}}
\affil[**]{Princeton University}
\email{\url{korolova@princeton.edu}}
\date{}
\begin{document}

\maketitle

\begin{abstract}
\normalsize
\noindent We address the problem of efficiently and effectively answering large numbers of queries on a sensitive dataset while ensuring differential privacy (DP).
We separately analyze this problem in two distinct settings, grounding our work in a state-of-the-art DP mechanism for large-scale query answering: the Relaxed Adaptive Projection (\rap) mechanism.

The first setting is a classic setting in DP literature where all queries are known to the mechanism in advance.
Within this setting, we identify challenges in the \rap\ mechanism's original analysis, then overcome them with an enhanced implementation and analysis.
We then extend the capabilities of the \rap\ mechanism to be able to answer a more general and powerful class of queries ($r$-of-$k$ thresholds) than previously considered.
Empirically evaluating this class, we find that the mechanism is able to answer orders of magnitude larger sets of queries than prior works, and does so quickly and with high utility.

We then define a second setting motivated by real-world considerations and whose definition is inspired by work in the field of machine learning.
In this new setting, a mechanism is only given partial knowledge of queries that will be posed in the future, and it is expected to answer these future-posed queries with high utility.
We formally define this setting and how to measure a mechanism's utility within it.
We then comprehensively empirically evaluate the \rap\ mechanism's utility within this new setting.
From this evaluation, we find that even with weak partial knowledge of the future queries that will be posed, the mechanism is able to efficiently and effectively answer arbitrary queries posed in the future.
Taken together, the results from these two settings advance the state of the art on differentially private large-scale query answering.
\end{abstract}

\clearpage
\setcounter{tocdepth}{2}
\tableofcontents
\clearpage

\section{Overview} \label{sec:many-queries-introduction}
Many data analysis and machine learning algorithms, at their core, involve answering \textit{statistical queries}.
Statistical queries are the class of queries that answer the question: ``What fraction of entries in a given dataset have a particular property $P$?''
Because of their ubiquity, developing differentially private mechanisms to effectively answer statistical queries has been one of the most well studied problems in DP~\cite{dinur2003revealing, blum2005practical, dwork2006calibrating, blum2008learning, dwork2009complexity, dwork2010boosting, roth2010interactive, hardt2010multiplicative, hardt2010simple, gupta2012iterative}.
Early DP research primarily focused on designing mechanisms to answer specific, individual statistical queries in an interactive setting.
In that setting, queries are posed and answered one at a time with the goal of answering each query with minimal error while ensuring privacy.
However, most practical data-driven algorithms do not pose only a single query.
Instead, they pose a large number of queries, referred to as a \textit{query workload}.
When a query workload is available in advance (i.e., prespecified), it is possible to design DP mechanisms that take advantage of the relationships between the queries to achieve higher utility relative to answering the individual queries independently.
In this work, we address the problem of privately answering a large number of queries by answering the following high-level research question.
\begin{quoting}
\textit{In the two following settings, to what extent are differentially private mechanisms able to answer a large number of statistical queries efficiently and with low error?
\begin{itemize}
\item[] Setting 1: All queries are prespecified; i.e., known in advance.
\item[] Setting 2: Only partial knowledge of the queries is available in advance.\\
\end{itemize}
}
\end{quoting}

\subsubsection*{Motivating Example}
A motivating data analysis example for this work is The American Community Survey (ACS), a demographics survey program conducted by the U.S.\ Census Bureau~\cite{bureau2016american}.
The ACS regularly gathers information such as ancestry, citizenship, educational attainment, income, language proficiency, migration, disability, employment, and housing characteristics.
The Census Bureau aggregates the individual ACS responses (microdata), then generates population estimates which are available to the public via online data tools.
The most popular tool, Public Use Microdata Sample (PUMS), enables researchers to generate custom cross-tabulations of responses to the ACS questions.
To protect the privacy of the ACS respondents, PUMS data are sampled, anonymized, and only available for sufficiently populous geographics regions.
However, studies have found that the ad hoc anonymization techniques used are not entirely sufficient to protect the privacy of individual respondents (e.g., via re-identification attacks)~\cite{abowd2018staring, christ2022differential}.
As a result, the Census Bureau has announced plans to incorporate differential privacy into the American Community Survey, and declared that it is researching ``a new fully-synthetic data product'' with a development period ending in 2025~\cite{rodriguez2021synacs,daily_2022}.

One promising and active direction within DP research is synthetic data generation~\cite{mckenna2019graphical, vietri2020new, liu2021iterative}.
The hope is that once a synthetic dataset is generated via a differentially private mechanism, researchers and analysts can pose an arbitrary number of queries against the synthetic dataset without increasing the privacy risk to those who contributed the original underlying data.
DP synthetic data generation mechanisms seek to strike a balance between distilling the information in the underlying dataset most useful to analysts while simultaneously ensuring privacy of the underlying dataset.
Thus, to maximize the eventual usefulness of the synthetic dataset, synthetic data generation mechanisms must tailor the generated dataset to the specific class of downstream tasks (e.g., a particular class of queries) that analysts are most likely interested in.
This is typically done by providing a set of queries (the query workload) to the DP mechanism, so that the mechanism can tailor the synthetic dataset to answering these queries (and, ideally, to other similar queries).
Much of DP synthetic data research has focused on designing mechanisms to generate synthetic data which can provide accurate answers (under a variety of metrics, most commonly $\ell_\infty$ error) to the subset of statistical queries known as $k$-way marginal queries~\cite{barak2007privacy, thaler2012faster, gupta2013privately, chandrasekaran2014faster, cormode2018marginal, mckenna2019graphical, vietri2020new, nixon2022latent}.
Informally, a $k$-way marginal query is one which answers the question: ``What fraction of people in the private dataset have all of the following $k$ attributes: ...?''
In this work, we focus on a strict generalization of $k$-way marginal queries known as $r$-of-$k$ threshold queries~\cite{kearns1987learnability,littlestone1988learning,hatano2004learning, thaler2012faster,ullman2013privacy,aydore2021differentially} under the $\ell_\infty$ error metric.
Informally, $r$-of-$k$ threshold queries answer the question: ``What fraction of people in the private dataset have at least r of the following $k$ attributes: ...?''.

As a simplified example of where such queries can be used, we consider the scenario where a social scientist is interested in using ACS data to determine what portion of a community has a substandard quality of living.
Suppose the scientist wants to examine the four following attributes for each person in the community: is their income level below the poverty line, are they unemployed, are they homeless, do they have a low net worth?
Clearly, a person having any single attribute does not necessarily mean that they have a substandard quality of living.
Similarly, a person does not need to have all four attributes to have a substandard quality of living.
Thus, the social scientist can formulate this as an $r$-of-$k$ threshold query with $r=3$, $k=4$; i.e., a person has a substandard quality of living if they have at least three of the four attributes.

This social scientist may have many such queries, and other researchers may have sets of queries of their own that they wish to pose. Thus, a natural algorithm design question is: how should the U.S. Census Bureau answer everyone's queries with low error while still ensuring the ACS respondents’ privacy?
The simplest option is to use a portion of the DP budget to individually answer each query, independent of all other queries.
This would likely be unsatisfactory utility-wise, since it both limits how many queries can be answered and ignores any relationships between queries (which would likely lead to large $\ell_\infty$ error over the set of answers).
However, we posit two potentially superior alternatives whose performance we will investigate.
\begin{enumerate}
\item One alternative is to collect a large group of queries, and then use a state-of-the-art DP query answering mechanism to answer them all simultaneously.
This is an example of answering queries in the ``prespecified queries'' setting (studied in Sections~\ref{sec:improving-evaluation} and \ref{sec:extending-applicability}).
With careful DP mechanism design or selection, this alternative typically leads to lower $\ell_\infty$ error over the set of answers than answering each query independently.
\item A separate alternative is along the lines of synthetic data generation, and is applicable to the Census Bureau if queries which have been posed in the past are in some sense similar to queries which analysts will likely pose in the future.
Concretely, we hypothesize that the Census Bureau can leverage those past queries in conjunction with a state-of-the-art DP synthetic data generation mechanism to privately generate a synthetic dataset.
Researchers can then pose their own queries directly against the synthetic dataset without needing to go through the Census Bureau, and without needing to worry about the original ACS respondents’ privacy.
This is an example of answering queries in the ``partial knowledge'' setting (studied in Section~\ref{sec:understanding-generalizability}), as knowledge from the past is being used to inform the future.
If the queries posed in the past are indeed similar to the queries posed in the future, then a synthetic dataset generated using the past queries has the potential to answer the future queries with low $\ell_\infty$ error.
\end{enumerate}

\subsection{Prior Work on Large-Scale Query Answering} \label{sec:many-queries-prior-work}
To address answering a large number of queries under differential privacy in an improved manner over the naive interactive approach, two separate lines of research previously emerged: synthetic data generation, and workload evaluation.
We describe both lines of research, then briefly introduce the state-of-the-art mechanism which we build upon in this work.

\paragraph{Synthetic Data Generation:}
One line of research studies the problem of answering a large number of queries via private synthetic dataset generation.
In differentially private synthetic dataset generation, a DP mechanism is applied to the original, sensitive data in order to generate a synthetic dataset.
The synthetic dataset's purpose is then to directly answer arbitrary queries posed in the future, without the further need to account for potential privacy leakage or manage differential privacy budgets.
In this setting, aside from knowing the general query class, \textit{no knowledge is typically assumed about which specific queries will be posed in the future}.
The proven advantage of this approach is that DP synthetic datasets are theoretically capable of accurately answering an exponentially larger number of queries relative to the aforementioned interactive approach~\cite{gupta2012iterative, cheraghchi2012submodular, hardt2012private, gupta2013privately}.
However, actually generating a synthetic dataset which accurately answers exponentially many queries has been proven intractable~\cite{dwork2009complexity, ullman2011pcps, ullman2016answering}, even for simple subclasses of statistical queries (e.g., 2-way marginals).
Thus, a significant recent research focus has been on designing efficient mechanisms for privately generating synthetic datasets which accurately answer increasingly large numbers of queries~\cite{gaboardi2014dual, mckenna2019graphical, vietri2020new, liu2021iterative}.

\paragraph{Workload Evaluation:}
A separate line of research focuses on the problem of answering a large number of queries when the concrete query workload is prespecified; i.e., \textit{when all queries are known in advance}.
Pre-specifying the query workload enables researchers to design DP mechanisms to take advantage of the workload's structure in order to answer the queries with lower error relative to the interactive approach or the private synthetic dataset approach.
Early research in this setting produced mechanisms with optimal or near-optimal error guarantees, but with impractical (typically exponential) running times for even modestly sized real-world problems~\cite{hardt2010multiplicative, hardt2010simple, gupta2012iterative, li2015matrix}.
As a result, recent research has focused on designing computationally efficient mechanisms to answer prespecified workloads with low error on real-world datasets~\cite{mckenna2018optimizing, snoke2018pmse, aydore2021differentially}, at the cost of losing the strong theoretical utility guarantees of prior works and thus necessitating thorough empirical utility evaluations to demonstrate their value.

\paragraph{\textit{Relaxed Adaptive Projection} Mechanism:}
Our approach for evaluating suitable (i.e., efficient and accurate) mechanisms in both our settings of interest builds on Aydore et al.'s~\cite{aydore2021differentially} recently introduced \textit{Relaxed Adaptive Projection} (\rap) mechanism.
\rap\ is the current state-of-the-art mechanism for answering large sets of statistical queries in the setting where the query workload is prespecified.
At a high-level, \rap\ works by:
\begin{enumerate}
\item Initializing a synthetic dataset $D'$ in a relaxed data space (e.g., by relaxing a binary feature in the original dataset to the interval $[0,1]$ in the synthetic dataset).
\item For each original prespecified query, specifying a surrogate query which is equivalent to the original in the unrelaxed data space, but which is differentiable everywhere in the relaxed space. 
\item Iteratively applying an \textit{Adaptive Selection} (\as) step followed by a \textit{Relaxed Projection} (\rp) step. In the \as\ step, adaptivity is introduced to allow the subset of queries with the highest error on $D'$ to be privately selected. In the \rp\ step, these selected queries' surrogates are used to optimize $D'$ using standard gradient-based optimization techniques.
\item Finally, answering the original set of queries using the optimized synthetic dataset $D'$.
\end{enumerate}
For $k$-way marginals, a canonical subclass of statistical queries~\cite{barak2007privacy, thaler2012faster, gupta2013privately, chandrasekaran2014faster, cormode2018marginal} (formally defined in Section~\ref{sec:preliminaries}), Aydore et al.\ theoretically and empirically demonstrate that \rap\ outperforms prior state-of-the-art mechanisms.
Theoretically, they provide an ``oracle efficient'' (i.e., assuming the optimization procedure achieves a global minima) utility result characterizing \rap's error, showing that \rap\ achieves strictly lower error than the previous practical state-of-the-art mechanism~\cite{vietri2020new}.
Experimentally, they compare the \rap\ mechanism with prior state-of-the-art mechanisms~\cite{mckenna2019graphical, vietri2020new}, demonstrating that \rap\ answers prespecified sets of queries with lower error.\\

\subsection{Our Contributions} \label{sec:contributions}
To answer this work's high-level research question, we make the following contributions in both settings of interest.
In the classic setting where all queries are known in advance, our contributions are as follows.
\begin{itemize}
	\item We overcome memory hurdles in \rap's initial implementation by reimplementing \rap\ in a memory-efficient way, thus enabling the evaluation of significantly larger query spaces than previously considered.
	\item We utilize the new implementation to enhance \rap's evaluation, evaluating \rap\ on larger query spaces (answering approximately 50x more queries) than in its initial evaluation, and conclusively determining the role that adaptivity from the \as\ step plays in \rap's utility.
	\item We extend \rap's applicability by expanding the class of queries that it evaluates, finding that it can efficiently and effectively answer more complex query classes than previously considered.
\end{itemize}

\noindent As a realistic intermediate setting that lies between the two classic extremes of no-knowledge vs.\ full-knowledge of which queries will be posed, we propose a new setting where partial knowledge of the future queries is available.
In this new setting, our contributions are as follows.
\begin{itemize}
	\item We concretely define this setting as well as how to measure utility within it. Specifically, we assume that a set of historical queries was independently drawn from some unknown distribution $\cT_H$, and that the mechanism has access to these historical queries. In the future, the mechanism will be posed an arbitrary number of queries sampled from a distribution $\cT_F$, which may be related to $\cT_H$. We define the utility of the mechanism in terms of its generalization error; i.e., its expected error across these future queries drawn from $\cT_F$ having been given access to the historical queries from $\cT_H$.
	\item We assess how suitable \rap\ is for this new setting by formulating query distributions according to real-world phenomena, then empirically evaluating \rap's generalization error on these distributions. When future queries are drawn from the same distribution as the historical queries that \rap\ used to learn its synthetic dataset (i.e., $\cT_H=\cT_F$), we find that regardless of what the distribution is, \rap\ is able to achieve high utility. When the distribution of future queries diverges from the distribution of historical queries, we find that \rap's utility slowly and gracefully declines.
\end{itemize}

\noindent These contributions, in both the prespecified queries setting and the partial knowledge setting, definitively demonstrate the practical value of \rap\ and improve \rap's adoptability for real-world uses.\\

The remainder of this work is structured as follows.
Beginning in Section~\ref{sec:preliminaries}, we provide a comprehensive overview of the relevant technical terminology and definitions, and detail the \rap\ mechanism that we build upon.
In Section~\ref{sec:improving-evaluation}, we perform a focused but thorough reproducibility study on Aydore et al.'s~\cite{aydore2021differentially} evaluation of the \rap\ mechanism.
To accomplish this, we first improve \rap's implementation from the ground up, and then leverage the new implementation to enhance \rap's initial evaluation in order to strengthen our comprehension of its utility.
Building on the improved \rap\ implementation, in Section~\ref{sec:extending-applicability} we expand the class of queries that \rap\ is able to accommodate.
We then empirically evaluate \rap\ on this new class of queries, finding that it is able to efficiently answer large numbers of queries from this class while maintaining high utility.
In Section~\ref{sec:understanding-generalizability}, we concretely define our newly proposed setting where a mechanism is given partial knowledge of the queries that will be posed in the future.
We define how we assess \rap's performance in this setting, and detail the distinct new ways that \rap's performance may be affected in this new setting.
We then empirically evaluate \rap\ in this setting, finding that even with only partial knowledge of which queries will be posed in the future, \rap\ is able to efficiently and effectively achieve high utility.
Finally, in Section~\ref{sec:many-queries-related-works}, in addition to the related works already discussed in this section, we describe other important relevant works and the future directions they motivate related to this work.

\section{Technical Preliminaries} \label{sec:preliminaries}
In this section, we define the requisite technical terminology.
The fundamental concepts introduced here were primarily presented in prior works~\cite{gaboardi2014dual, vietri2020new, aydore2021differentially}.
We restate them to aid in understanding and contextualizing Aydore et al.'s \rap\ mechanism, which we use to answer this work's research questions.
Towards this, we first define statistical queries and their subclasses that are relevant to this work.
We then define what it means to be a ``surrogate'' query for one of these statistical queries.
Next, we describe what workloads are and how we use them.
Finally, we detail the \rap\ mechanism that we build on in this work.
Because this work is notationally dense, Table~\ref{tab:symbols} serves as a reference for the various symbols that we define.

\begin{table}
\setlength\tabcolsep{5pt}
\begin{tabularx}{\columnwidth}{rc|X}
& \textbf{Symbol} & \textbf{Usage} \\
\midrule
\midrule
& $\epsilon, \delta$ & Differential privacy parameters. \\
\hline
& $\cX,\ d,\ \cX_i$ & Data space $\cX$ for any possible record consisting of $d$ features.  $\cX_i$ is the domain of feature $i$.\\
& $D,\ n$ & Dataset $D$ containing $n$ records from $\cX$. \\
\hline
& $q_{\phi}$ & Statistical query $q$ defined by the mean of the predicate $\phi$ over a set of records from $\cX$. \\
& $Q,\ m,\ a$ & $Q$ is a vector of $m$ queries, and $a$ represents the answers to the vector of queries over the dataset $D$ such that $Q(D) = a = (a_1,\dots,a_m)$. \\
& $W$ & Threshold workload $W$ which defines the concrete query vector $Q$. \\
& $q_{\phi_{S,y,k}}$ & $k$-way marginal query specified by set $S$ of $k$ features and values $y$ for each feature. \\
& $q_{\phi_{S,y,1}}$ & 1-of-$k$ threshold query specified by set $S$ of $k$ features and values $y$ for each feature. \\
$\star$ & $q_{\phi_{S,y,r}}$ & $r$-of-$k$ threshold query specified by set $S$ of $k$ features and values $y$ for each feature, and threshold $r$. \\
\hline
& $\cY,\ d' $ & Data space $\cY$ consisting of $d'$ features, which is a relaxation of the one-hot encoded $\cX$ data space. \\
& $D',\ n'$ & Synthetic dataset $D'$ containing $n'$ features from $\cY$. \\
& $\hat{q}_{\hat{\phi}}$ & Surrogate query $\hat{q}$ defined by the mean of the function $\hat{\phi}$ over a set of records from $\cY$. \\
& $\hat{Q}$ & Vector of surrogate queries. \\
& $\hat{q}_{\hat{\phi}_T}$ & Product query, specified by a set of features $T$. \\
$\star$ & $\hat{q}_{\hat{\phi}_{T_+,T_-}}$ & Generalized product query, specified by a set of positive and negated features $T_+$ and $T_-$. \\
$\star$ & $\hat{q}_{\hat{\phi}_{T,r}}$ & Polynomial threshold query, specified by a set of features $T$ and integer $r$. \\
\hline
$\star$ & $\err_P$ & Measure of a mechanism's present error, used when all queries are known in advance.\\
$\star$ & $\err_F$ & Measure of mechanism's future error, used when only partial knowledge of queries is available in advance.\\
$\star$ & $\cF, \cT$ & Distribution $\cT$ from which thresholds in a random workload are sampled i.i.d.\ in order to form a corresponding vector of consistent queries. The threshold distribution may be formed by a distribution over features $\cF$. \\
\hline
& \rap, \as, \rp & Relaxed Adaptive Projection mechanism, with its primary subcomponents: the Adaptive Selection and Relaxed Projection mechanisms. \\ 
& \rnm & Report Noisy Max mechanism, used by the \as\ mechanism to select high-error queries. \\
& \gm & Gaussian noise-addition mechanism, used as both a baseline mechanism as well as a subcomponent of \rap\ to privately answer queries directly. \\
$\star$ & \osas & Oneshot Adaptive Selection mechanism, introduced as more efficient a drop-in replacement for \rap's \as\ mechanism. \\
& \bma & Baseline mechanism that returns only 0 for all queries. \\
\end{tabularx}
\caption{Comprehensive list of notation. Lines marked with a $\star$ indicate new concepts not found in~\cite{aydore2021differentially}.} \label{tab:symbols}
\end{table}

\subsection{Statistical Queries and their Subclasses} \label{sec:stat-qs}
The general class of queries that we are interested in (which the \rap\ mechanism can, in theory, be used to answer) are statistical queries.
\begin{definition}[Statistical query]
A \textit{statistical query} $q_{\phi}$ is parameterized by a predicate $\phi: \cX \rightarrow \{0,1\}$; i.e., the predicate takes as input a record $x$ of a dataset $D$, and outputs a boolean value.
The statistical query is then defined as the normalized count of the predicate over all $n$ records of the input dataset; i.e.,
$$q_{\phi}(D) = \frac{\sum_{x\in D} \phi(x)}{n}.$$
Given a vector of $m$ statistical queries $Q$, we define $Q(D)=(a_1,\dots,a_m)$ to be the answers to each of the queries on $D$; i.e., $a_i = q_{\phi_i}(D)$ for all $i \in [m]$.
\end{definition}

We now formally define the specific subclasses of statistical queries that we reference throughout this work.
Let the space for each record in the dataset consist of $d$ categorical features $\cX = (\cX_1 \times \cdots \times \cX_d)$, where each $\cX_i$ is the discrete domain of feature $i$, and let $x_i \in \cX_i$ denote the value of feature $i$ of record $x \in \cX$.
Prior works have primarily evaluated the subclass of statistical queries known as $k$-way marginals (also known as $k$-way contingency tables or $k$-way conjunctions)~\cite{barak2007privacy, thaler2012faster, gupta2013privately, chandrasekaran2014faster, cormode2018marginal, mckenna2019graphical, vietri2020new}, and typically focused specifically on 3-way and 5-way marginals.
\begin{definition}[$k$-way marginal] \label{def:kw}
A \textit{$k$-way marginal query} $q_{\phi_{S,y,k}}$ is a statistical query whose predicate $\phi_{S,y,k}$ is specified by a set $S$ of $k$ features $f_1 \neq \cdots \neq f_k \in [d]$ and a target $y \in (\cX_{f_1} \times \cdots \times \cX_{f_k})$, given by
\[ \phi_{S,y,k}(x)=
   \begin{cases} 
      1 & \text{if }\ x_{f_1} = y_1 \ \wedge \cdots \ \wedge x_{f_k} = y_k \\
      0 & \text{otherwise.} 
   \end{cases}
\]
Informally, a row satisfies the predicate if \textit{all} of its values match the target on the specified features.
A \textit{$k$-way marginal} is then specified by a set $S$ of $k$ features, and consists of all ($\Pi_{i=1}^k |\cX_{f_i}|$) $k$-way marginal queries with feature set $S$.
\end{definition}

1-of-$k$ thresholds (also known as $k$-way disjunctions) were briefly evaluated in \cite{aydore2021differentially}, and are defined similarly.
\begin{definition}[1-of-$k$ threshold] \label{def:1k}
A \textit{1-of-$k$ threshold query} $q_{\phi_{S,y,1}}$ is a statistical query whose predicate $\phi_{S,y,1}$ is specified by a set $S$ of $k$ features $f_1 \neq \cdots \neq f_k \in [d]$ and a target $y \in (\cX_{f_1} \times \cdots \times \cX_{f_k})$, given by
\[ \phi_{S,y,1}(x)=
   \begin{cases} 
      1 & \text{if }\ x_{f_1} = y_1 \ \vee \cdots \ \vee x_{f_k} = y_k \\
      0 & \text{otherwise.} 
   \end{cases}
\]
Informally, a row satisfies the predicate if \textit{any} of its values match the target on the specified features.
A \textit{1-of-$k$ threshold} is then specified by a set $S$ of $k$ features, and consists of all ($\Pi_{i=1}^k |\cX_{f_i}|$) 1-of-$k$ threshold queries with feature set $S$.
\end{definition}

Finally, in this work, we evaluate a generalization of both of these subclasses of statistical queries: $r$-of-$k$ thresholds~\cite{kearns1987learnability,littlestone1988learning,hatano2004learning, thaler2012faster,ullman2013privacy,aydore2021differentially}.
\begin{definition}[$r$-of-$k$ threshold] \label{def:rk}
An \textit{$r$-of-$k$ threshold query} $q_{\phi_{S,y,r}}$ is a statistical query whose predicate $\phi_{S,y,r}$ is specified by a positive integer $r \le k$, a set $S$ of $k$ features $f_1 \neq \cdots \neq f_k \in [d]$, and a target $y \in (\cX_{f_1} \times \cdots \times \cX_{f_k})$.
The predicate is then given by
\[\phi_{S,y,r}(x) = \mathbbm{1}\left[ \sum_{i=1}^k \mathbbm{1}[x_{f_i} = y_i] \ge r \right].\]
Informally, a row satisfies the predicate if \textit{at least} $r$ of its values match the target on the specified features.
An \textit{$r$-of-$k$ threshold} is then specified by positive integer $r \le k$ and a set $S$ of $k$ features, and consists of all ($\Pi_{i=1}^k |\cX_{f_i}|$) $r$-of-$k$ threshold queries with feature set $S$.
This class generalizes $k$-way marginals when $r=k$, and generalizes 1-of-$k$ thresholds when $r=1$.
\end{definition}
The expressiveness of $r$-of-$k$ thresholds make them more useful than $k$-way marginals, as they enable more nuanced queries to be easily and intuitively posed.
This is particularly useful when the implications behind categories of distinct features in a dataset have some overlap.
For instance, in the motivating U.S. Census example, there were several features with categories that were indicative of a substandard quality of living.
Requiring someone to belong to \textit{all} of the categories (as a $k$-way marginal requires) is overly restrictive, and $r$-of-$k$ thresholds allow this restrictiveness to be relaxed.

\begin{remark}
We say that any $r$-of-$k$ threshold query (and, by extension, any $k$-way marginal query or 1-of-$k$ threshold query) specified by $r$, $k$, $S$, and $y$ is \textit{consistent} with the $r$-of-$k$ threshold specified by $r$, $k$, and $S$.
That is, we often refer to an $r$-of-$k$ threshold simply as the features it specifies, whereas a query \textit{consistent with} that $r$-of-$k$ threshold is one which specifies concrete target values corresponding to those features.
\end{remark}

\subsection{Surrogate Queries}
Aydore et al.~\cite{aydore2021differentially} introduce surrogate queries to replace the original statistical queries with queries that are similar, but that are amenable to first-order optimization methods.
These first-order optimization methods, thanks to significant recent advances in hardware and software tooling, can enable highly efficient learning of synthetic datasets.

\begin{definition}[Surrogate Query]
A \textit{surrogate query} $\hat{q}_{\hat{\phi}}$ is parameterized by function $\hat{\phi}: \cY \rightarrow \mathbb{R}$; i.e., the function takes as input a record $x \in \cY$ from a dataset $D'$, and outputs a real value.
The surrogate query is then defined as the normalized count of the function over all $n'$ records of the input dataset; i.e., 
\[ \hat{q}_{\hat{\phi}}(D') = \frac{\sum_{x\in D'} \hat{\phi}(x)}{n'}.\]
The only distinctions between the definitions of a surrogate query with $\hat{\phi}$ and a statistical query with $\phi$ are that $\hat{\phi}$'s domain may be different than $\phi$'s, and $\hat{\phi}$'s codomain is the entire real line instead of $\{0,1\}$.
\end{definition}

We are interested in surrogate queries that are \textit{equivalent extended differentiable queries} (EEDQs) as defined in~\cite{aydore2021differentially}.
\begin{definition}[Equivalent Extended Differentiable Query]
Let $q_{\phi}$ be an arbitrary statistical query parameterized by $\phi(x):\cX \rightarrow \{0,1\}$, and let $\hat{q}_{\hat{\phi}}$ be a surrogate query parameterized by $\hat{\phi}: \cY \rightarrow \mathbb{R}$.
We say that $\hat{q}_{\hat{\phi}}$ is an \textit{equivalent extended differentiable query} to $q_{\phi}$ if it satisfies the following properties:
\begin{enumerate}
\item $\hat{\phi}$  is differentiable over $\cY$. I.e., for every $x \in \cY,\ \nabla \hat{\phi}(x)$ is defined.
\item $\hat{\phi}$ agrees with $\phi$ on every possible database record that results from a one-hot encoding. I.e., for every $x \in \cX$ where $h(x)$ represents a one-hot encoding\footnote{A one-hot encoding of a categorical feature $\cX_i$ with $t_i$ categories is a mapping from each category to a unique $1 \times t_i$ binary vector that has exactly 1 non-zero coordinate.} of $x$: $\phi(x) = \hat{\phi}(h(x))$.
\end{enumerate}
\end{definition}

\paragraph{Notation of Feature Spaces:}
Recall the original feature space $\cX = (\cX_1 \times \dots \times \cX_d)$, where each $\cX_i$ is the discrete domain of feature $i$, and let $t_i$ be the number of distinct values/categories that $\cX_i$ can attain.
A one-hot encoding $h(x)$ of any record $x$ results in a binary vector $\{0,1\}^{d'}$, where $d' = \sum_{i=1}^d t_i$.
Just as in~\cite{aydore2021differentially}, we are interested in constructing a synthetic dataset that lies in a continuous relaxation of this binary feature space.
A natural relaxation of $\{0,1\}^{d'}$ is $[0,1]^{d'}$, so we adopt $\cY = [0,1]^{d'}$ as the relaxed space for the remainder of this work.\\

As an illustrative example of an EEDQ, we define the class of EEDQ's used by Aydore et al.\ for $k$-way marginals.
Concretely, \cite{aydore2021differentially} defines the class of surrogate queries known as \textit{product queries}, and shows how to construct an EEDQ product query for any given $k$-way marginal.
\begin{definition}[Product Query] \label{def:pq}
Given a subset of features $T \subseteq [d']$, the \textit{product query} $\hat{q}_{\hat{\phi}_T}$ is a surrogate query parameterized by function $\hat{\phi}_T$ which is defined as $\hat{\phi}_T(x) = \prod_{i \in T} x_i$.
\end{definition}

\begin{lemma}[\cite{aydore2021differentially}, Lemma 3.3] \label{lem:pq}
Every $k$-way marginal query $q_{\phi_{S,y,k}}$ has an EEDQ in the class of product queries. By construction, every $\hat{\phi}_T$ satisfies the requirement that it is defined over the entire relaxed space $\cY$ and is differentiable.
Additionally, for every $q_{\phi_{S,y,k}}$, there is a corresponding product query $\hat{q}_{\hat{\phi}_T}$ with $|T|=k$ such that for every $x \in \cX: \phi_{S,y,k}(x) = \hat{\phi}_T(h(x))$. We construct this $T$ in the following straightforward way: for every $i \in S$, we include in $T$ the coordinate corresponding to $y_i \in \cX_{f_i}$.
\end{lemma}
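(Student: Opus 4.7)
The plan is to verify the two conditions of the EEDQ definition for the product query $\hat{q}_{\hat{\phi}_T}$ built from $q_{\phi_{S,y,k}}$ via the stated construction of $T$. First I would formalize the indexing: number the coordinates of $\cY = [0,1]^{d'}$ by pairs $(f, v)$ with $f \in [d]$ and $v \in \cX_f$, so that the coordinate $h(x)_{(f,v)}$ of a one-hot encoding equals $\mathbbm{1}[x_f = v]$. With this convention, the construction is $T = \{(f_1, y_1), \ldots, (f_k, y_k)\}$, and in particular $|T| = k$ since the features $f_1, \ldots, f_k$ are distinct.

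For the first condition (differentiability over $\cY$), I would observe that $\hat{\phi}_T(x) = \prod_{i \in T} x_i$ is a multilinear polynomial in the coordinates of $x$. Polynomials are infinitely differentiable on all of $\mathbb{R}^{d'}$, so in particular $\nabla \hat{\phi}_T(x)$ exists at every $x \in \cY$. This step is essentially a one-line observation.

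For the second condition (agreement on one-hot encodings), I would compute directly. For any $x \in \cX$,
\[
\hat{\phi}_T(h(x)) = \prod_{(f_i, y_i) \in T} h(x)_{(f_i, y_i)} = \prod_{i=1}^{k} \mathbbm{1}[x_{f_i} = y_i].
\]
A product of indicator variables equals $1$ exactly when every factor equals $1$, so this product equals $1$ iff $x_{f_1} = y_1 \wedge \cdots \wedge x_{f_k} = y_k$, which is precisely the condition under which $\phi_{S,y,k}(x) = 1$ by Definition~\ref{def:kw}. In all other cases both sides are $0$, giving $\phi_{S,y,k}(x) = \hat{\phi}_T(h(x))$ for every $x \in \cX$.

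There is essentially no hard step here; the proof is a direct unpacking of definitions. The only thing that requires minor care is pinning down an unambiguous indexing of the $d' = \sum_i t_i$ one-hot coordinates so that "the coordinate corresponding to $y_i \in \cX_{f_i}$" is well-defined, and verifying that the distinctness of $f_1, \ldots, f_k$ guarantees $|T| = k$ (so that the product query has the advertised arity).
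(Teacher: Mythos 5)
Your proof is correct and follows exactly the argument the paper (and Aydore et al.'s Lemma 3.3, which it cites) intends: differentiability because $\hat{\phi}_T$ is a multilinear polynomial, and agreement on one-hot encodings because the product of the selected coordinates reduces to the conjunction $\bigwedge_i \mathbbm{1}[x_{f_i}=y_i]$. Your extra care in fixing an explicit $(f,v)$-indexing of the $d'$ one-hot coordinates and noting that distinctness of the $f_i$ gives $|T|=k$ is a welcome tightening of what the paper leaves implicit.
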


\subsection{Threshold Workloads}
It was standard in prior works to evaluate \textit{workloads} of $k$-way marginals~\cite{li2015matrix, mckenna2018optimizing, mckenna2019graphical, vietri2020new, liu2021leveraging, liu2021iterative}.
A $k$-way marginal workload $W$ is specified by a set of $k$-way marginals, $W = \{S_1, \dots, S_{|W|}\}$ such that each $S_i \in W$ is a set of $k$ features.
This workload $W$ defines a concrete query vector $Q$ which consists of all queries consistent with each marginal in $W$.
Since $Q$ is defined by the marginal workload defines, $Q$ is commonly referred to as the \textit{query workload}.
For example, a workload may be specified by the following two $3$-way marginals, $W = \{(1,2,5), (2,3,7)\}$, and would therefore define the query vector $Q$ containing all marginal queries consistent with those feature sets.
The number of queries in this query vector would then be $|Q|=|\cX_1||\cX_2||\cX_5| + |\cX_2||\cX_3||\cX_7|$.

Since our work extends the class of queries from marginals to $r$-of-$k$ thresholds, rather than a workload being specified by a set of marginals, we say that a workload $W$ is specified by a set of $r$-of-$k$ thresholds.
$W$ similarly defines the concrete query vector $Q$ which consists of all $r$-of-$k$ threshold queries consistent with each $r$-of-$k$ threshold in $W$.
For example, when $r=1$ and $k=3$, we can specify a similar workload as before $W = \{(1,2,5), (2,3,7)\}$ which defines query workload $Q$ containing the same number of consistent queries as before ($|Q|=|\cX_1||\cX_2||\cX_5| + |\cX_2||\cX_3||\cX_7|$) --- however, here each $q \in Q$ is a 1-of-3 threshold query instead of a 3-way marginal query.

Lastly, we let $\hat{Q}$ denote the corresponding vector of surrogate queries for $Q$.
We use threshold workloads (and their corresponding vector of all consistent queries) for the empirical evaluations of our mechanisms.

\subsection{\textit{Relaxed Adaptive Projection} (\rap) Mechanism}
We now describe the details of the \rap\ mechanism, including how it works as well as its DP guarantee.

Algorithm~\ref{alg:rap} formally defines the \rap\ mechanism.
The input to the mechanism is the dataset $D$ of sensitive user data, the desired size of the synthetic dataset $n'$, privacy parameters $(\epsilon, \delta)$, a vector of $m$ statistical queries $Q$ and their corresponding surrogate queries $\hat{Q}$, adaptiveness parameters $T,K \in [m]$.
The final outputs are (1) an $n'$-row synthetic dataset, and (2) estimates to the original queries $Q$ obtained by evaluating their surrogate queries on the synthetic dataset; i.e., \rap\ outputs (1) $D'$ and (2) $\hat{Q}(D')$.

\begin{algorithm}
\caption{Relaxed Adaptive Projection (\rap) Mechanism}
     \vspace{0.4em} \hspace*{\algorithmicindent} \textbf{Input} \vspace*{-0.6em}
     \begin{itemize}[leftmargin=1.5em] \setlength\itemsep{-0.2em}
     	\item $D$: Dataset of $n$ records from space $\cX$.
     	\item $Q, \hat{Q}$: A vector of $m$ statistical queries and their corresponding surrogate queries.
     	\item $n', \cY$: Desired size of synthetic dataset with records from relaxed space $\cY$.
     	\item $T$: Number of rounds of adaptiveness.
     	\item $K$: Number of queries to select per round of adaptiveness.
     	\item $\epsilon, \delta$: Differential privacy parameters.
     \end{itemize}
     \hspace*{\algorithmicindent} \textbf{Body}
\begin{algorithmic}[1]
    \STATE Let $\rho = \epsilon + 2\left(\log(\frac{1}{\delta}) - \sqrt{\log(\frac{1}{\delta})(\epsilon+\log(\frac{1}{\delta}))}\right)$.
    \STATE Independently uniformly randomly initialize $D' \in \cY^{n'}$.
    \IF{ $T=1,\ K=m$ } 
    	\FOR{$i=1,2,\dots,m$}
    		\STATE Let $\tilde{a}_i = \gm(D, q_i, \rho / m)$.
    	\ENDFOR
    	\STATE Let $D' = \rp(\hat{Q}, \tilde{a}, D')$.
    \ELSE 
    	\STATE Let $Q_s = \emptyset$.
    	\FOR{$t=1,2,\dots,T$}
			\STATE Let $Q_s, \tilde{a} = \as(D, D', Q, \hat{Q}, Q_s, K, \frac{\rho}{T})$.
			\STATE Let $\hat{Q}_{s} = (\hat{q}_i: q_i \in Q_s)$.
			\STATE Let $D' = \rp(D', \hat{Q}_s, \tilde{a})$.
    	\ENDFOR
    \ENDIF
    \STATE {\bfseries Return:} Final synthetic dataset $D'$ and answers $\hat{Q}(D')$.
\end{algorithmic}
\label{alg:rap}
\end{algorithm}

\begin{algorithm}
\caption{Adaptive Selection (\as) Mechanism}
     \vspace{0.4em} \hspace*{\algorithmicindent} \textbf{Input} \vspace*{-0.6em}
     \begin{itemize}[leftmargin=1.5em] \setlength\itemsep{-0.2em}
     	\item $D, D'$: Dataset of $n$ records from space $\cX$, and synthetic dataset of $n'$ records from relaxed space $\cY$.
     	\item $Q, \hat{Q}$: Vector of all statistical queries and their corresponding surrogate queries.
     	\item $Q_s$: Set of already selected queries.
     	\item $K$: Number of new queries to select.
     	\item $\rho$: Differential privacy parameter.
     \end{itemize}
     \hspace*{\algorithmicindent} \textbf{Body}
\begin{algorithmic}[1]
     \FOR{$j = 1, 2, \dots, K$}
   	   	\STATE Let $\Delta = (|\hat{q}_{i}(D) - \hat{q}_{i}(D')| : q_i \in Q \setminus Q_s)$.
     	\STATE Let $i = \rnm(\Delta, \frac{\rho}{2K})$
     	\STATE Add $q_i$ into $Q_s$.
     	\STATE Let $\tilde{a}_i = \gm(D, q_i, \frac{\rho}{2K})$.
     \ENDFOR
     \STATE {\bfseries Return:} $Q_s$ and $\tilde{a} = (\tilde{a}_i: q_i \in Q_s)$.
\end{algorithmic}
\label{alg:as}
\end{algorithm}

\begin{algorithm}
\caption{Relaxed Projection (\rp) Mechanism}
     \vspace{0.4em} \hspace*{\algorithmicindent} \textbf{Input} \vspace*{-0.6em}
     \begin{itemize}[leftmargin=1.5em] \setlength\itemsep{-0.2em}
     	\item $D'$: Synthetic dataset of $n'$ records from relaxed space $\cY$.
     	\item $\hat{Q}$: Vector of surrogate queries.
     	\item $\tilde{a}$: Vector of ``true'' privatized answers corresponding to each surrogate query.
     \end{itemize}
     \hspace*{\algorithmicindent} \textbf{Body}
\begin{algorithmic}[1]
    \STATE Use any iterative differentiable optimization technique (SGD, Adam, etc.) to attempt to find:
    $$D' = \argmin_{D' \in \cY^{n'}} ||\hat{Q}(D') - \tilde{a}||_2^2,$$
    applying the Sparsemax transformation to every feature encoding in each row of $D'$ between each iteration.
    \STATE {\bfseries Return:} $D'$.
\end{algorithmic}
\label{alg:rp}
\end{algorithm}

\paragraph{Non-Adaptive Case:}
In its most basic form ($T=1, K=m$), \rap\ employs no adaptivity.
Here, the vector of $m$ queries are first privately answered directly on the sensitive dataset $D$ using the \textit{Gaussian Mechanism} (\gm).
These answers, along with the vector of surrogate queries $\hat{Q}$ and a uniformly randomly initialized $n'$-row synthetic dataset $D'$, are passed to the \textit{Relaxed Projection} mechanism (\rp, Algorithm~\ref{alg:rp}).
The \rp\ subcomponent utilizes an iterative gradient-based optimization procedure (such as SGD) to update $D'$ by minimizing the disparity between the surrogate queries answers on $D'$ and the privatized answers on the sensitive dataset $D$.
After iterative update, the Sparsemax transformation is applied to every feature encoding in each row of $D'$.
Once the procedure reaches a stopping condition (e.g., $\hat{Q}(D')$ is within a certain tolerance of $\tilde{a}$, or a certain number of iterations have occurred), \rp\ returns the final $D'$.
\rap\ then returns $D'$ along with estimated answers to the query workload $\hat{Q}(D')$.

\paragraph{Adaptive Case:}
In the more general case, \rap\ proceeds in $T > 1$ rounds.
In each round $t$, \rap\ uses the \textit{Adaptive Selection} (\as) mechanism to select $K$ new queries to add to the set $Q_s$.
\as\ iteratively uses the Gumbel noise \textit{Report Noisy Max} (\rnm)~\cite{chen2016truthful, durfee2019practical} and \gm\ mechanisms together to privately choose the $K$ queries that have the largest disparity between their current answers on the synthetic dataset $D'$ and their answers on the true dataset $D$.
The \rp\ mechanism is then applied only to this subset $Q_s$ containing $tK$ queries in each round, rather than applying \rp\ in 1 round on the full vector of privately answered queries $Q$ (as in the non-adaptive case).
Aydore et al.\ claim that the aim of incorporating this adaptivity is to expend the privacy budget more wisely by selectively answering only the $TK \ll m$ total worst-performing queries.

\subsubsection*{Concentrated Differential Privacy}
To state and understand \rap's DP guarantee, we must briefly discuss \textit{zero-concentrated differential privacy} (zCDP)~\cite{bun2016concentrated}.

Although \rap\ is given $\epsilon$ and $\delta$ values as input and in turn guarantees $(\epsilon, \delta)$-DP, its DP sub-mechanisms and corresponding privacy proof are in terms of $\rho$-zCDP.
Zero-concentrated differential privacy is a different definition of DP that provides a weaker guarantee than pure DP but a stronger guarantee than approximate DP.
It is formally defined as follows.
\begin{definition}[\cite{bun2016concentrated}]
A randomized mechanism $\cM$ is $\rho$-zCDP if and only if for all neighboring input datasets $D$ and $D'$ that differ in precisely one individual's data and for all $\alpha \in (1,\infty)$, the following inequality is satisfied:
$$\mathbb{D}_\alpha(\cM(D) || \cM(D')) \le \rho \alpha,$$
where $\mathbb{D}_\alpha(\cdot || \cdot)$ is the $\alpha$-R\'enyi divergence.
\end{definition}

We omit a detailed discussion of zCDP in this work, referring an interested reader to Bun and Steinke's work~\cite{bun2016concentrated} for more details.
However, its value for \rap\ comes from the fact that zCDP has better composition properties than approximate DP, yet \rap's final composed zCDP guarantee (parameterized by $\rho$) can be converted back into an $(\epsilon, \delta)$-DP guarantee.
This converted $(\epsilon, \delta)$-DP guarantee is better than if standard composition results of approximate DP had been directly applied.

We now informally state these composition and conversion properties.
zCDP's composition property ensures that if two mechanisms satisfy $\rho_1$-zCDP and $\rho_2$-zCDP, then a mechanism that sequentially composes them satisfies $\rho$-zCDP with $\rho = \rho_1 + \rho_2$.
zCDP's conversion property ensures that if a mechanism satisfies $\rho$-zCDP, then for any $\delta > 0$, the mechanism also satisfies $(\epsilon, \delta)$-DP with $\epsilon = \rho + 2\sqrt{\rho\log(1/\delta)}$.

Finally, we define the two fundamental DP mechanisms used in \rap\ --- \gm\ and \rnm\ --- and state their DP guarantees in terms of zCDP.
The first mechanism is the Gaussian mechanism, which we restate here in terms of zCDP and for the particular use case of answering a single statistical query.
\begin{definition}
The Gaussian mechanism \gm$(D, q_i, \rho)$ takes as input a dataset $D \in \cX^n$, a statistical query $q_i$, and a zCDP parameter $\rho$.
It outputs $a_i = q_i(D) + Z$, where $Z \sim \textrm{Normal}(0,\sigma^2)$ and $\sigma^2 = \frac{1}{2n^2\rho}$.
\end{definition}
\begin{lemma}[\cite{bun2016concentrated}]
For any query $q_i$ and $\rho > 0$, the \gm$(D, q_i, \rho)$ satisfies $\rho$-zCDP.
\end{lemma}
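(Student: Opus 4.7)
The plan is to reduce the statement to the standard zCDP guarantee for the Gaussian mechanism applied to a real-valued function of bounded $\ell_2$-sensitivity, and then to verify that the chosen variance $\sigma^2 = \frac{1}{2n^2\rho}$ exactly matches the required calibration for a statistical query.

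First, I would bound the $\ell_2$-sensitivity of $q_i$ on neighboring datasets. Let $D, D' \in \cX^n$ differ in precisely one record $x_j$ replaced by $x'_j$. Then
\[
|q_i(D) - q_i(D')| = \frac{|\phi_i(x_j) - \phi_i(x'_j)|}{n} \le \frac{1}{n},
\]
since $\phi_i$ takes values in $\{0,1\}$. Thus the sensitivity is $\Delta = 1/n$.

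Next, I would invoke the known R\'enyi-divergence computation between two univariate Gaussians with shared variance $\sigma^2$ and means differing by some amount $\mu$, namely $\mathbb{D}_\alpha(\mathcal{N}(a,\sigma^2) \,\|\, \mathcal{N}(b,\sigma^2)) = \frac{\alpha (a-b)^2}{2\sigma^2}$. Applying this to $\cM(D) = q_i(D) + Z$ and $\cM(D') = q_i(D') + Z$ with $Z \sim \mathcal{N}(0,\sigma^2)$ yields
\[
\mathbb{D}_\alpha(\cM(D) \,\|\, \cM(D')) = \frac{\alpha (q_i(D) - q_i(D'))^2}{2\sigma^2} \le \frac{\alpha \Delta^2}{2\sigma^2}.
\]
Substituting $\Delta = 1/n$ and the prescribed $\sigma^2 = \frac{1}{2n^2\rho}$, the right-hand side simplifies to $\rho \alpha$, which is exactly the $\rho$-zCDP condition for every $\alpha \in (1,\infty)$ and every pair of neighboring datasets.

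There is no genuine obstacle here: the argument is routine once the sensitivity is computed and the Gaussian R\'enyi-divergence identity is in hand. The only subtlety worth flagging is that neighboring datasets are assumed to have the same size $n$ (the replacement model), which is what makes the $\phi_i(x_j)-\phi_i(x'_j)$ bound tight; under an add/remove notion of neighboring, the normalization denominator changes and a slightly different sensitivity analysis would be required, but the statement as given is consistent with the replacement model implicit throughout the preceding definitions.
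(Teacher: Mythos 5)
Your proof is correct and is precisely the standard argument from the cited source \cite{bun2016concentrated}: the paper itself states this lemma by citation without reproducing a proof, and the intended derivation is exactly your computation of the $1/n$ replacement-sensitivity of a statistical query followed by the Gaussian R\'enyi-divergence identity, with the algebra $\frac{\alpha (1/n)^2}{2\sigma^2} = \rho\alpha$ under $\sigma^2 = \frac{1}{2n^2\rho}$ checking out. Your remark about the replacement (rather than add/remove) neighboring relation is also consistent with the paper's zCDP definition, which fixes datasets differing in precisely one individual's data.
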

The second fundamental mechanism that \rap\ uses is the Gumbel noise Report Noisy Max (\rnm) mechanism.
\begin{definition}
The Report Noisy Max mechanism \rnm$(D, \Delta, \rho)$ takes as input a dataset $D \in \cX^n$, a vector of real values $\Delta$, and a zCDP parameter $\rho$.
It outputs the index of the highest noisy value in $\Delta$; i.e., $i^* = \argmax_i \Delta_i + Z_i$, where each $Z_i \sim \textrm{Gumbel}\left(\frac{1}{\sqrt{2\rho|D|^2}}\right)$.
\end{definition}
\begin{lemma}[\cite{durfee2019practical}]
For any real vector $\Delta$ and $\rho > 0$, the \rnm$(D, \Delta, \rho)$ satisfies $\rho$-zCDP.\\
\end{lemma}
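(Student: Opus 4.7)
The plan is to establish that the Gumbel-noise Report Noisy Max mechanism is a one-shot instantiation of the exponential mechanism, then invoke a tight zCDP guarantee for that mechanism. I would proceed in three stages: bound the sensitivity of the score vector $\Delta$, apply the Gumbel--max trick to rewrite \rnm\ as exponential sampling, and convert the resulting pure-DP guarantee into zCDP with the correct constant.

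First I would bound the $\ell_\infty$-sensitivity of $\Delta$ with respect to neighboring datasets. Writing $\Delta_i = |\hat{q}_i(D) - \hat{q}_i(D')|$, observe that only $\hat{q}_i(D)$ depends on the sensitive dataset while $D'$ is fixed (independent of $D$). Since each statistical query is a normalized count over $|D|$ records, changing one record perturbs $\hat{q}_i(D)$, and hence $\Delta_i$, by at most $1/|D|$. Thus the $\ell_\infty$-sensitivity of the full vector $\Delta$ is at most $1/|D|$.

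Next I would invoke the Gumbel--max trick: if $Z_i$ are i.i.d.\ $\mathrm{Gumbel}(b)$, then $i^* = \argmax_i(\Delta_i + Z_i)$ is distributed as $\Pr[i^* = i] \propto \exp(\Delta_i / b)$. This is precisely the output distribution of the exponential mechanism with utility $u_i = \Delta_i$ and privacy parameter $\epsilon = 2/(b\,|D|)$, using the sensitivity bound from the previous step. Plugging in the prescribed scale $b = 1/\sqrt{2\rho\,|D|^2}$ gives $\epsilon = 2\sqrt{2\rho}$, so \rnm\ is $(2\sqrt{2\rho})$-differentially private in the pure-DP sense.

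The main obstacle is the last step: converting this pure-DP guarantee into $\rho$-zCDP with the correct constant. Applying the generic conversion ``$\epsilon$-DP $\Rightarrow$ $(\epsilon^2/2)$-zCDP'' from Bun--Steinke would only yield $4\rho$-zCDP, losing a factor of four. To recover the tight bound, I would instead invoke the sharper analysis of the exponential mechanism as an $\epsilon$-bounded-range mechanism (as in Durfee--Rogers and subsequent refinements by Cesar--Rogers), under which the exponential mechanism satisfies $(\epsilon^2/8)$-zCDP. Substituting $\epsilon = 2\sqrt{2\rho}$ gives $(2\sqrt{2\rho})^2 / 8 = \rho$, proving the claimed $\rho$-zCDP guarantee. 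Alternatively, one could forgo the reduction to the exponential mechanism and compute the Rényi divergence between the output distributions on neighboring datasets directly using the closed-form density of the Gumbel argmax; I expect this direct route to be more tedious but to yield the same constant, which is the true pinch point of the proof.
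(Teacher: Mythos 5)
The paper does not prove this lemma at all --- it is imported verbatim from Durfee--Rogers \cite{durfee2019practical} --- so there is no in-paper argument to compare against; your write-up is a correct reconstruction of exactly the argument that citation relies on (Gumbel--max trick $\Rightarrow$ exponential mechanism with $\epsilon = 2\sqrt{2\rho}$, then the bounded-range refinement of \cite{cesar2021bounding} giving $\epsilon^2/8 = \rho$-zCDP), and your sensitivity bound of $1/|D|$ and all constants check out. Your observation that the generic $\epsilon$-DP $\Rightarrow (\epsilon^2/2)$-zCDP conversion would lose a factor of $4$ correctly identifies why the bounded-range analysis is the essential ingredient rather than an optional refinement.
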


With these fundamental mechanisms and their zCDP guarantees defined, we are now able to formally reproduce Aydore et al.'s original theorem and proof of \rap's DP guarantee.
\begin{theorem}[\cite{aydore2021differentially}]
For any class of queries and surrogate queries $Q$ and $\hat{Q}$, and for any set of parameters $n'$, $T$, and $K$, the \rap\ mechanism satisfies $(\epsilon, \delta)$-DP.
\end{theorem}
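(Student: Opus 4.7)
The plan is to prove the claim by a standard zCDP accounting argument: I will analyze each primitive invocation's zCDP cost, apply zCDP composition across the adaptive rounds, invoke post-processing for the \rp\ step, and finally apply the zCDP-to-DP conversion with the specific value of $\rho$ chosen in line 1 of Algorithm~\ref{alg:rap}. The key structural observations that make this tractable are that (i) \rp\ only touches the already-noised answers $\tilde{a}$ and the randomly initialized $D'$, so by post-processing it contributes no additional privacy loss, and (ii) the only places the sensitive dataset $D$ is touched are inside the \gm\ and \rnm\ calls, whose per-call zCDP costs are already stated as lemmas in the excerpt.

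First I would handle the non-adaptive branch $T=1,\ K=m$. Here the mechanism makes $m$ sequential calls to $\gm(D,q_i,\rho/m)$, each of which is $(\rho/m)$-zCDP, so by basic composition the entire loop is $\rho$-zCDP. The subsequent \rp\ step takes only the noised $\tilde{a}$, the public $\hat{Q}$, and the data-independent initial $D'$ as inputs, hence it is post-processing of a $\rho$-zCDP output and preserves the $\rho$-zCDP guarantee.

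Next I would handle the adaptive branch. A single invocation of $\as(\cdot,\rho/T)$ runs $K$ iterations, each of which performs one $\rnm$ call with parameter $\rho/(2KT)$ and one $\gm$ call with parameter $\rho/(2KT)$ (interpreting $\rho$ inside Algorithm~\ref{alg:as} as the budget $\rho/T$ passed in). By the zCDP guarantees stated for \rnm\ and \gm\ and by composition, each inner iteration costs $\rho/(KT)$-zCDP, and the $K$ iterations together cost $\rho/T$-zCDP, so each \as\ call is $(\rho/T)$-zCDP. The $T$ rounds of the outer loop then compose to $T \cdot (\rho/T) = \rho$-zCDP. Each \rp\ step in the loop takes as input only the selected surrogate queries $\hat{Q}_s$ and the privatized answers $\tilde{a}$ produced by \as, so it is again post-processing and carries no additional cost. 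Thus in either branch the entire mechanism is $\rho$-zCDP.

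Finally I would convert this $\rho$-zCDP guarantee to $(\epsilon,\delta)$-DP using the conversion stated in the preliminaries, namely that $\rho$-zCDP implies $(\rho + 2\sqrt{\rho \log(1/\delta)},\delta)$-DP. It suffices to verify that the value $\rho = \epsilon + 2\bigl(\log(1/\delta) - \sqrt{\log(1/\delta)(\epsilon+\log(1/\delta))}\bigr)$ chosen in line 1 solves $\epsilon = \rho + 2\sqrt{\rho \log(1/\delta)}$ for $\rho$. Writing $L = \log(1/\delta)$, squaring $\epsilon - \rho = 2\sqrt{\rho L}$ yields the quadratic $\rho^2 - 2(\epsilon + 2L)\rho + \epsilon^2 = 0$, whose smaller root is exactly the stated expression; I would take a sentence to note that this is the admissible root (the larger would violate $\rho \le \epsilon$). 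The main ``obstacle'' is really just bookkeeping — making sure the nested budgets $\rho/T$ and $\rho/(2K)$ compose correctly inside \as\ and then across the $T$ rounds, and verifying that every data access is routed through one of the two accounted primitives so that post-processing can be invoked everywhere else.
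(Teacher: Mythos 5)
Your proposal is correct and follows essentially the same route as the paper's proof: a case split on the non-adaptive versus adaptive branches, zCDP composition over the \gm\ and \rnm\ invocations (with the $\rho/T$ and $\rho/(2K)$ budget bookkeeping inside \as), and conversion back to $(\epsilon,\delta)$-DP via the choice of $\rho$ in line 1. The only differences are that you make explicit two points the paper leaves implicit --- that \rp\ is post-processing, and the verification that the stated $\rho$ is the admissible root of $\epsilon = \rho + 2\sqrt{\rho\log(1/\delta)}$ --- both of which are correct and harmless additions.
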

\begin{proof}
First, consider the non-adaptive case where $T=1, K=m$.
Here, the sensitive dataset $D$ is only accessed via $m$ invocations of the Gaussian mechanism, each with privacy $\rho/m$.
Therefore, by the composition property of zCDP, \rap\ satisfies $\rho$-zCDP.
Thus, by our choice of $\rho$ in line 1, we conclude that \rap\ satisfies $(\epsilon, \delta)$-DP.

Next, assume $T > 1$.
\rap\ executes $T$ iterations of its loop, only accessing the sensitive dataset $D$ via the Adaptive Selection (\as) mechanism each iteration.
Thus, we seek to prove that the \as\ mechanism satisfies $\rho/T$-zCDP.
Each invocation of the \as\ mechanism receives as input the privacy parameter $\rho'=\rho/T$, and accesses the sensitive dataset via $K$ invocations of \rnm\ and $K$ invocations of \gm.
Each invocation of either mechanism ensures $\frac{\rho'}{2K}$-zCDP, and therefore by the composition property of zCDP, the total $2K$ mechanism invocations ensure $\rho'$-zCDP.
Thus, the \as\ mechanism satisfies $\rho/T$-zCDP.
Leveraging zCDP's composition property again, because \rap\ invokes \as\ $T$ times, \rap\ therefore satisfies $\rho$-zCDP.
Finally, by our choice of $\rho$ in line 1, we conclude that \rap\ satisfies $(\epsilon, \delta)$-DP.
\end{proof}

\section{Enhancing \rap's Evaluation} \label{sec:improving-evaluation}
In this section, we address our first two contributions in the setting where all queries are prespecified: we strengthen and clarify our understanding of \rap's utility by performing a thorough reproducibility study on two important aspects of Aydore et al.'s evaluation of \rap.
These two aspects are:
\begin{enumerate}
\item The benefit of \rap's adaptive component relative to its non-adaptive component was unclear in its initial evaluation. We conclusively determine and quantify this component's utility benefit, finding that it is crucial for enabling \rap\ to achieve high utility.
\item \rap\ was initially only evaluated on highly reduced portions of the query space. We instead evaluate \rap's utility across the entire query space, answering up to 50x more queries than in its initial evaluation.
\end{enumerate}
The first aspect is significant because it improves our understanding of how \rap's adaptivity parameters affect its utility and establishes whether \rap's adaptive component is necessary in order to achieve high utility.
The second aspect is important because \rap's initial evaluation on highly reduced portions of the query space yielded potentially biased utility results.
By instead evaluating \rap\ across the entire query space, we establish \rap's unbiased utility and determine what impact reducing the query space has on \rap's utility.
In order to evaluate both aspects, we must reimplement \rap\ from the ground up in order to improve its efficiency for evaluating large sets of prespecified queries.
We then use the new implementation to evaluate both aspects, clarifying the value of the \rap\ mechanism and thus improving its adoptability for practical uses.

To make the description of our improved evaluation precise, in Section~\ref{sec:present-err} we define the utility metric used by Aydore el al.\ and by the prior state-of-the-art mechanisms for answering prespecified queries, which we also use in our evaluations.
We then discuss in Section~\ref{sec:reevaluation-focus} the details and implications of the two aspects of Aydore et al.'s initial evaluation of \rap\ that we are improving upon.
In Section~\ref{sec:reimplementing}, we detail the particular obstacle in \rap's initial implementation which prevents its use for our improved evaluation.
To overcome this obstacle, we reimplement \rap\ from the ground up and make its implementation publicly available\footnote{\href{https://github.com/bavent/large-scale-query-answering}{{https://github.com/bavent/large-scale-query-answering}}.}.
Finally, in Section~\ref{sec:reevaluation-experiments}, we describe how we use our improved implementation to perform our enhanced evaluation of \rap.

With regards to the role of adaptivity in \rap, we not only find that it is crucial to achieving high-utility, we also quantitatively and definitively measure how \rap's adaptivity parameters ($T$ and $K$) affect its utility.
This motivates new, more efficient search strategies to find optimal $T$ and $K$ values, thus reducing \rap's computational burden and privacy cost in practice.
With regards to evaluating \rap\ on the full query space, we find that Aydore et al.'s initial evaluation of \rap\ on a reduced portion of the query space likely \textit{underestimated} \rap's utility.
This was due to their reduced query space having less ``sparsity'' in the query answers (i.e., a larger portion of the queries they evaluated had non-0 answers).
This finding motivates a new line of research on mechanisms for the separate cases of when query answers are and are not sparse.
Together, the improved \rap\ implementation combined with the enhanced evaluation clarifies the value of the \rap\ mechanism, and thus improves \rap's adoptability and usability in practice.

\subsection{Measuring Utility of Prespecified Queries} \label{sec:present-err}
We define the concrete utility measure used in prior works to evaluate DP mechanisms that answer prespecified sets of statistical queries.
Prior works in this setting measured the utility of DP mechanisms in terms of a mechanism's maximum error over the answers to all queries in the prespecified query set~\cite{mckenna2019graphical, vietri2020new, liu2021iterative, aydore2021differentially}.
We refer to this measure of utility as \textit{present utility}, since it is the error on the set of presently available queries, and measure it in terms of the negative of \textit{present error}; i.e., a mechanism with low present error has high present utility, and vice versa.
This error measure is formally defined as follows.
\begin{definition}[Present error] \label{def:present-err}
Let $a = Q(D) = (a_1,\dots,a_m)$ be the true answers to a given query vector $Q$ on dataset $D$, and let $\tilde{a} = (\tilde{a}_1,\dots,\tilde{a}_m)$ be mechanism $M$'s corresponding answers to the query vector.
Then $\err_P$ is the present error of the mechanism, defined as $\err_P(M,D,Q) = \E_{M(D)} \Vert a - \tilde{a}\Vert_\infty$, where the expectation is over the randomness of the mechanism.
\end{definition}

We choose the $\ell_\infty$ norm as the base metric for present error because of its use in Aydore et al.'s evaluation of \rap\ and because it is the most popular norm utilized in the most closely related literature~\cite{mckenna2019graphical, vietri2020new, liu2021iterative, aydore2021differentially}.
However, other norms (e.g., $\ell_1$ and $\ell_2$) and even definitions of error may be equally valid in the prespecified queries setting depending on the practical use case~\cite{tao2021benchmarking}.
Thus, although we do not empirically evaluate \rap\ on such alternative definitions, investigating how the findings in this work change based on the error definition is an excellent direction for future work.

\subsection{Focus of \rap's Reevaluation} \label{sec:reevaluation-focus}
We now detail the two primary aspects of Aydore et al.'s evaluation of \rap\ that we enhance in this work, and how their origins trace back to a particular challenge in \rap's initial implementation.

\paragraph{Adaptivity Evaluation:}
The first aspect that we address in \rap's reevaluation is how \rap's adaptive component affects its utility.
To provide context, we briefly describe the non-adaptive form of \rap.
We then describe the adaptive form of \rap\ and the motivation behind its design.
Finally, we detail how Aydore et al.'s evaluation of \rap\ omitted studying the adaptive component's effect on utility, and we describe why that is an issue.

In its non-adaptive form, the \rap\ mechanism essentially reduces to privately answering the full query vector $Q$ with the Gaussian Mechanism, then applying the \rp\ mechanism to generate a synthetic dataset.
This non-adaptive form of the \rap\ mechanism is a novel reimagining of the classic \textit{Projection Mechanism}~\cite{nikolov2013geometry}, a near-optimal but computationally intractable mechanism for answering prespecified queries.
By leveraging a relaxation of the query space and utilizing EEDQs, Aydore et al.\ describe how their non-adaptive \rap\ mechanism can use modern tools (e.g., GPU-accelerated optimization) to efficiently generate a relaxed synthetic dataset which can hypothetically answer the prespecified queries with low (albeit non-optimal) error.
Moreover, they prove a theoretical result (Theorem 4.1,~\cite{aydore2021differentially}) which confirms the power of the non-adaptive \rap\ mechanism, achieving a $\sqrt{d'}$ factor of utility improvement over the prior state-of-the-art mechanism.

Aydore et al.\ go on to describe the full adaptive form of \rap\ parameterized by $T$ and $K$.
This adaptive form of \rap\ optimizes the synthetic dataset iteratively over $T$ separate rounds, in each round adaptively selecting $K$ new queries to incorporate into the optimization procedure.
Their stated motivation for introducing adaptivity into \rap\ was to more wisely expend the privacy budget by adaptively optimizing over a small number of ``hard'' queries, and they conjecture (without a result similar to that of their Theorem 4.1) that such adaptivity will result in higher utility than that achieved by the non-adaptive form of \rap.

Aydore et al.\ then perform an empirical evaluation of \rap\ across a range of parameters and datasets, and establish that it achieves state-of-the-art utility --- however, the utility benefits of \rap's adaptivity are left unanalyzed.
Specifically, in all evaluations they report the best utility of \rap\ across $2 \le T \le 50$ and $5 \le K \le 100$.
There are two issues related to this.
\begin{enumerate}
\item The values of $T$ and $K$ that achieved the maximum utility are not reported, only what that maximum utility was. Thus, it is unclear how these parameters affect utility. This is problematic in practice because not only is evaluating \rap\ on multiple choices of $T$ and $K$ computationally expensive, but because each evaluation consumes a portion of the overall differential privacy budget.
\item The non-adaptive form of \rap\ is not empirically evaluated. Without evaluating the non-adaptive \rap\ mechanism as a baseline, there is no meaningful way to understand or measure the benefit of adaptivity.
\end{enumerate}
Combined, these two issues leave open the question of how valuable the adaptive component of \rap\ is, and to what extent its adaptivity affects utility.

\paragraph{Query Space Evaluation:}
The second aspect that we address in \rap's reevaluation is how reducing the query space affects \rap's utility for answering $k$-way marginals.
To begin, we describe the motivation behind evaluating this aspect: that for computational ease, Aydore et al.\ only evaluated \rap\ on a reduced portion of the query space.
We then detail how this reduction may have biased their evaluation's results.

Aydore et al.'s empirical evaluation focuses on \rap's utility for answering $k$-way marginals, specifically 3-way and 5-way marginals.
Reviewing the code of their published \rap\ implementation, we determined that a heuristic filtering criterion of the query space was being applied to remove any ``large'' marginals from possible evaluation.
Specifically, any marginal which had more consistent queries than the number of records in the dataset ($n$) was not considered for evaluation.
The impact that filtering had on the evaluated workloads varied depending on $k$ and $n$.
For instance, with 3-way marginals on the ADULT dataset, the filtering criterion removed the top $24\%$ largest 3-way marginals which accounted for over $90\%$ of all consistent queries.
With 5-way marginals on the ADULT dataset, this filtering criterion removed the top $92\%$ largest 5-way  marginals which accounted for over $99.99\%$ of all consistent queries.

Discussing this discrepancy directly with the authors\footnote{\href{https://github.com/amazon-research/relaxed-adaptive-projection/issues/2}{https://github.com/amazon-research/relaxed-adaptive-projection/issues/2}} revealed that the filtering criterion was an intentional choice meant to reduce the computational burden during experimentation, and they conjectured that removing this criterion and rerunning all experiments would yield results comparable to those obtained by increasing the workload size.
Since all baseline mechanisms were evaluated on the same query vectors, the filtering criterion does not result in favorable utility for \rap\ relative to the prior state-of-the-art mechanisms that serve as their baselines.
However, for marginals with a significantly larger number of consistent queries than $n$, most queries will evaluate to 0 by a Pigeonhole principle argument.
Thus, the filtering criterion may result in favorable utility for \rap\ relative to the naive baseline mechanism that they consider in their work: \bma, the mechanism which outputs 0 as the answer to every query.
This leaves open the question of \rap's utility on large, unfiltered query spaces, both in absolute terms and relative to the baseline \bma\ mechanism.

\subsection{Reimplementing \rap} \label{sec:reimplementing}
We now describe why these two aspects cannot be evaluated using Aydore et al.'s initial \rap\ implementation: briefly, the amount of memory required by the implementation is inordinate.
We then detail how we overcome this challenge by reimplementing \rap\ in a way that trades-off a significant amount of memory usage for a potential increase in runtime.

Conceptually, both aspects could be evaluated using Aydore et al.'s published code.
However, evaluating either the non-adaptive form of \rap\ or evaluating a larger portion of the query space both lead to the same obstacle: Aydore et al.'s \rap\ implementation requires an inordinate amount of memory to answer the corresponding large number of queries.
We have identified several portions of their code where this memory bottleneck occurs, all of which fail to execute either when the total number of consistent queries is ``too large'' or when any marginal has ``too many'' consistent queries.
Consequently, Aydore et al.\ were unable to evaluate either the non-adaptive form of \rap\ or a significant portion of the $k$-way marginals' consistent query space.

The high-level idea behind our approach for overcoming this implementation challenge is to trade-off some of \rap's required memory for a potential increase in its runtime.
Our motivation for this approach is inspired by recent advances in differentially private deep learning literature.
In particular, the canonical DP-SGD mechanism~\cite{abadi2016deep} for training machine learning models with differential privacy had been plagued by poor computational performance due to several of its underlying operations (e.g., per-example gradient clipping, uniformly random batch sampling without replacement, etc.) not being natively supported by modern machine learning frameworks.
More recently however, several highly performant DP-SGD implementations~\cite{papernot2019machine, opacus, subramani2021enabling} have been deployed which dramatically decrease the mechanism's runtime in exchange for a mild increase in its memory usage.
To our knowledge, our high-level approach is the first in DP literature to make practical use of this trade-off in the opposite direction: decreasing the mechanism's memory requirement by increasing its runtime.

Concretely, we overcome this implementation challenge by reimplementing \rap\ via the following high-level steps.
First, we reduce the maximal memory requirement in \rap's original implementation caused by the original implementation's implicit evaluation all marginals (or, more generally, all thresholds) in parallel.
We accomplish this by evaluating each marginal (or threshold) sequentially in order to distribute the computational burden.
To further reduce the overall memory requirement, rather than explicitly enumerating and storing every query consistent with each marginal (threshold), we represent the queries implicitly and only convert a query to its explicit representation when it is needed for evaluation.
To evaluate arbitrary sets of such individual queries, we implement the core EEDQ evaluation function from the ground up by designing a simple, direct function to efficiently evaluate arbitrary predicates.
With such a function implemented, we then leverage a combination of powerful language features --- namely vectorizing maps and just-in-time compilation in JAX~\cite{jax2018github} --- to enable efficient evaluation, summation, and differentiation of large sets of predicates without exceeding memory constraints.

In addition to these implementation improvements which primarily serve to reduce \rap's memory requirement, we additionally incorporate an algorithmic improvement based on recent theoretical findings to help offset the increased runtime from our aforementioned deparallelization step.
Specifically, by trivially adapting the \textit{Oneshot Top-$K$ Selection with Gumbel Noise} mechanism~\cite{durfee2019practical, cesar2021bounding} to our setting, we replace \rap's iterative Adaptive Selection (\as) mechanism with the more efficient Oneshot Adaptive Selection (\osas) mechanism in Alg.~\ref{alg:osas}.
The results of \cite{durfee2019practical} prove that the \osas\ mechanism is probabilistically equivalent to \as\ (i.e., both mechanisms have identical output distributions, and thus achieve identical privacy and utility), but \osas\ requires only 1 pass over a set of values in order to select the top-$K$ instead of the $K$ passes that \as\ requires.

\begin{algorithm}
\caption{Oneshot Adaptive Selection (\osas) Mechanism}
     \vspace{0.4em} \hspace*{\algorithmicindent} \textbf{Input} \vspace*{-0.6em}
     \begin{itemize}[leftmargin=1.5em] \setlength\itemsep{-0.2em}
     	\item $D, D'$: The dataset and synthetic dataset.
     	\item $Q, \hat{Q}$: A vector of all statistical queries and their corresponding surrogate queries.
     	\item $Q_s$: A set of already selected queries.
     	\item $K$: The number of new queries to select $K$.
     	\item $\rho$: Differential privacy parameter.
     \end{itemize}
     \hspace*{\algorithmicindent} \textbf{Body}
\begin{algorithmic}[1]
   	\STATE Let $\Delta = (|\hat{q}_{i}(D) - \hat{q}_{i}(D')| : q_i \in Q \setminus Q_s)$.
   	\STATE Let $I$ denote the indices of the top-$K$ values of: $\Delta_i + Z_i$, where $Z_i \overset{\text{iid}}{\sim} \texttt{Gumbel}\left(\sqrt{\frac{K}{2\rho |D|^2}}\right)$.
   	\STATE Let $\tilde{a}_i = \gm(D, q_i, \frac{\rho}{2K}) \quad \forall i \in I$.
	\STATE Let $Q_s = Q_s \cup \{q_i\}_{i \in I}$.
    \STATE {\bfseries Return:} $Q_s$ and $\tilde{a} = (\tilde{a}_i: q_i \in Q_s)$.
\end{algorithmic}
\label{alg:osas}
\end{algorithm}

Figure~\ref{fig:rap-runtime} compares our new implementation to Aydore et al.'s original implementation without filtering out any large marginals.
Specifically, this figure shows the runtimes of both implementations executing the non-adaptive and adaptive variants of \rap\ given the same amount of GPU memory on two datasets across a range of workload sizes\footnote{The runtimes for both implementations (and all subsequent evaluations in this work) were executed on an Nvidia RTX 3090 consumer GPU with 24 GB VRAM.}.
We find that for the non-adaptive variant of \rap, the original implementation was only able to evaluate tiny workloads, while our new reimplementation was able to evaluate massive workloads (albeit, with a very high runtime); this represents a 500x improvement in memory efficiency for our reimplementation.
For the adaptive variant of \rap\ (specifically, with $T$=16 and $K$=4), we find the our reimplementation's runtime is comparable to the original implementation's --- outperforming it slightly on one dataset, while being outperformed slightly on the other.
On the ADULT dataset, both implementations were able to exhaustively evaluate the complete space of marginals.
On the LOANS dataset, the original implementation was able to consistently evaluate marginal workloads of size 256, but was unable to consistently evaluate the largest workload size of 1024; this represents up to a 4x improvement in memory efficiency for our reimplementation.

\begin{figure}
\centering
\begin{tabular}{c|c}
\hspace*{-0.5cm}
  \includegraphics[width=.4\linewidth]{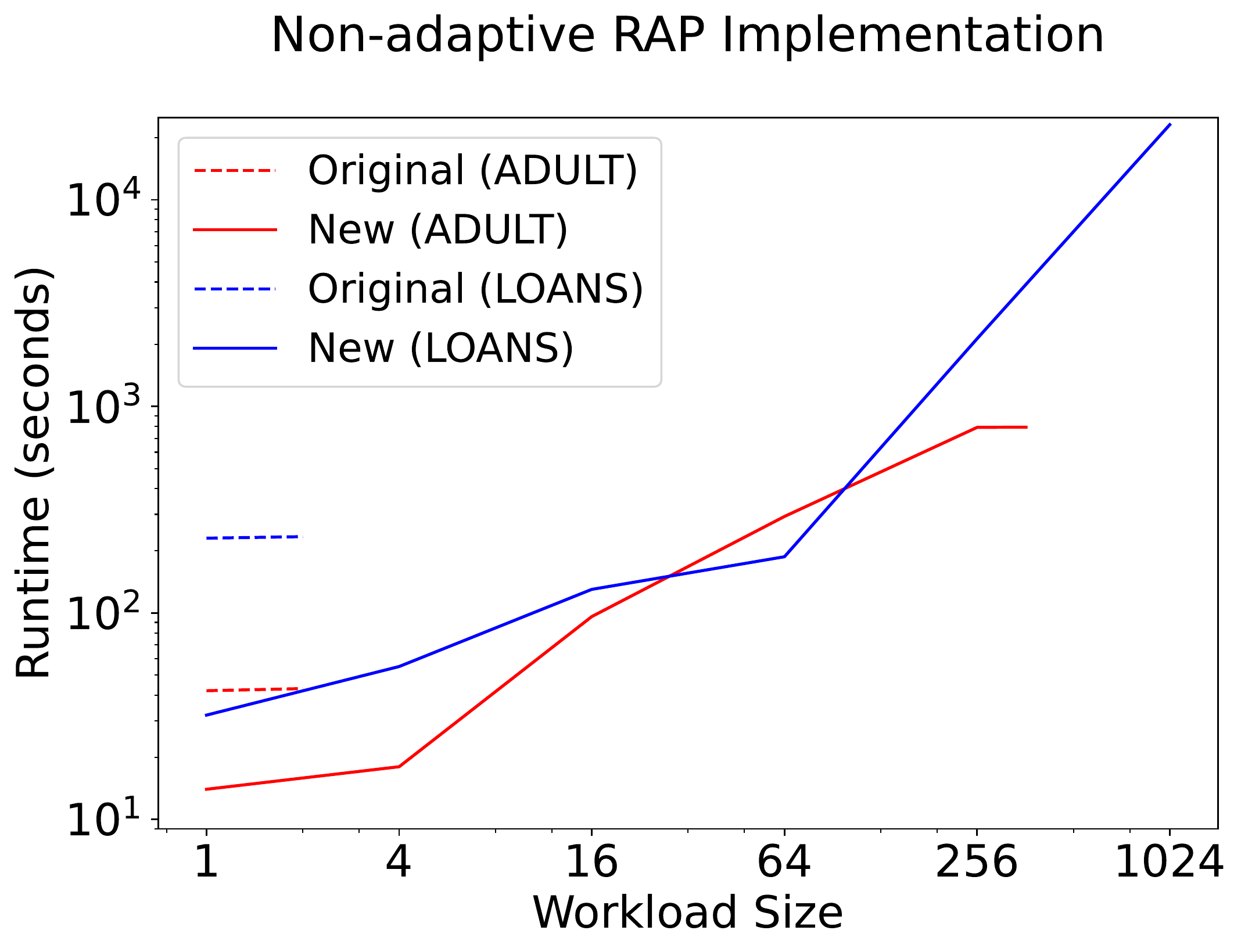} & \includegraphics[width=.4\linewidth]{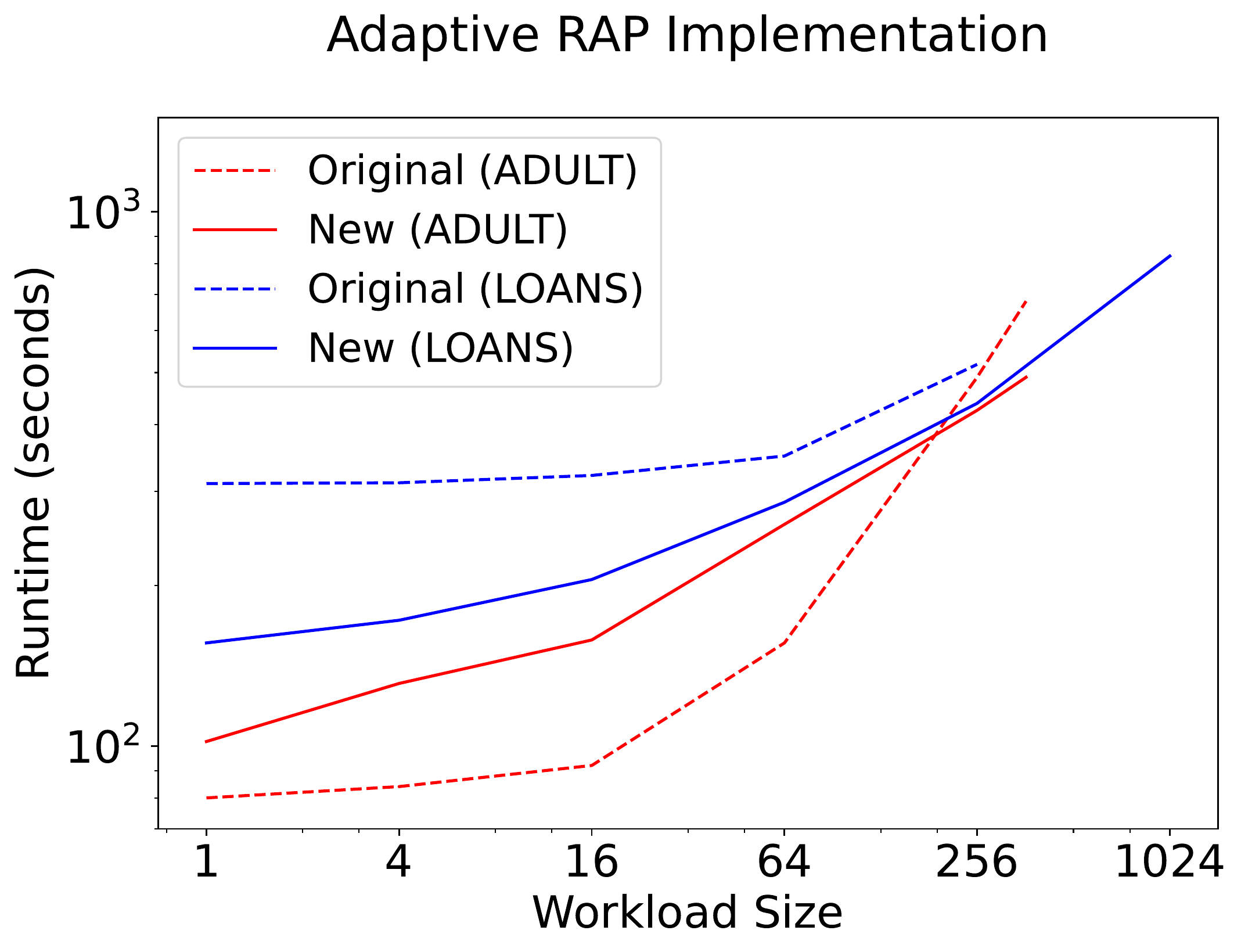}
\end{tabular}
\caption{Runtime evaluations of non-adaptive and adaptive \rap\ variants on the original implementation and reimplementation, on both ADULT and LOANS datasets.}
\label{fig:rap-runtime}
\end{figure}

\subsection{Reevaluating \rap} \label{sec:reevaluation-experiments}
Using our new implementation, we reevaluate both the adaptivity and query space aspects of \rap, enabling new findings.
We start by simply establishing \rap's present utility for answering $k$-way marginals on unbiased random samples of the full marginal space (i.e., without filtering out any ``large'' marginals).
This results in \rap\ answering approximately 50x more queries at its peak than in Aydore et al.'s initial evaluation on filtered marginals.
We then use these results to analyze the role that adaptivity plays in \rap's utility.
Finally, we address the question of whether filtering the large marginals out of \rap's evaluation significantly impacts its utility in order to determine if the filtering criterion is a reasonable heuristic to apply to reduce \rap's computational burden in future evaluations.
This improved implementation and reevaluation, taken together, conclusively demonstrates that \rap\ is a feasible and valuable mechanism for practical, real-world use cases.
Furthermore, in conjunction with our improved implementation, our findings enable new capabilities such as more efficient search strategies for optimal $T$ and $K$ parameters.

\subsubsection*{Evaluation Datasets} \label{sec:rap-datasets}
As in prior works on evaluating DP mechanisms that answer statistical queries \cite{aydore2021differentially, vietri2020new, mckenna2019graphical}, all empirical evaluations use the ADULT~\cite{frank2010uci} and LOANS~\cite{vietri2020new} datasets with the same preprocessing.
Table~\ref{tab:rap-datasets} contains a high level description of each dataset.

\begin{table}
    \centering
    {\begin{tabular}{c|c|c|c}
    \bf{Dataset} & \bf{Records} & \bf{Features} & \bf{Binarized Features} \\
     \hline
     ADULT &  48,842 &  14 & 588 \\
     LOANS &  42,535 &  48 & 4,427 \\
    \end{tabular}}
\caption{Datasets for empirical evaluations. Binarized features represent the features after a transformation via one-hot encoding.}
\label{tab:rap-datasets}
\end{table}

\subsubsection{$k$-way Marginal Evaluation of \rap}
To begin \rap's reevaluation, we concretely establish its utility on a larger portion of the query space than previously considered by Aydore et al.
Specifically, we evaluate \rap's present error for answering uniformly random workloads of 3-way marginals across a range of parameters on both the ADULT and LOANS datasets, and we do so \textit{without any} thresholding criterion to filter out ``large'' marginals.
This results in \rap\ answering approximately 50x as many queries as in its original evaluation by Aydore et al.
Table~\ref{tab:evaluation-experiments} provides a reference for the parameter ranges in this experiment.
For each setting of parameters, we evaluate the adaptive variant of \rap\ across a range of $T$ and $K$ values and report the combinations that achieve minimal present error.
We separately evalaute the non-adaptive ($T=1, K=m$) variant of \rap\ across the same range of parameters in order answer the question of whether or not there is any benefit to \rap's adaptivity.
Additionally, as baselines, we evaluate the present utility of the \bma\ and \gm\ mechanisms, enabling us to put the utility of \rap\ into context.
The results of this experiment are visualized in Figure~\ref{fig:reevaluating-lines}.

\begin{table}
\centering
\begin{tabular}{ |c|c| } 
 \hline
 Primary Mechanism & \rap \\
 \hline
 Baseline Mechanisms & $\bma, \gm$ \\
 \hline
 Utility Measure & $\err_P$ \\
 \hline
 $D$ & ADULT, LOANS \\
 \hline
 $\epsilon$ & $0.01, 0.1, 1$ \\
 \hline
 $\delta$ & $1/|D|^{2}$ \\
 \hline
 $|W|$ & $1, 4, 16, 64, 256$ \\
 \hline
 $n^\prime$ & $10^3$ \\
 \hline
 $T$ & $1, 4, 16, 64$ \\
 \hline
 $K$ & $4, 16, 64, 256, m$ \\ 
 \hline
 $k$ & $3$ \\
 \hline
\end{tabular}
\caption{Experimental reference table for reevaluating \rap's utility on $k$-way marginals.}
\label{tab:evaluation-experiments}
\end{table}

\begin{figure}
\centering
\begin{tabular}{c|c}
\hspace*{-0.5cm}
  \includegraphics[width=.4\linewidth]{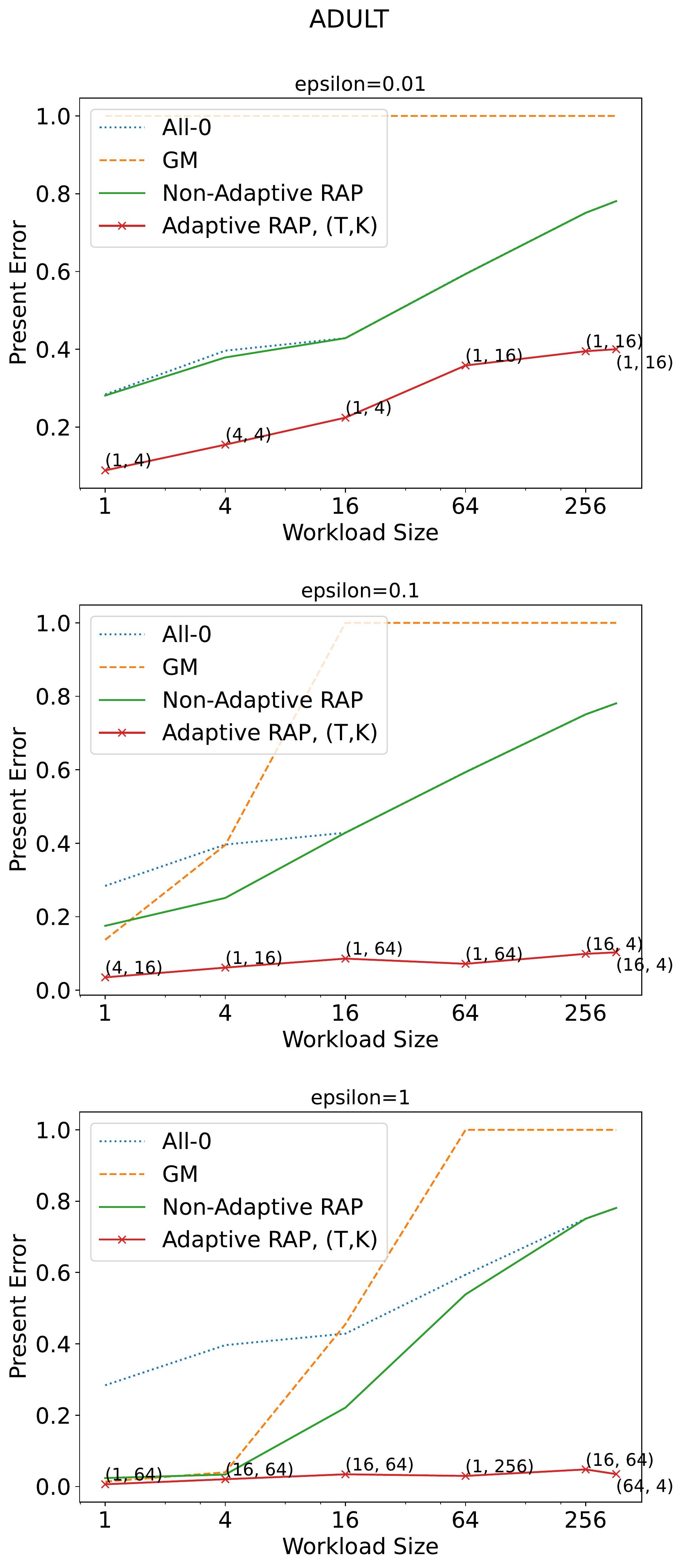} & \includegraphics[width=.4\linewidth]{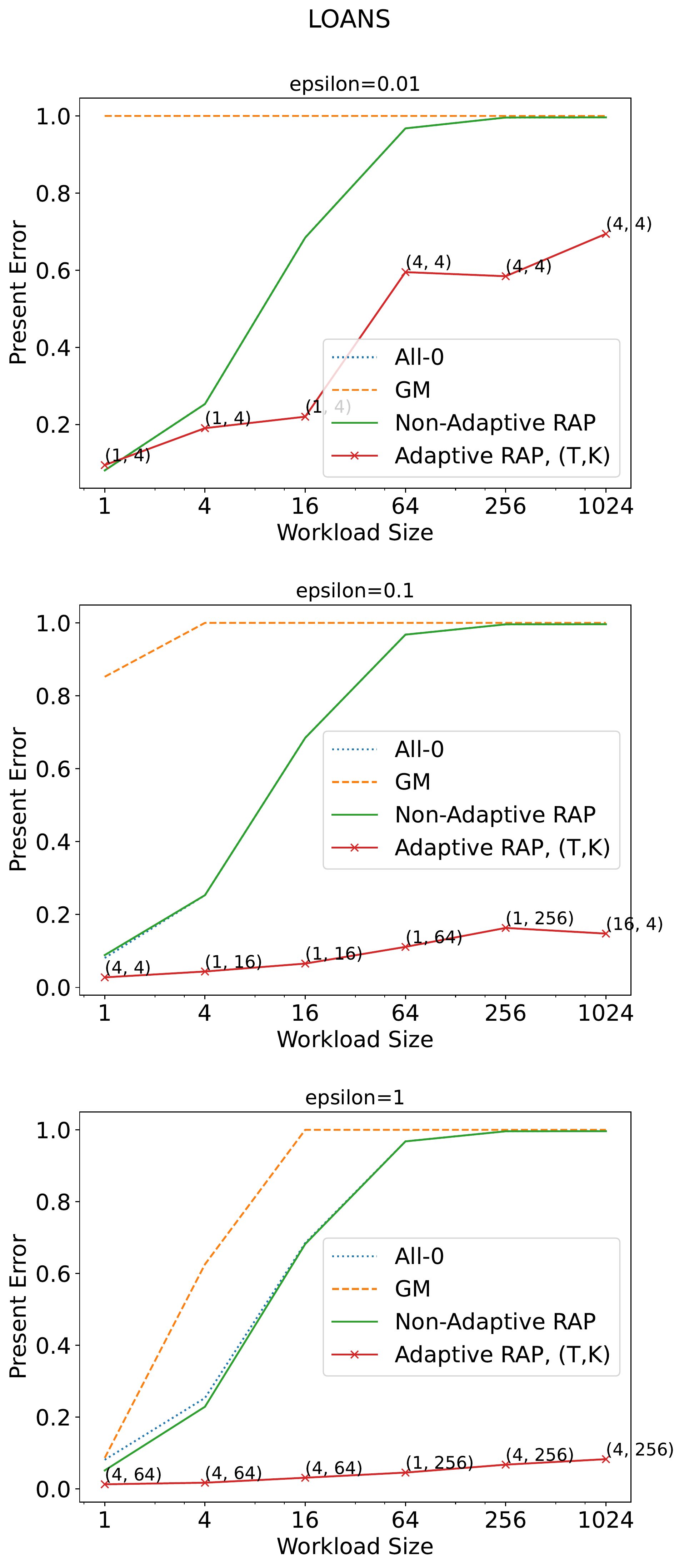}
\end{tabular}
\caption{Present error across a range of parameters and datasets for the adaptive and non-adaptive variants of \rap, the \gm\ baseline, and the \bma\ baseline. Present error for the adaptive variant of \rap\ is computed as the minimal error across the range of $T$ and $K$ values (with the specific $(T,K)$ pair that achieved the minima reported at each point).}
\label{fig:reevaluating-lines}
\end{figure}

There are several immediate conclusions that can be drawn from these results.
The first is that while the non-adaptive variant of \rap\ achieves lower error than the \gm\ baseline, its utility is nearly identical to the \bma\ baseline for all but the smallest workload sizes.
This result likely stems from the fact that the answers to the large majority of a marginal's consistent queries are 0 or nearly 0, with only a small percentage of answers having larger values.
Since the non-adaptive variant of \rap\ first privatizes the answers to all queries, in the synthetic dataset optimization procedure it is likely unable to distinguish between the few answers that are truly larger than 0 vs.\ the outliers that are only large due to random chance.
The second conclusion is that the adaptive variant of \rap\ achieves significantly lower present error than the non-adaptive \rap\ variant as well as the baselines.
This implies that \rap's adaptivity is critical for achieving low error, and thus warrants a more thorough investigation into $T$ and $K$'s precise impact on utility.

\subsubsection{Role of Adaptivity} \label{sec:reevaluating-role-of-adaptivity}
In this next experiment, we seek to understand the precise impact that $T$ and $K$ have on \rap's utility.
From Figure~\ref{fig:reevaluating-lines}, we are only able to glean that \rap\ typically achieves minimal error via smaller values of $T$ in conjunction with relatively larger values of $K$.
However, these values of $T$ and $K$ vary dramatically across parameter settings and datasets.
Moreover, Figure~\ref{fig:reevaluating-lines} provides no information about \rap's utility for $T$ and $K$ combinations that did not achieve minimal error.
To better understand the role these parameters play in \rap's utility, we examine the present error of the adaptive variant of \rap\ for every $(T,K)$ pair across the same parameter settings from Table~\ref{tab:evaluation-experiments}.
The results of this experiment are shown in Figures~\ref{fig:reevaluating-adaptivity-ws} and \ref{fig:reevaluating-adaptivity-eps}.

\begin{figure}
\centering
\begin{tabular}{c}
\hspace*{-0.65cm}
  \includegraphics[width=1.05\linewidth]{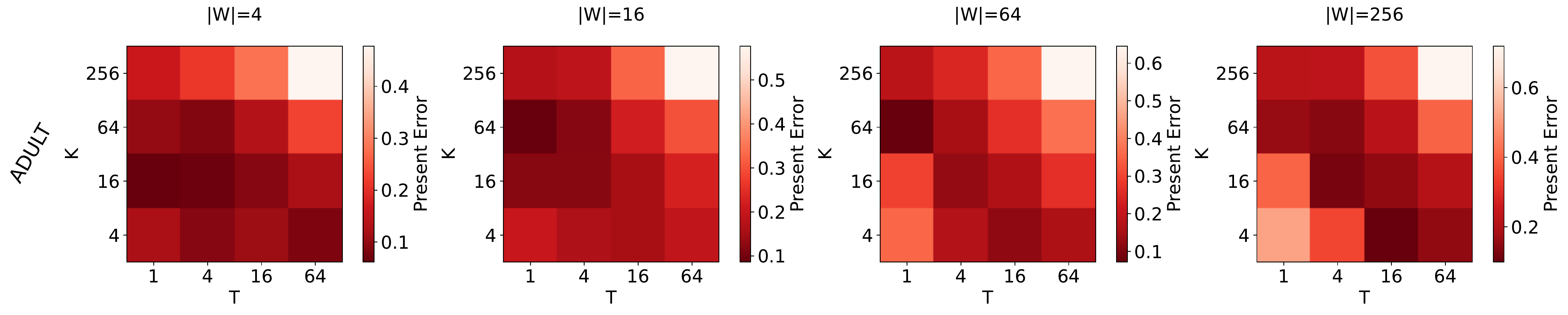} \\
\hline
\hspace*{-0.65cm}
  \includegraphics[width=1.05\linewidth]{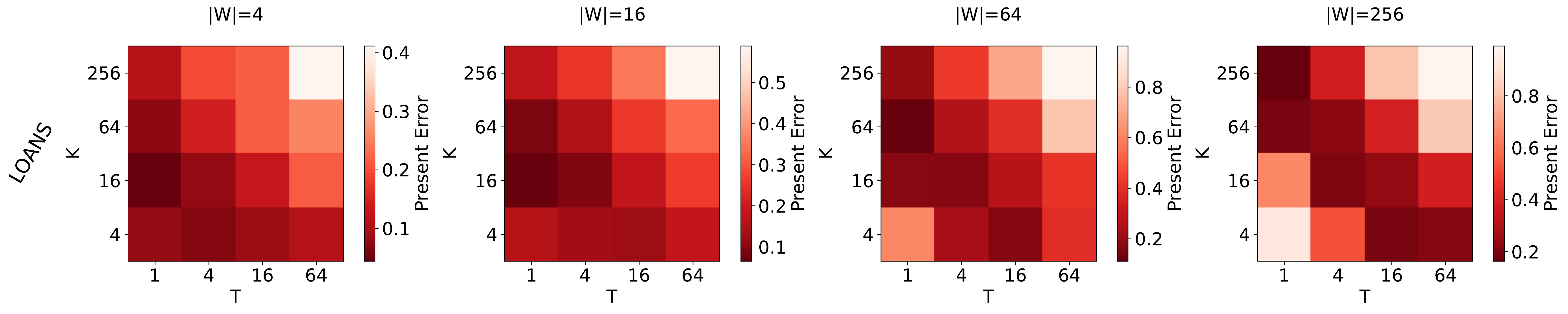}
\end{tabular}
\caption{Present error across a range of workload sizes with $\epsilon = 0.1$ for the adaptive variant of \rap\ at every combination of $T$ and $K$ value considered.}
\label{fig:reevaluating-adaptivity-ws}
\end{figure}

\begin{figure}
\centering
\begin{tabular}{c}
\hspace*{-0.65cm}
  \includegraphics[width=\linewidth]{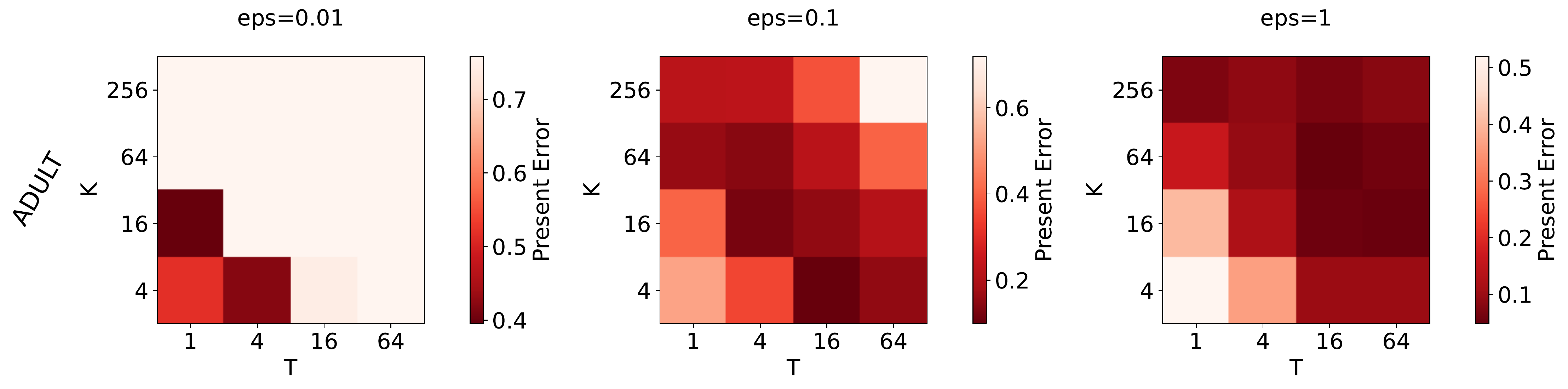} \\
\hline
\hspace*{-0.65cm}
  \includegraphics[width=\linewidth]{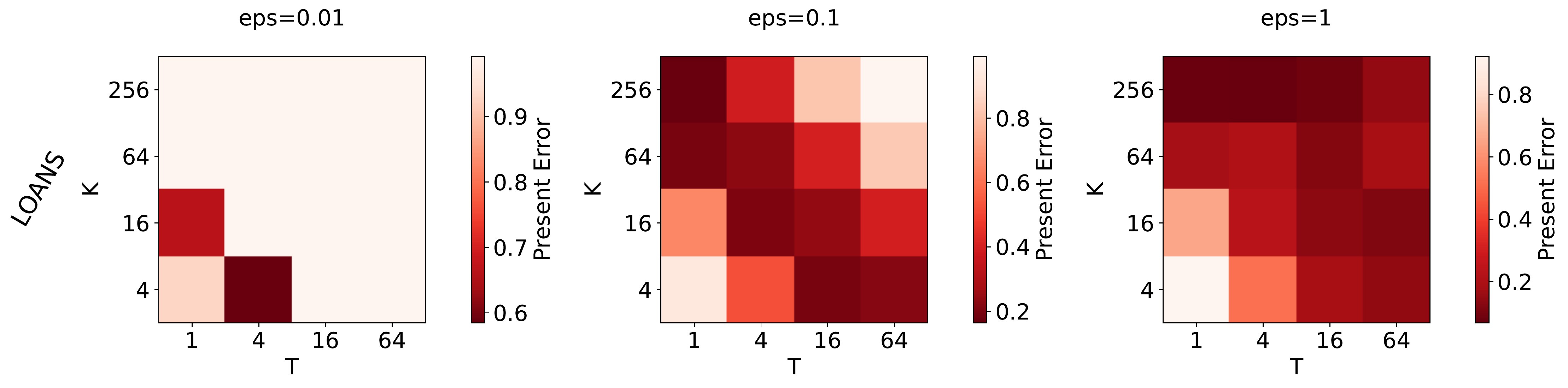}
\end{tabular}
\caption{Present error across a range of $\epsilon$ values with $|W|=256$ for the adaptive variant of \rap\ at every combination of $T$ and $K$ value considered.}
\label{fig:reevaluating-adaptivity-eps}
\end{figure}

The heatmaps in both figures provide interesting insight into \rap's adaptivity.
In Figure~\ref{fig:reevaluating-adaptivity-ws}, with $\epsilon$ fixed at $0.1$, we see that there is no single $(T,K)$ value or region that consistently achieves minimal error across all workload sizes.
Instead, we notice that at each workload size, there is some diagonal banding at around a fixed region of $T \cdot K$ that achieves approximately minimal error.
That is, for any particular workload size, let $(T^*, K^*)$ denote the $T$ and $K$ value that induces minimal error for \rap\ across our considered range of $T,K$ values, and let $c^*\coloneqq T^* \cdot K^*$.
We see that for other $(T,K)$ pairs such that $T \cdot K \approx c^*$, the corresponding error is typically comparable to the minimal error.
Moreover, we see that as $T\cdot K$ diverges from $c^*$, \rap's error increases essentially monotonically.
We hypothesize that for $T \cdot K \ll c^*$, \rap's error is relatively high because \rap\ had not answered and optimized over a sufficient number of queries.
For $T \cdot K \gg c^*$, we hypothesize that \rap's error is relatively high because the privacy budget is spread too thin across across answering a large number of queries, resulting in \rap\ utilizing overly noisy queries to optimize its underlying synthetic dataset.

These hypotheses are supported by the results in Figure~\ref{fig:reevaluating-adaptivity-eps}.
Specifically, as $\epsilon$ becomes larger, not only does the minimal error of \rap\ decrease, but the $T$ and $K$ values that achieve the minimal error (along with their corresponding diagonal bands) are pushed to increasingly large values.
Taken together, these results imply that in order to achieve low error, \rap\ primarily requires answering and optimizing over a \textit{specific number} of queries --- it is less important whether those queries are answered in small batches over a large number of adaptive rounds or in large batches over a small number of adaptive rounds.

This finding is important to \rap's usefulness in practice, as it motivates improved search strategies for optimal $(T,K)$ values.
Improved search strategies (beyond the naive $N\times N$ grid search that we performed) are important for two reasons.
\begin{enumerate}
\item Evaluating \rap\ across a range of $T$ and $K$ values can be computationally expensive. Thus, improved search strategies would decrease the computational cost. Alternatively, at a fixed computational cost, improved search strategies would allow \rap\ to be evaluated across a larger set of $T$ and $K$ values.
\item In practice, each evaluation of \rap\ on any $(T,K)$ setting consumes a portion of the privacy budget, even if only the optimal setting is chosen in the end. Thus, reducing the total number of evaluated $(T,K)$ settings enables more efficient use of the overall privacy budget.
\end{enumerate}
We provide one example of an improved search strategy over the naive $N\times N$ grid search strategy as follows.
First, the observed monotonicity of present error about $c^*$ could be leveraged to binary search for a $c \coloneqq T\cdot K$ setting along the positive diagonal that achieves approximately minimal error.
Then, a linear search across all $(T^\prime, K^\prime)$ settings such that $T^\prime \cdot K^\prime = c$ could be performed to compute the setting that achieves minimal error.
Relative to the grid search, this strategy would yield an $\mathcal{O}(N)$ factor improvement both in the portion of the privacy budget consumed as well as in the computational cost.

\subsubsection{Utility Impact of Filtering Marginals}
In the final experiment, we analyze what impact filtering out marginals with ``too many'' consistent queries has on \rap's utility.
Recall that in Aydore et al.'s evaluation, as a heuristic to reduce the computational burden of experimentally evaluating \rap, any marginal was removed from consideration if it contained more consistent queries than the number of records in the underlying dataset.
Here, we compare how \rap's utility is affected by this marginal filtering criterion.
We initiate this comparison by reevaluating \rap\ with and without the filtering criterion.
We do so across the range of parameters in Table~\ref{tab:evaluation-experiments}, and we record the minimal present error of \rap\ at each parameter setting across all $(T,K)$ pairs.
We then perform two analyses on these results, one focusing on how the workload size affects \rap's present error with and without marginal filtering, and another analyzing how the total number of queries affects \rap's present error. 
We conclusively determine that \rap's present error is impacted by filtering large marginals.
More specifically, we find that when holding the number of queries that \rap\ evaluates constant, filtering large marginals \textit{increases} \rap's present error.

\paragraph{Influence of Workload Size on Utility}
Aydore et al.\ hypothesized that removing the marginal filtering criterion would cause \rap's present error to increase comparably to the error increase induced by increasing the workload size.
To test this hypothesis, we perform a standard nested regression analysis~\cite{gelman2006data} on the \rap\ evaluation results.
For brevity, we state the steps of this analysis and then immediately jump to the results, deferring the regression details to Appendix~\ref{app:many-queries-appendix}.

At the high level, the steps for this analysis are as follows.
For the ADULT and LOANS datasets separately, we define a full regression model to account for the following three variables' (and their interactions') impact on \rap's present error: the DP level $\epsilon$, the workload size $|W|$, and whether the marginal filtering criterion was applied.
We also define a restricted regression model that accounts for $\epsilon$ and $|W|$, but does not distinguish whether or not a result had the marginal filtering criterion applied.
Following the standard approach for a nested regression analysis, we first determine whether the full regression model is a good fit for the \rap\ evaluation results (based on the fitted model's adjusted $r^2$ value, $F$-statistic $p$-value, and omnibus $p$-value).
We then compare the fit of the full model to the fit of the restricted model by performing a likelihood ratio test, analyzing the $p$-value of the resulting $\chi^2$ statistic.
Since the full model only differs from the restricted model in that it accounts for whether the marginal filtering criterion was applied, we can conclude that if the fit of the full model is both statistically sound and statistically significantly better than that of the restricted model, then the marginal filtering criterion impacts \rap's present error.

\begin{figure}
\centering
\begin{tabular}{cc}
\hspace*{-0.65cm}
  \includegraphics[width=.45\linewidth]{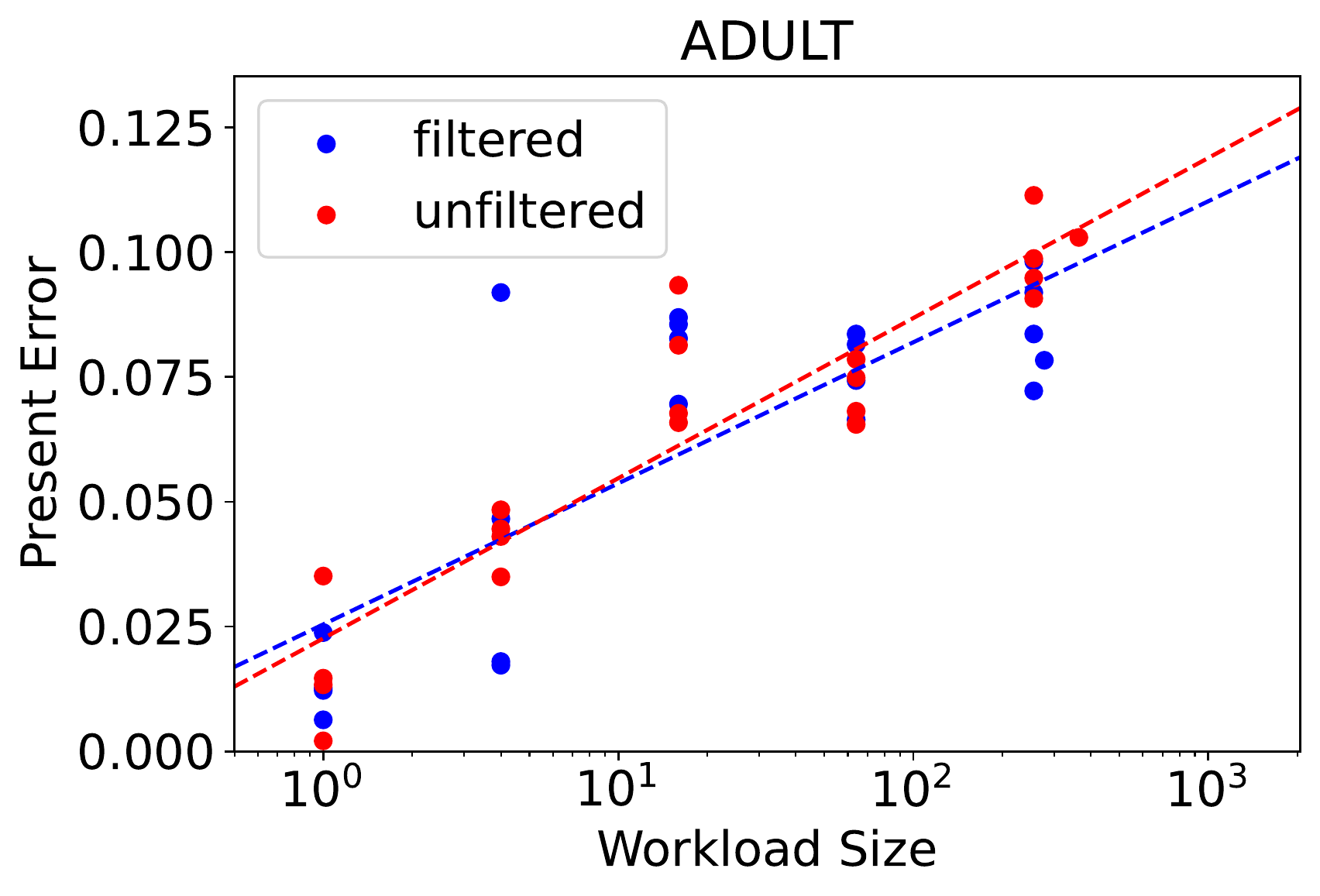} & \includegraphics[width=.45\linewidth]{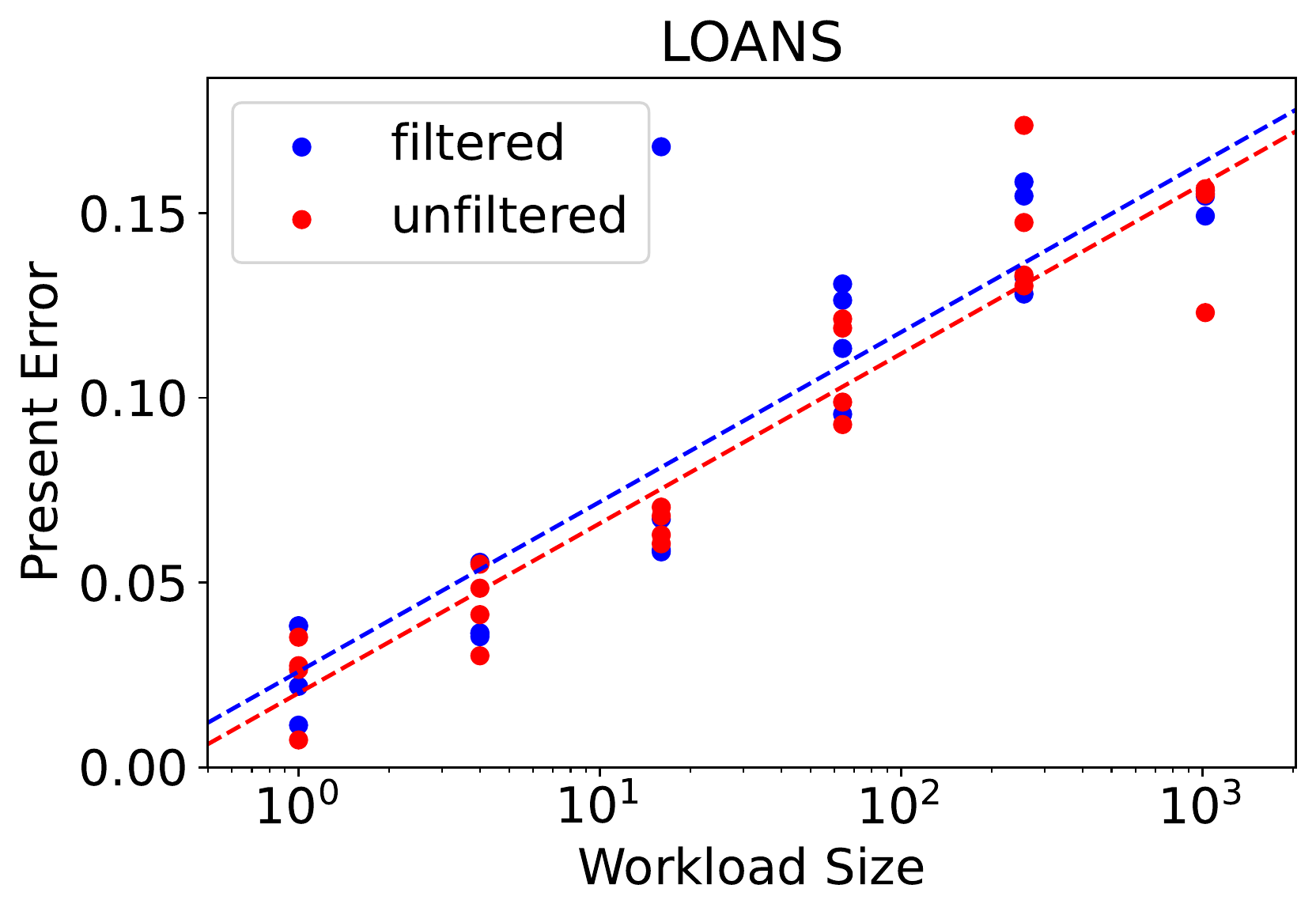}
\end{tabular}
\caption{Regression models for each dataset of \rap's present error vs.\ workload size for results from filtered and unfiltered marginals, at $\epsilon=0.1$.}
\label{fig:ws-regressions}
\end{figure}

From this analysis, Figure~\ref{fig:ws-regressions} shows the fitted full regression model on both datasets with $\epsilon$ fixed at $0.1$.
We find that the full regression models for both datasets fit the \rap\ evaluation results well.
Thus, we perform the aforementioned likelihood ratio test against the restricted models for each dataset.
The corresponding $p$-values for the models on the ADULT and LOANS \rap\ evaluations were $0.026$ and $0.623$ respectively.\footnote{We report the individual $p$-values for all statistical hypotheses tested. However, we control the family-wise error rate $\alpha$ (i.e., the probability $\alpha$ that at least one ``false positive'' finding will occur) using the Holm–Bonferroni method~\cite{holm1979simple}. At the $\alpha=0.05$ level, no conclusions based on the individual $p$-values change when the Holm–Bonferroni method is applied.}\enlargethispage{-\baselineskip}
The small $p$-value for the model corresponding to the \rap\ evaluations on the ADULT dataset enables us to conclude that the marginal filtering criterion does have an impact on \rap's present error.
However, the coefficients (and their corresponding $p$-values) in the full regression model do not indicate any clear, statistically significant trend for how the present error is impacted by the workload size when comparing the filtered vs.\ unfiltered \rap\ evaluations.
Moreover, regardless of the workload size, due to the lack of significance in many of the coefficients' $p$-values, we are unable to use this model to confidently determine the marginal filtering criterion's impact on \rap's present error.
Thus, although we are able to conclude that incorporating the marginal filtering criterion into \rap's evaluation does impact its present error, we are unable to confirm Aydore et al.'s hypothesis on the precise nature of this impact.

\paragraph{Influence of Number of Queries on Utility}
We now perform a more direct analysis of the marginal filtering criterion's impact on \rap's utility.
Our previous regression analysis assessed Aydore et al.'s hypothesis regarding the filtering criterion's influence on \rap's present error as a function of workload size.
However, the filtering criterion does not affect workload size directly --- it only affects the total number of queries consistent with the marginals in the workload.
As such, we believe that a more informative assessment would be to analyze the marginal filtering criterion's influence on \rap's present error as a function of the total number of consistent queries that it evaluates.

\begin{figure}
\centering
\begin{tabular}{cc}
\hspace*{-0.65cm}
  \includegraphics[width=.45\linewidth]{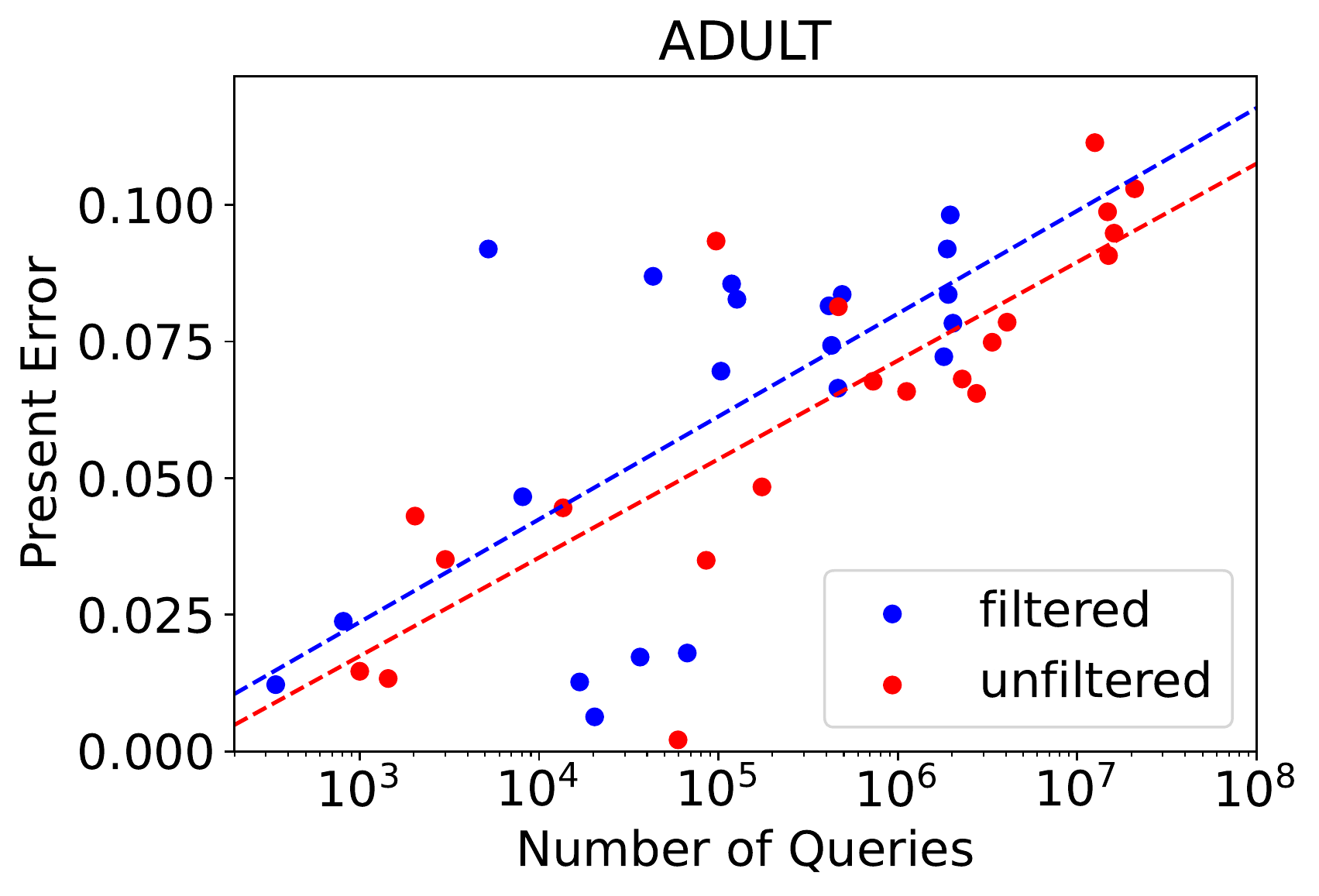} & \includegraphics[width=.45\linewidth]{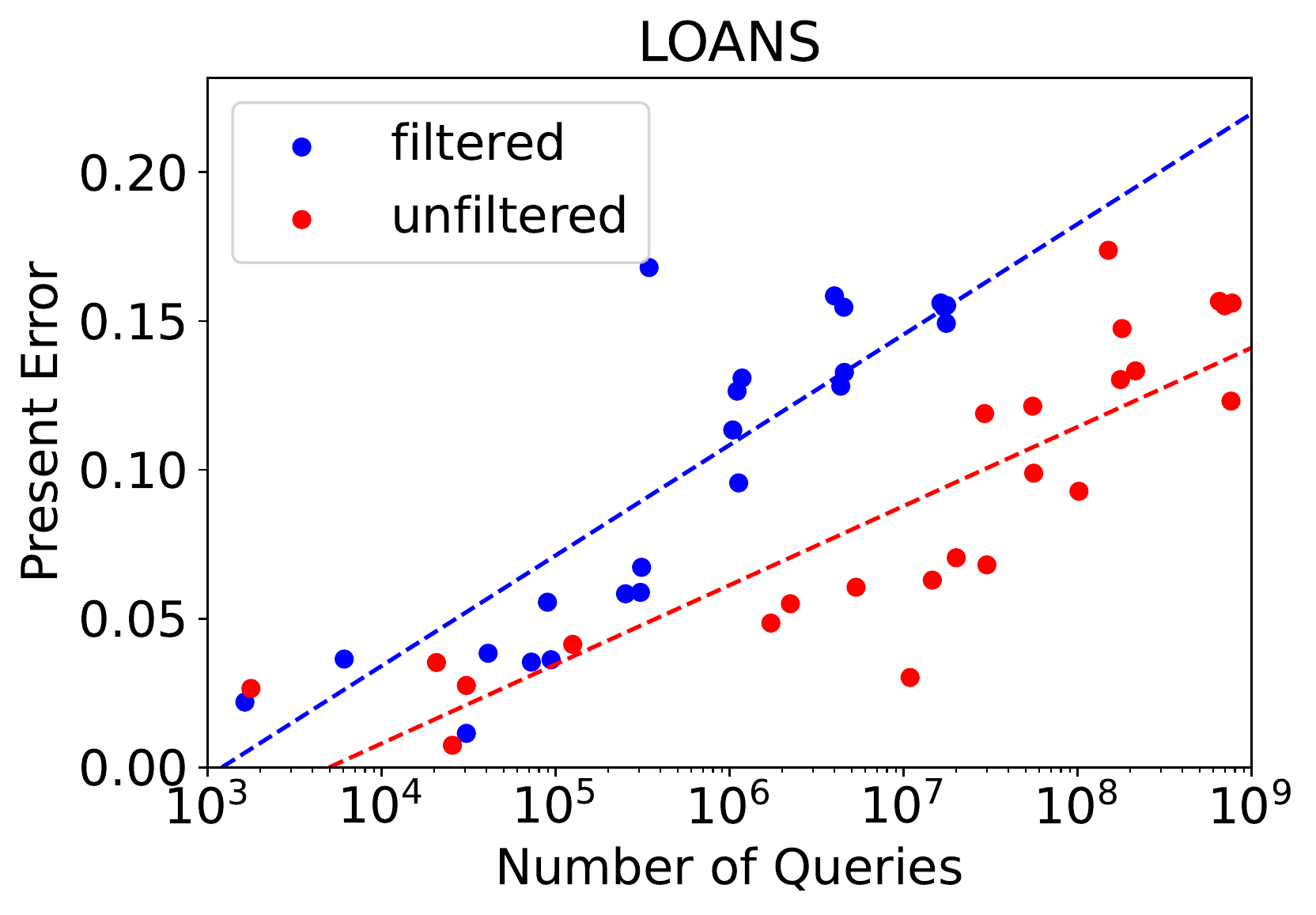}
\end{tabular}
\caption{Regression models for each dataset of \rap's present error vs.\ number of queries for results from filtered and unfiltered marginals, at $\epsilon=0.1$.}
\label{fig:nq-regressions}
\end{figure}

We perform this assessment using precisely the same statistical analysis and regression models as before, only now having the full and restricted models account for the total number of queries rather than workload size.
Figure~\ref{fig:nq-regressions} shows the fitted full regression models on both datasets with $\epsilon$ fixed at $0.1$.
Again, the full regression models for both datasets fit the \rap\ evaluation results well, allowing us to then test these full models against their corresponding restricted models.
The corresponding $p$-values of the likelihood ratio tests for the models on both the ADULT and LOANS \rap\ evaluations were less than $0.0001$, indicating that the filtering criterion has a statistically significant impact on \rap's present error (for both datasets this time).
The results from the figure for both datasets visually imply that including the filtering criterion increases \rap's present error for any given number of queries, and that this increase worsens as the total number of queries grows.
By examining the coefficients (and their corresponding $p$-values) of the full regression models on both datasets, we confirm that this visual trend holds statistically as well.

These results match intuition: in order for a result with the filtering criterion to have approximately the same number of queries as a result without, the result with filtering would likely have corresponded to a larger sized workload.
A larger size workload with the same number of queries implies a more diverse set of queries, whereas a smaller workload with the same number of queries implies a less diverse set of queries with sparser support (i.e., more of the queries evaluate to 0).
Thus, we conclude that Aydore et al.'s initial evaluation of \rap\ --- especially for the highly filtered 5-way marginals --- likely overestimates \rap's present error.
Moreover, this finding motivates a new branch of work on large-scale query answering for the separate cases of when the queries have dense support vs.\ sparse.

\section{Extending \rap's Applicability} \label{sec:extending-applicability}
In this section, we address our third contribution for the setting where queries are prespecified: extending \rap's applicability by expanding the class of queries that it is able to evaluate.
We begin by discussing the motivation behind this contribution.
We then describe what we expand the query class to ($r$-of-$k$ thresholds) and how we accomplish it.
Finally, we detail the empirical evaluations we perform on \rap\ within this expanded query class to quantify its utility and feasibility, finding that \rap\ efficiently evaluates $r$-of-$k$ thresholds with high utility.

\subsection{Motivation}
We contextualize the motivation for this contribution by considering the contributions of prior works.
Prior work on answering statistical queries in practical settings has been focused on relatively simple classes of statistical queries --- most popularly, $k$-way marginals (Definition~\ref{def:kw}), as these are a useful query class which is evaluable within a reasonable computational budget~\cite{barak2007privacy, thaler2012faster, gupta2013privately, chandrasekaran2014faster}.
Aydore et al.'s claim is that their gradient-based \rap\ mechanism~\cite{aydore2021differentially} is able to answer queries from richer classes.
In addition to evaluating $k$-way marginals, they demonstrated this claim by briefly evaluating a new class of queries, 1-of-$k$ thresholds (Definition~\ref{def:1k}).
However, 1-of-$k$ thresholds are essentially a negation of $k$-way marginals.
As such, Aydore et al.\ were able to evaluate \rap\ on 1-of-$k$ thresholds by reusing virtually the same class of EEDQs and the same underlying implementation as they used for $k$-way marginals.
Thus, although their evaluation demonstrated that \rap\ attains high utility on both query classes, these choices of query classes were not fully convincing in demonstrating that \rap\ is effective for answering truly richer classes of queries.
Therefore, it remained an open question whether \rap\ is able to answer richer, more general query classes.

\subsection{Expanding the Query Class} \label{sec:expanding-queries}
To extend \rap's applicability, we develop the mathematical and computational machinery necessary for \rap\ to evaluate a class of queries which generalizes both $k$-way marginals and 1-of-$k$ thresholds: $r$-of-$k$ thresholds (Definition~\ref{def:rk}).
We first describe this query class in detail, then derive its corresponding EEDQs.
Finally, we show how we optimize the derived EEDQs to be more efficiently evaluable, greatly reducing \rap's per-query evaluation time.

\subsubsection{Generalizing to $r$-of-$k$ Thresholds}
Informally, an $r$-of-$k$ threshold query counts what fraction of datapoints in the dataset have at least $r$ out of the $k$ specified attributes.
Thus, it strictly generalizes both $k$-way marginals (when $r=k$) and 1-of-$k$ thresholds (when $r=1$).
$r$-of-$k$ thresholds are a useful generalization because they allow for more expressive, dynamic queries beyond the rigid ``everything'' ($r=k$) or ``anything'' ($r=1$) queries that were previously studied.

The challenge when expanding \rap's evaluation to $r$-of-$k$ thresholds is deriving corresponding EEDQs.
$r$-of-$k$ thresholds cannot trivially reuse the EEDQs relied upon by Aydore et al.\ to evaluate $k$-way marginals and 1-of-$k$ thresholds.
Thus, we must derive new EEDQs for $r$-of-$k$ thresholds, and we accomplish this by generalizing the EEDQs of $k$-way marginals and 1-of-$k$ thresholds.
Towards this, we first reframe the standard definition of $r$-of-$k$ thresholds to enable explicit accounting of all possible combinations of matching and non-matching terms.

\begin{definition}[$r$-of-$k$ thresholds, Alternative] \label{def:rk-alt}
An $r$-of-$k$ threshold query $q_{\phi_{S,y,r}}$ is a statistical query whose predicate is specified by a positive integer $r \le k$, a set $S$ of $k$ features $f_1 \neq \dots \neq f_k \in [d]$, and a target $y \in (\cX_{f_1} \times \dots \times \cX_{f_k})$.
Let $\cR$ denote the set of all partitions $(R_+, R_-)$ of the $k$ features in $S$, such that each $|R_+| \ge r$ and each corresponding $R_- = S - R_+$.
The predicate $\phi_{S,y,r}$ is then given by
\[ \phi_{S,y,r}(x)=
   \begin{cases} 
	  1 & \text{if }\ \bigvee_{(R_+, R_-) \in \cR} \left(\bigwedge_{i \in R_+} (x_{f_i} = y_i) \bigwedge_{i \in R_-} (x_{f_i} \neq y_i)\right)\\
      0 & \text{otherwise.} 
   \end{cases}
\]
Note that at most one partition in $\cR$ will satisfy the predicate.
\end{definition}

We now use this equivalent definition of $r$-of-$k$ thresholds queries to design corresponding EEDQs.
For $k$-way marginals, Aydore et al.\ used \textit{product queries} (Definition~\ref{def:pq}) as EEDQs, which simply compute the product of a datapoint's values at the $k$ specified indices.
For $r$-of-$k$ threshold queries, we generalize product queries in the following ways.
First, we expand the product queries to explicitly include both positive and negated terms, which we refer to as \textit{generalized product queries}.
\begin{definition}[Generalized Product Query] \label{def:gpq}
Given two disjoint subsets of features $T_+, T_- \subseteq [d']$, the generalized product query $\hat{q}_{\hat{\phi}_{T_+,T_-}}$ is a surrogate query parameterized by $\hat{\phi}_{T_+,T_-}$ which is defined as
\[ \hat{\phi}_{T_+,T_-}(x) = \prod_{i \in T_+} x_i \prod_{i \in T_-} (1-x_i). \]
\end{definition}

\noindent Informally, a generalized product query effectively serves as a ``sub''-EEDQ for the conjunction portion of a single partition of $\phi_{S,y,r}(x)$ in Definition~\ref{def:rk-alt}.

Then, leveraging this alternative definition of $r$-of-$k$ thresholds together with generalized product queries, we define a new class of EEDQs in Definition~\ref{def:ptq}: \textit{polynomial threshold queries}.
\begin{definition}[Polynomial Threshold Query] \label{def:ptq}
Given a subset of features $T \subseteq [d']$ and integer $r$, let $\Upsilon$ denote the set of all partitions $(T_+, T_-)$ of $T$ such that each $|T_+| \ge r$ and each corresponding $T_- = T - T_+$.
The \textit{polynomial threshold query} $\hat{q}_{\hat{\phi}_{T,r}}$ is a surrogate query parameterized by $\hat{\phi}_{T,r}$ which is defined in terms of the generalized product query predicates as
\[ \hat{\phi}_{T,r}(x) = \sum_{(T_+, T_-) \in \Upsilon} \hat{\phi}_{T_+,T_-}(x). \]
\end{definition}

\noindent Informally, a polynomial threshold query computes the sum of generalized product queries across all $\sum_{t=r}^k {k \choose t}$ partitions of $T$, where $T$ is constructed identically as in Lemma~\ref{lem:pq}; i.e., for every $i \in S$, we include in $T$ the coordinate corresponding to $y_i \in \cX_{f_i}$.

\subsubsection{Optimizing the Evaluation of Polynomial Threshold Queries} \label{sec:extending-optimizing}
Evaluating polynomial threshold queries can be computationally expensive due to their combinatorial expansion and summation of generalized product query predicates.
Therefore, optimizing their definition to be efficiently evaluable is of utmost importance for enabling \rap\ to evaluate large sets of $r$-of-$k$ thresholds.
Towards this, we present two optimizations that can be used together, which significantly improve the practical runtime of \rap.

The first optimization is inspired by Aydore et al.'s implicit reduction of 1-of-$k$ threshold queries to $k$-way marginal queries.
They accomplished this by recognizing that a 1-of-$k$ threshold predicate is the negation of a $k$-way marginal predicate on a negated datapoint; i.e., $\phi_{S,y,1}(x) = 1-\phi_{S,y,k}(1-x)$.
This equivalence enabled them to efficiently reuse the $k$-way marginals' EEDQs (product queries) in \rap's evaluation.
Applying this concept more generally to computing an $r$-of-$k$ threshold predicate $\phi_{S,y,r}(x)$, the idea is that when $r \le k/2$, it is logically equivalent to compute the negation of a corresponding predicate (with $r' = k-r+1$) on the negated datapoint; i.e., $\phi_{S,y,r}(x) = 1-\phi_{S,y,r'}(1-x)$.
The benefit of utilizing this equivalence when using a polynomial threshold query as the EEDQ to evaluate $\phi_{S,y,r}(x)$ is that \textit{at most} $\lceil k/2 \rceil$ different partition sizes now need to be computed over, compared to at most $k$ when not utilizing this equivalence.
The computational savings from utilizing the equivalence are especially apparent when $r$ is small, as it leads to an exponential (in $k$) reduction in the required number of predicate evaluations.

For the second optimization, the goal is to eliminate the need to explicitly account for the negated terms in our alternative definition of $r$-of-$k$ thresholds (Definition~\ref{def:rk-alt}), as this in turn necessitates the computation of the product of negated values in generalized product queries (Definition~\ref{def:gpq}).
Removing the conjunction over negated terms from Definition~\ref{def:rk-alt} yields a logically equivalent predicate; i.e.,
\[ \phi_{S,y,r}(x)=
   \begin{cases} 
	  1 & \text{if }\ \bigvee_{(R_+, R_-) \in \cR} \bigwedge_{i \in R_+} (x_{f_i} = y_i)\\
      0 & \text{otherwise.} 
   \end{cases}
\]
However, more than one partition of $\cR$ may now satisfy the predicate.
As a result, analogously eliminating the product of negated values from the generalized product query definition (reducing it to a standard product query) would cause the summation in the polynomial threshold query's definition (Def~\ref{def:ptq}) to overcount.
To eliminate computing the product of negated values while simultaneously remedying this overcount, we utilize the principle of inclusion-exclusion to equivalently redefine polynomial threshold queries purely in terms of standard product queries (Definition~\ref{def:pq}).
\begin{definition}[Polynomial Threshold Query, Inclusion-Exclusion] \label{def:ptq-eff}
Given a subset of features $T \subseteq [d']$ and integer $r$, let $\Upsilon(i)$ denote the set of all $i$-size combinations of features in $T$ for $i=r \dots k$; i.e., each $T_i \in \Upsilon(i)$ is such that $|T_i|=i$ and $T_i \subseteq T$.
The polynomial threshold query $\hat{q}_{\hat{\phi}_{T,r}}$ parameterized by $\hat{\phi}_{T,r}$ can be defined in terms of product query predicates $\hat{\phi}_{T_\cdot}$ as
\[ \hat{\phi}_{T,r}(x) = \sum_{i=r}^k (-1)^{i-r} \binom{i-1}{i-r} \sum_{T_i \in \Upsilon(i)} \hat{\phi}_{T_i}(x). \]
\end{definition}
\noindent Utilizing this redefinition of polynomial threshold queries reduces the number of arithmetic operations by nearly half relative to the original definition (when $r > k/2$, which we assume without loss of generality by simultaneously utilizing the first optimization in this section).
In our subsequent experiments with $r$-of-$4$ thresholds (Section~\ref{sec:evaluating-r-of-k}), this reduction in operations results in a maximal runtime improvement of approximately 40\% for evaluating the polynomial threshold queries.

\subsection{Evaluating \rap\ on $r$-of-$k$ Thresholds} \label{sec:evaluating-r-of-k}
With the class of EEDQs derived, the only question that remains is how well \rap\ is able to utilize the EEDQs to answer prespecified sets of $r$-of-$k$ thresholds.
We investigate this question by evaluating how the various inputs to \rap\ affect its present utility and runtime.

\begin{table}
\centering
\begin{tabular}{ |c|c| } 
 \hline
 Primary Mechanism & \rap \\
 \hline
 Baseline Mechanisms & \bma, \gm \\
 \hline
 Utility Measure & $\err_P$ \\
 \hline
 $D$ & ADULT, LOANS \\
 \hline
 $\epsilon$ & $0.1, 1$ \\
  \hline
 $\delta$ & $1/|D|^2$ \\
 \hline
 $|W|$ & $1, 4, 16, 64, 256$ \\
 \hline
 $n^\prime$ & $500, 1000, 2000$ \\
 \hline
 $T$ & $1, 4, 16, 64$ \\
 \hline
 $K$ & $4, 16, 64, 256$ \\
 \hline
 $r$ & $1,2,3,4$ \\
  \hline
 $k$ & $4$ \\
 \hline
\end{tabular}
\caption{Experimental reference table for evaluating $r$-of-$k$ thresholds with \rap.}
\label{tab:thresholds-experiments}
\end{table}

\subsubsection{Utility on $r$-of-$k$ Thresholds}
To begin, we evaluate the present utility of \rap\ on $r$-of-$k$ thresholds, with $k$ fixed at 4.
As in our prior experiments in Section~\ref{sec:improving-evaluation}, we contextualize \rap's utility by comparing against the utilities of the \bma\ and \gm\ baseline mechanisms.
We then evaluate the utility of each mechanisms across a range of $r$ values, $\epsilon$ values, datasets $D$, workload sizes $|W|$, and synthetic dataset sizes $n'$, and across the same $T,K$ values for \rap\ as before.
Table~\ref{tab:thresholds-experiments} contains a summary of the precise parameter values.

\begin{figure}
\centering
\begin{tabular}{c}
\hspace*{-0.65cm}
  \includegraphics[width=1\linewidth]{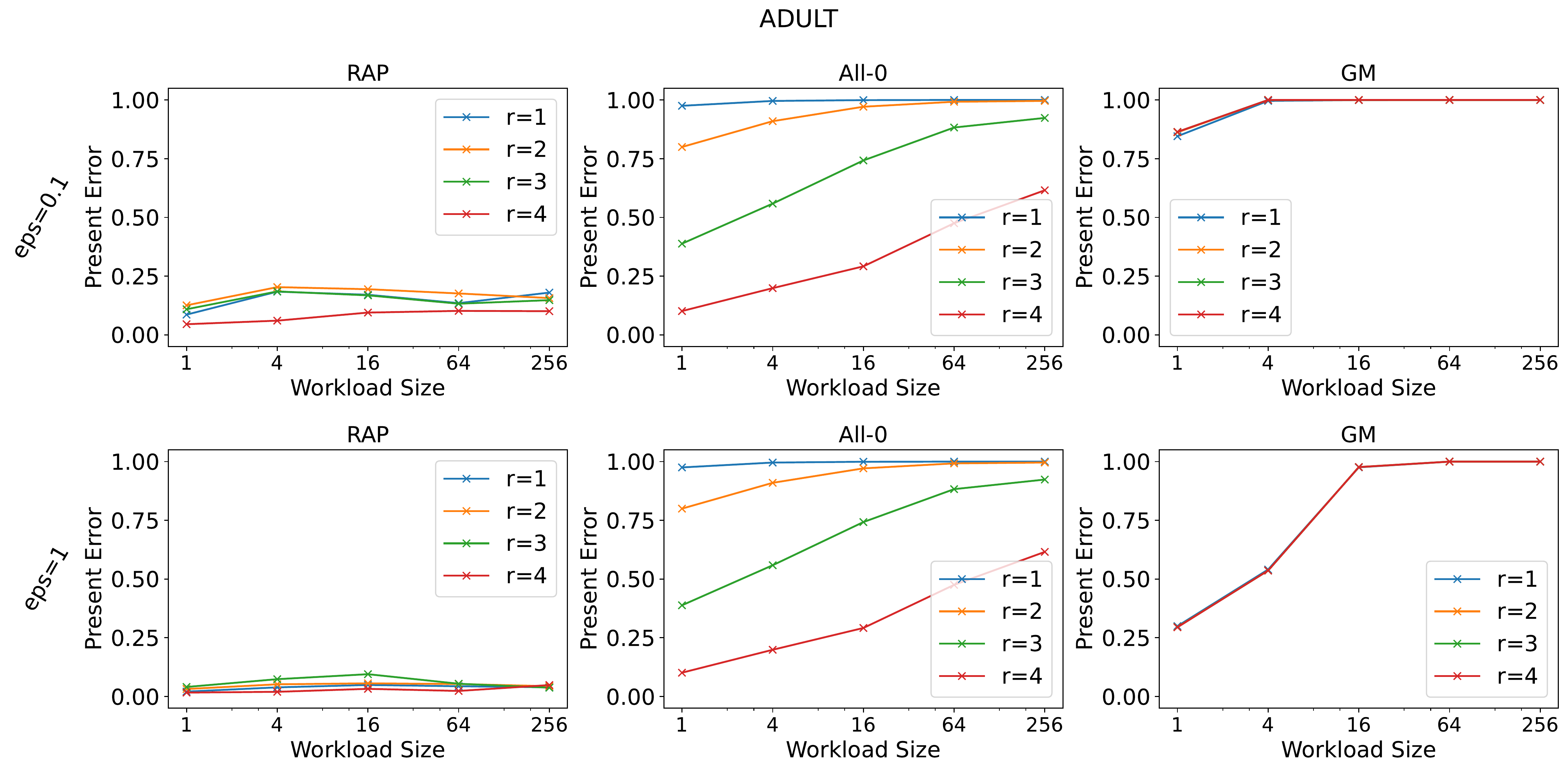} \\
\hline
\hspace*{-0.65cm}
  \includegraphics[width=1\linewidth]{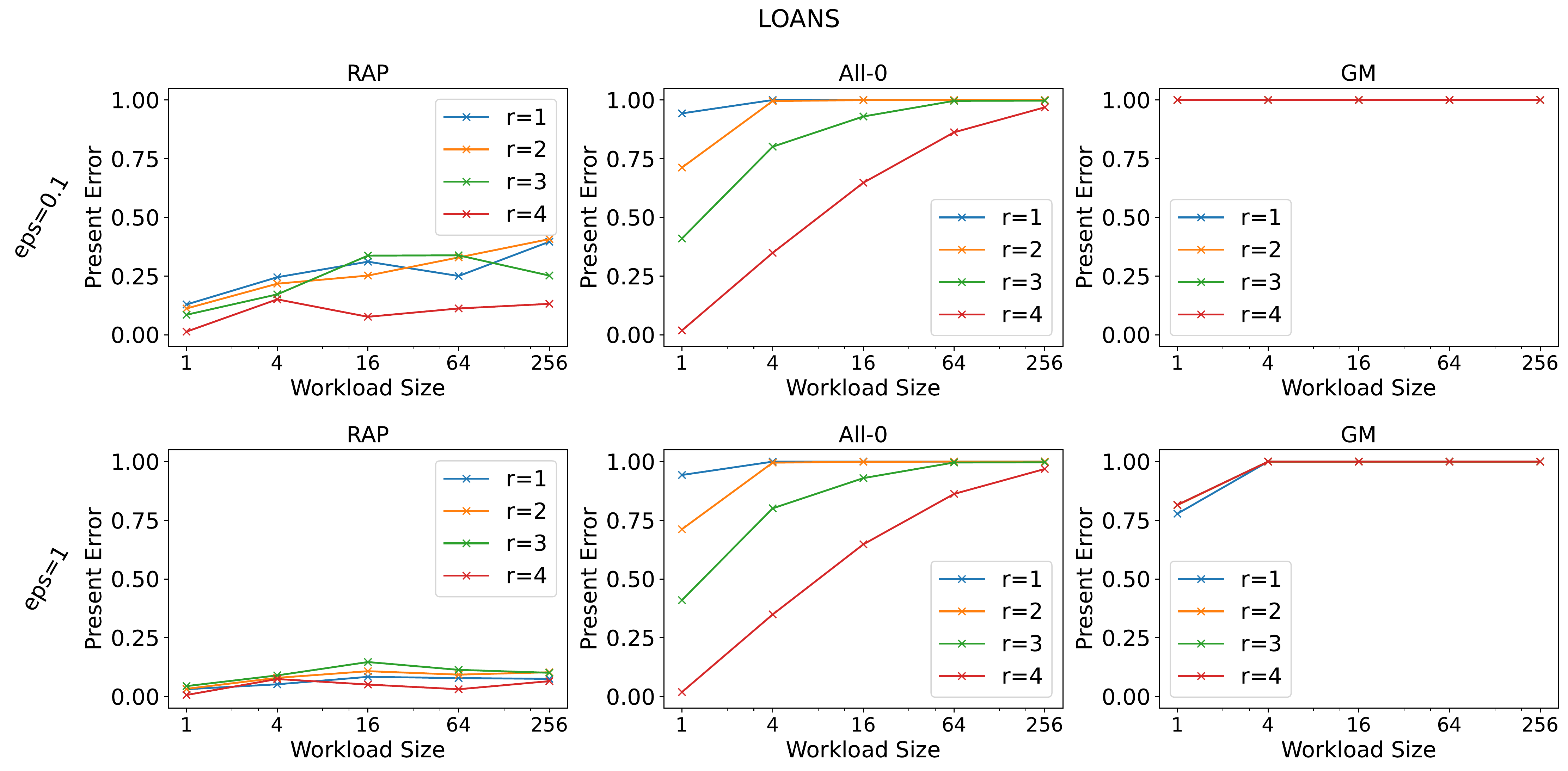}
\end{tabular}
\caption{\rap's minimal present error across all $T, K$ values considered alongside present error of the baseline mechanisms.}
\label{fig:extending-lines}
\end{figure}

Figure~\ref{fig:extending-lines} displays the results of this experiment for $n'=1000$, showing the minimal present error of \rap\ across all $T,K$ values considered alongside the present error of the baseline mechanisms.
The present error of both baseline mechanisms are as expected, with the \bma\ mechanism's present error having a clear and straightforward dependence on $r$, whereas the \gm\ mechanism's present error is independent of $r$.
Immediately, we see that \rap\ significantly outperforms the baseline mechanisms in all settings.
Across the $r$ values, we find that \rap\ achieves its minimal present error at $r=4$ (i.e., 4-way marginals).
Although \rap's present error for $r < 4$ is not much greater than for $r=4$, we find no further obvious relationship between \rap's present error and $r$.

\begin{figure}
\centering
\begin{tabular}{c}
\hspace*{-0.65cm}
  \includegraphics[width=\linewidth]{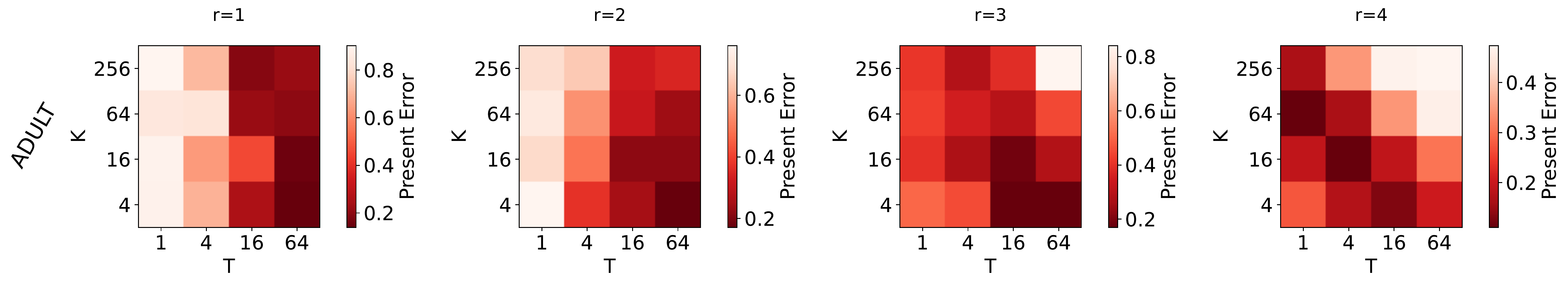} \\
\hline
\hspace*{-0.65cm}
  \includegraphics[width=\linewidth]{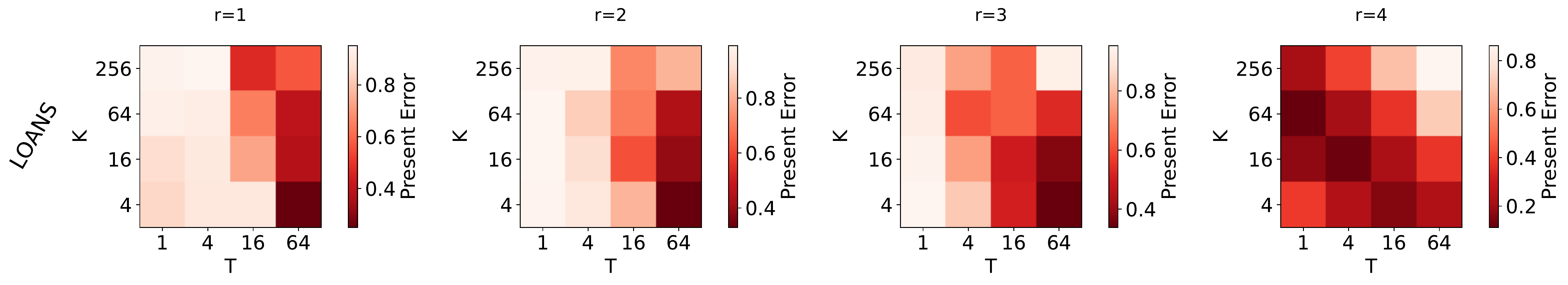}
\end{tabular}
\caption{\rap's present error at each $T, K$ value considered on a workload of 64 $r$-of-$k$ thresholds with $\epsilon = 0.1$.}
\label{fig:extending-adaptivity-r}
\end{figure}

To understand the role of that \rap's adaptivity plays in this experiment, in Figure~\ref{fig:extending-adaptivity-r} we visualize \rap's present error for each individual combination of $T,K$ settings considered.
Just as with 3-way marginals in Section~\ref{sec:reevaluating-role-of-adaptivity}, we find that the same adaptivity behavior emerges with 4-way marginals ($r=4$); i.e., \rap\ primarily needs to evaluate a specific number of queries to achieve low present error, regardless of whether those queries are evaluated jointly in a small number of adaptive rounds or individually across a large number of adaptive rounds.
However, we find that this behavior no longer holds for $r<4$.
Instead, the only consistent pattern that we find for $r<4$ in this figure (which holds across other workload sizes and $\epsilon$ values as well) is that \rap\ achieves its minimal present error when the number of adaptive rounds is relatively large but the number of selected queries per round of adaptivity is relatively small.
Since executing \rap\ for a large number of adaptive rounds is computationally expensive, this finding motivates future work on reducing the necessary number of rounds of adaptivity.
This could potentially be done by more strategically selecting the set of queries in each round --- for instance, by considering their expected joint impact on \rap's present error in the next optimization step, rather than selecting the individual queries with highest present error independently.

\begin{figure}
\centering
\begin{tabular}{c}
\hspace*{-0.65cm}
  \includegraphics[width=.7\linewidth]{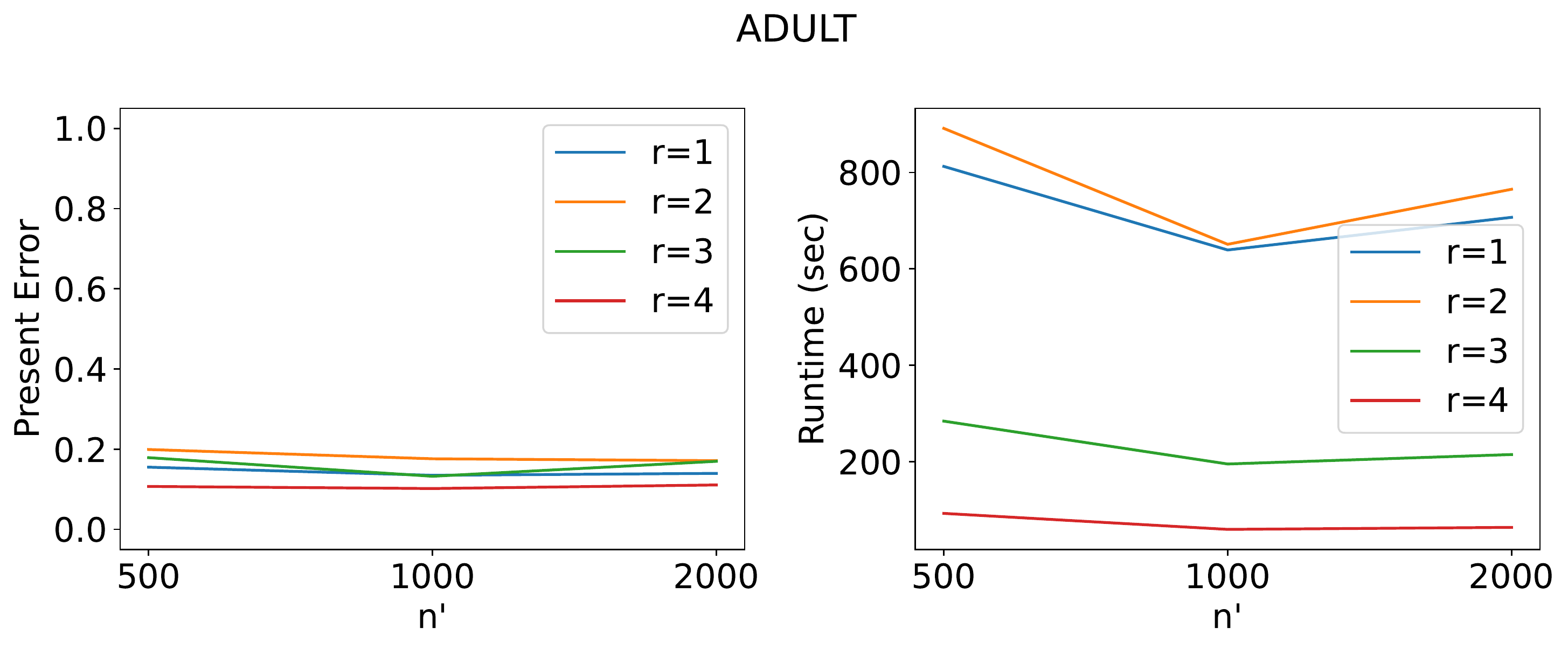} \\
\hline
\hspace*{-0.65cm}
  \includegraphics[width=.7\linewidth]{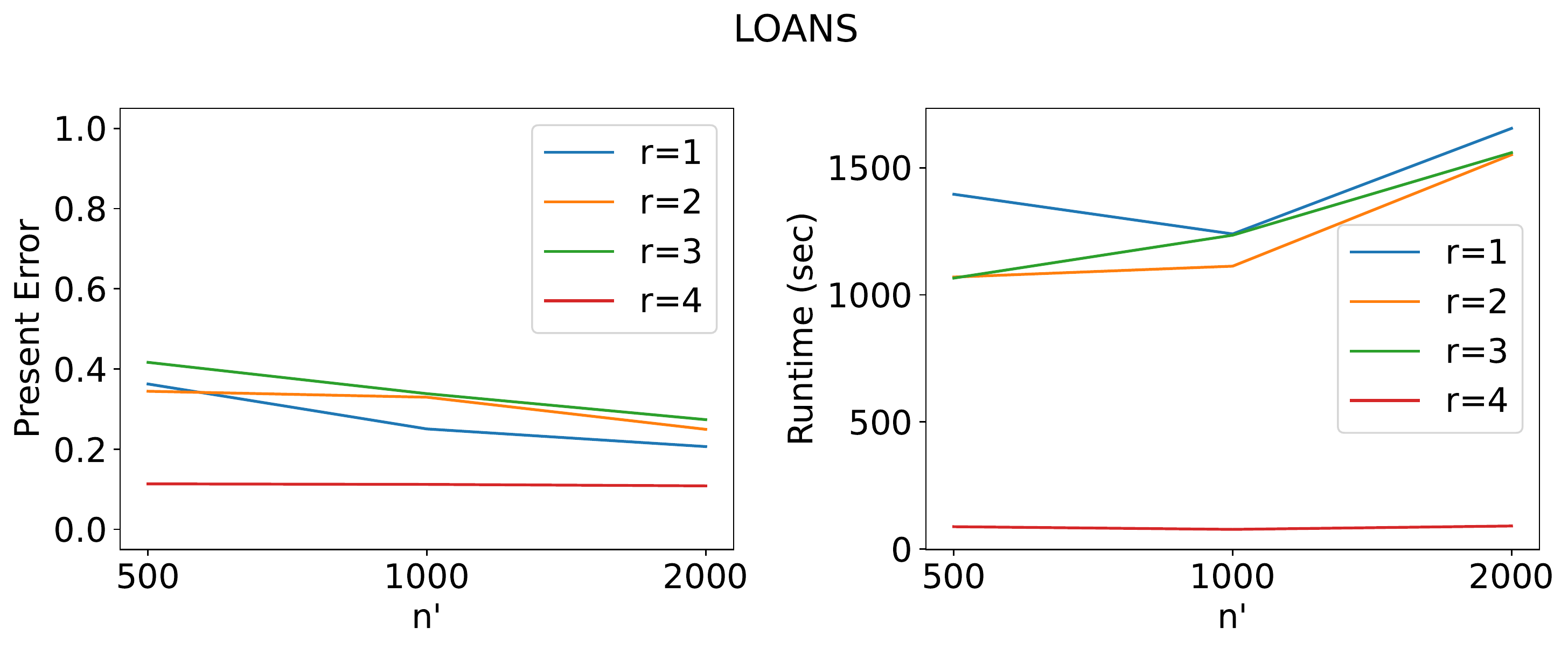}
\end{tabular}
\caption{\rap's present error and runtime as a function of the synthetic dataset size on a workload of 64 $r$-of-$k$ thresholds with $\epsilon = 0.1$.}
\label{fig:extending-synth-data}
\end{figure}

\subsubsection{Effect of Synthetic Dataset Size}
Lastly, we investigate how \rap's synthetic dataset size $n'$ affects its present error and runtime.
Conceptually, $n'$ controls \rap's learning capacity --- the larger $n'$, the better the answers to the queries should be. 
However, since optimizing large synthetic datasets is computationally expensive, $n'$ cannot be taken arbitrarily large.
Similarly, when the synthetic dataset size is too small, the optimization problem becomes underparameterized, which also results in a computationally expensive optimization process.
Aydore et al.\ empirically confirmed this utility--computation trade-off for \rap\ with $k$-way marginals, where they found that setting $n'=1000$ served as a good balance between utility and runtime for (filtered) 3-way and 5-way marginals.

We evaluate this trade-off on (unfiltered) $r$-of-4 thresholds, with the results shown in Figure~\ref{fig:extending-synth-data}.
For each setting of $r$, we find that increasing $n'$ generally results in a mild reduction of \rap's present error, but that at $n'=1000$ \rap\ often attains minimal or near-minimal runtime.
This mirrors Aydore et al.'s results, and thus supports their findings regarding \rap's utility--computation trade-off.
However, one interesting new finding is the effect that $r$ has on \rap's runtime.
Apriori, we expected that \rap\ would have the shortest runtime when evaluating $r$-of-$4$ thresholds with $r\in\{1,4\}$, and that their runtimes would be comparable.
This is because at $r\in\{1,4\}$, \rap\ has the least arithmetic operations to perform in order to evaluate each predicate (compared to $r\in\{2,3\}$, refer to Section~\ref{sec:extending-optimizing} for details on predicate evaluation).
Although we confirm that \rap\ acheives minimal runtime at $r=4$, we find that nearly the opposite holds true for $r=1$, which induces up to a 20x longer runtime.
This increase in runtime is primarily explained by our prior observation that for $r < 4$, \rap\ achieves its maximal utility via a larger number of adaptive rounds (where \rap's runtime appoximately linearly increases with the number of rounds).

However, even with this jump in runtime taken into consideration, we find that \rap\ is a highly performant mechanism for evaluating large sets of queries.
For instance, consider the worst-case runtime at $n'=1000$ in Figure~\ref{fig:extending-synth-data}, which occurs where \rap\ answered a workload of 64 1-of-4 thresholds on the LOANS dataset.
Here, \rap\ answered approximately \num{3.5e7} individual consistent queries in 1,240 seconds --- a rate of over 28,000 queries per second.
Based on these findings, we conclude that \rap\ is highly efficient for answering large sets of $r$-of-$k$ thresholds.

\section{Understanding \rap's Generalizability} \label{sec:understanding-generalizability}
In this final section, we propose a new and realistic intermediate setting that lies between the classic settings of having full knowledge of all queries in advance (i.e., the prespecified queries setting) vs.\ having no knowledge of which queries will be posed.
We begin by concretely defining this new partial knowledge setting along with a generalization-based measure of utility for mechanisms operating within it.
We then address our final contribution by empirically evaluating \rap's utility to determine its suitability in the new setting.

\subsubsection*{Motivation}
In statistical modeling, and especially in the subfield of synthetic data generation, the primary goal is not to generate a model or a synthetic dataset that answers a prespecified set of queries well. 
Rather, the goal is to generate a model or synthetic dataset that \textit{generalizes} well to future queries~\cite{vapnik1999overview, mohri2012foundations}.
When it comes to differentially private mechanisms for answering statistical queries through a synthetic dataset, prior utility analyses have  focused on either: (a) how well those mechanisms answer the prespecified set of queries, or (b)  theoretically bounding how well the mechanisms can answer any class of queries in the worst-case.
For example, the utility of \rap\ (and the related practical mechanisms which preceded it) had previously been based on solely the answers to the prespecified workload; e.g., present utility.
Experimentally evaluating a mechanism's present utility is straightforward: simply report the error of the highest error query from the prespecified query set.
However, in some settings, it may be more useful to understand how well the mechanism can answer future queries.
Towards this, theoretical bounds can provide strong guarantees for the mechanism's worst-case utility across an entire query class~\cite{blum2008learning, dwork2009complexity, dwork2010boosting, hardt2010simple, thaler2012faster}.
The drawback to using these theoretical bounds in practical settings is that they may be overly pessimistic, especially if the queries posed in the future are highly similar to the queries that were used to generate the synthetic dataset.
This apparent disparity between the utility suggested by theoretical analyses and the actual utility that may be observed in practice is nearly identical to the disparity that famously exists between utility analyses in theoretical vs.\ empirical machine learning research~\cite{vapnik1998support, bartlett2006empirical, shalev2014understanding, neyshabur2014search, zhang2021understanding}.
However, for answering statistical queries with DP, the theoretical worst-case bounds are currently the best tool available without introducing additional information or assumptions.

\subsection{Defining the Partial Knowledge Setting}
We now motivate the design of this particular partial knowledge setting, then formally define it.

Much like in the machine learning research literature, we motivate a new partial knowledge setting for the context of differential privacy based on the rationale that in some realistic settings, future queries may be similar to queries posed in the past; i.e., historical queries.
For instance, the U.S.\ Census Bureau periodically collects sensitive data for the decennial census, and routinely allows researchers to securely pose queries directly on the collected data.
Because similar data is being collected each decennial census, it is very likely that some of the queries analysts pose on one census dataset will be similar to the queries that analysts pose on the next census dataset.

We formalize this intuition on partial query repeatability for $r$-of-$k$ thresholds in a general manner in Definition~\ref{def:pksetting}.
For ease of exposition, we first introduce the following notation.
Let $\cT$ be an arbitrary distribution over thresholds, and let $Q \leftarrow \cT$ denote the vector of all consistent queries $Q$ of a threshold randomly drawn from distribution $\cT$.
Similarly, we let $Q \xleftarrow{|W|} \cT$ denote the vector of all consistent queries $Q$ from a $|W|$ size workload of thresholds sampled i.i.d.\ from $\cT$.
\begin{definition}[Partial Knowledge Setting, General] \label{def:pksetting}
Let $\cT_H$ and $\cT_F$ be arbitrarily related distributions over thresholds.
In this setting, DP mechanisms are expected to answer arbitrary future thresholds drawn i.i.d.\ from $\cT_F$.
However, the DP mechanisms are not provided $\cT_F$ explicitly.
Instead, DP mechanisms are provided access to partial knowledge of $\cT_F$ via a workload $W_H$ of ``historical'' thresholds sampled i.i.d.\ from $\cT_H$; i.e., the mechanisms are given access to $Q_H \xleftarrow{|W_H|} \cT_H$.
\end{definition}

Intuitively, in this partial knowledge setting, mechanisms can utilize $Q_H$ to learn about the underlying threshold distribution $\cT_H$, and if $\cT_H$ is similar to $\cT_F$, this will, in turn, inform what areas of the threshold space future thresholds are most likely to be sampled from.
The role of $Q_H$ in this setting is analogous to the role that training data plays in machine learning; i.e., it is the concrete sample of data provided to the mechanism that the mechanism can use to attempt to generalize.

In order for the historical queries $Q_H$ to convey useful information about $\cT_F$ to the DP mechanism, $\cT_H$ and $\cT_F$ should be related.
Towards this, in Definition~\ref{def:pkconc} we specify two concrete instantiations of the partial knowledge setting which make the relationship between $\cT_H$ and $\cT_F$ explicit.
\begin{enumerate}
\item Informally, the first concrete instantiation is the \textit{exact} partial knowledge setting, where historical thresholds are drawn from the same distribution as the future thresholds.
\item The second concrete instantiation is the \textit{drifting} partial knowledge setting, which extends the exact partial knowledge setting.
The drifting partial knowledge setting is inspired by the practical consideration that even if the historical and future thresholds distributions are initially the same, they may gradually drift apart over time.
\end{enumerate}

In both settings, we ground the historical and future thresholds distributions in the observation that in practice, certain features (or combinations of features) are likely to be more relevant to analysts than other features.
For instance, in the ADULT dataset, ``Age'' and ``Years of education'' might be more relevant and useful for analyses than ``Capital loss amount'' and ``Relationship status''.
We model this relevance as a historical probability distribution $\cF_H$ over the \textit{features}, such that the probability mass corresponding to any $r$-of-$k$ threshold in $\cT_H$ corresponds to the (normalized) product of the $k$ features' probabilities; i.e., $\cT_H$ is the sampling distribution of $k$ features from $\cF_H$ without replacement.
Our definition of the drifting partial knowledge setting specifically attempts to capture the practical phenomenon that if (for instance) analysts' interests are concentrated primarily in a small subset of features, then even if their interests drift over time, the analysts' new interests may still be concentrated in a small subset of different features.
Based on this, we now formally define both concrete instantiations of the partial knowledge setting.

\begin{definition}[Partial Knowledge Setting, Exact \& Drifting] \label{def:pkconc}
Let $\cF_H$ be an arbitrary historical distribution over features with $\cT_H$ as its corresponding historical thresholds distribution.
Without loss of generality, assume the features are sorted in descending order of their probability masses under $\cF_H$; i.e., for each feature $f_i$ with probability $p_i$, we have that $p_i \ge p_{i+1}$.
Let $\gamma \in [0,1]$ be a drift parameter, which defines the distributional similarity of the future distribution over features $\cF_F$ (and correspondingly the future thresholds distribution $\cT_F$) as follows.
For each probability $p_i$, associate the corresponding key 
\[
k_i = 
\underbrace{(1-2\gamma)}_{\substack{\text{ordering} \\ \text{weighting}}} \ \cdot
\underbrace{\frac{d-i}{d-1}}_{\substack{\text{relative order,} \\ \text{normalized}}} + \ 
\underbrace{(1-|1-2\gamma|)}_{\substack{\text{shuffling} \\ \text{weighting}}} \ \cdot
\underbrace{u_i}_{\substack{\text{random} \\ \text{shuffling} \\ \text{amount}}},
\]
where $u_i \overset{\text{iid}}{\sim} \textrm{Uniform}[0,1]$.
The feature distribution $\cF_F$ is defined by leaving the features fixed in their original ordering, but reordering the probability masses in descending order of their keys.
This results in a distribution of the same concentration, but with probability masses re-assigned to potentially different features.
The future thresholds distribution $\cT_F$ is therefore the sampling distribution of $k$ features without replacement from $\cF_F$.
When $\gamma = 0$, this procedure yields $\cT_F = \cT_H$, and we refer to this as the \textit{exact} partial knowledge setting.
When $\gamma > 0$, we refer to this as the \textit{drifting} partial knowledge setting.
\end{definition}

This model of drift is designed to maintain the concentration of the initial feature distribution $\cF_H$ while interpolating between the exact partial knowledge setting ($\gamma=0$) and a uniformly random reshuffling of the features' probabilities ($\gamma=1/2$).
For $0 < \gamma < 1/2$, this model induces a weighted amount of random reshuffling of probabilities in conjunction with simultaneously encouraging features' probabilities to remain ``similar'' to what they initially were; e.g., features with large probability masses under $\cF_H$ are likely to retain large probability masses under $\cF_F$.
On the other end of the spectrum is the $\gamma > 1/2$ setting, where the relative orderings of probabilities become more likely to be reversed; e.g., features with large probability masses under $\cF_H$ are likely to be assigned small probability masses under $\cF_F$.
At the extreme of this setting is $\gamma=1$, which induces $\cF_F$ of maximal total variation distance to $\cF_H$ by deterministically reversing the relative ordering of the features' probabilities.
Figures~\ref{fig:generalizing-drift-hists} and \ref{fig:generalizing-drift-tvs} concretely illustrate how the drift amount $\gamma$ affects the distribution of future features.

\begin{figure}
\centering
\begin{tabular}{c}
\hspace*{-0.65cm}
  \includegraphics[width=1\linewidth]{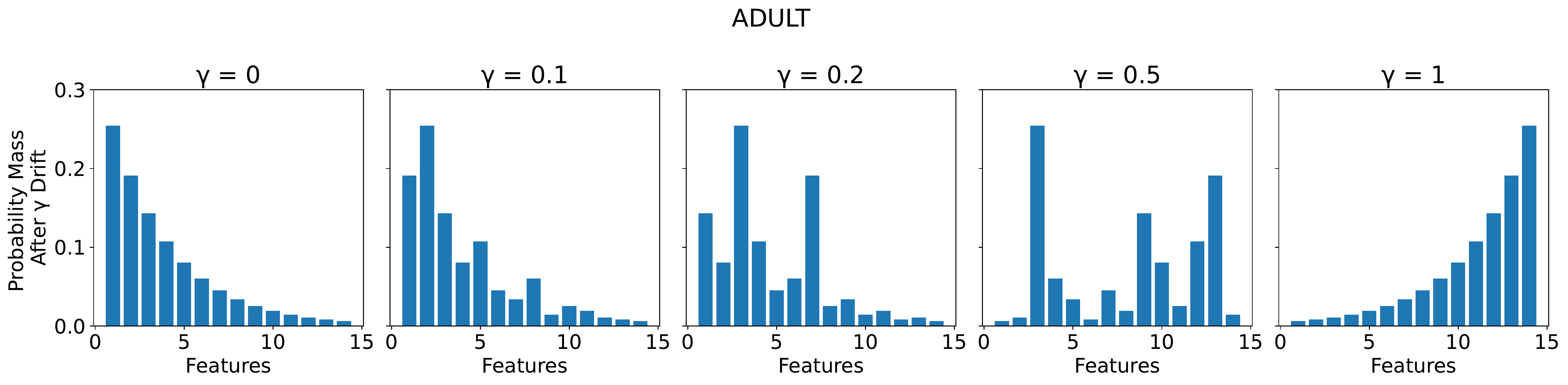} \\
\hline
\hspace*{-0.65cm}
  \includegraphics[width=1\linewidth]{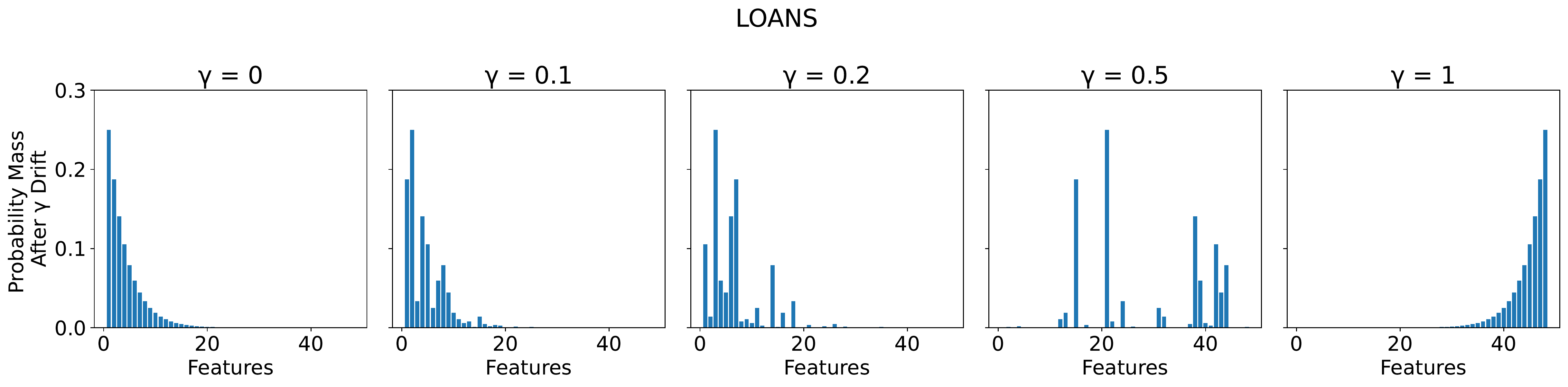}
\end{tabular}
\caption{Examples of drifted feature distributions $\cF_F$ across a range of drift parameters $\gamma$, with an initial Geometric distribution for $\cF_H$ on the ADULT and LOANS datasets. Categorical features are numbered (rather than named) along the $x$-axis.}
\label{fig:generalizing-drift-hists}
\end{figure}

\begin{figure}
\centering
\includegraphics[width=0.5\linewidth]{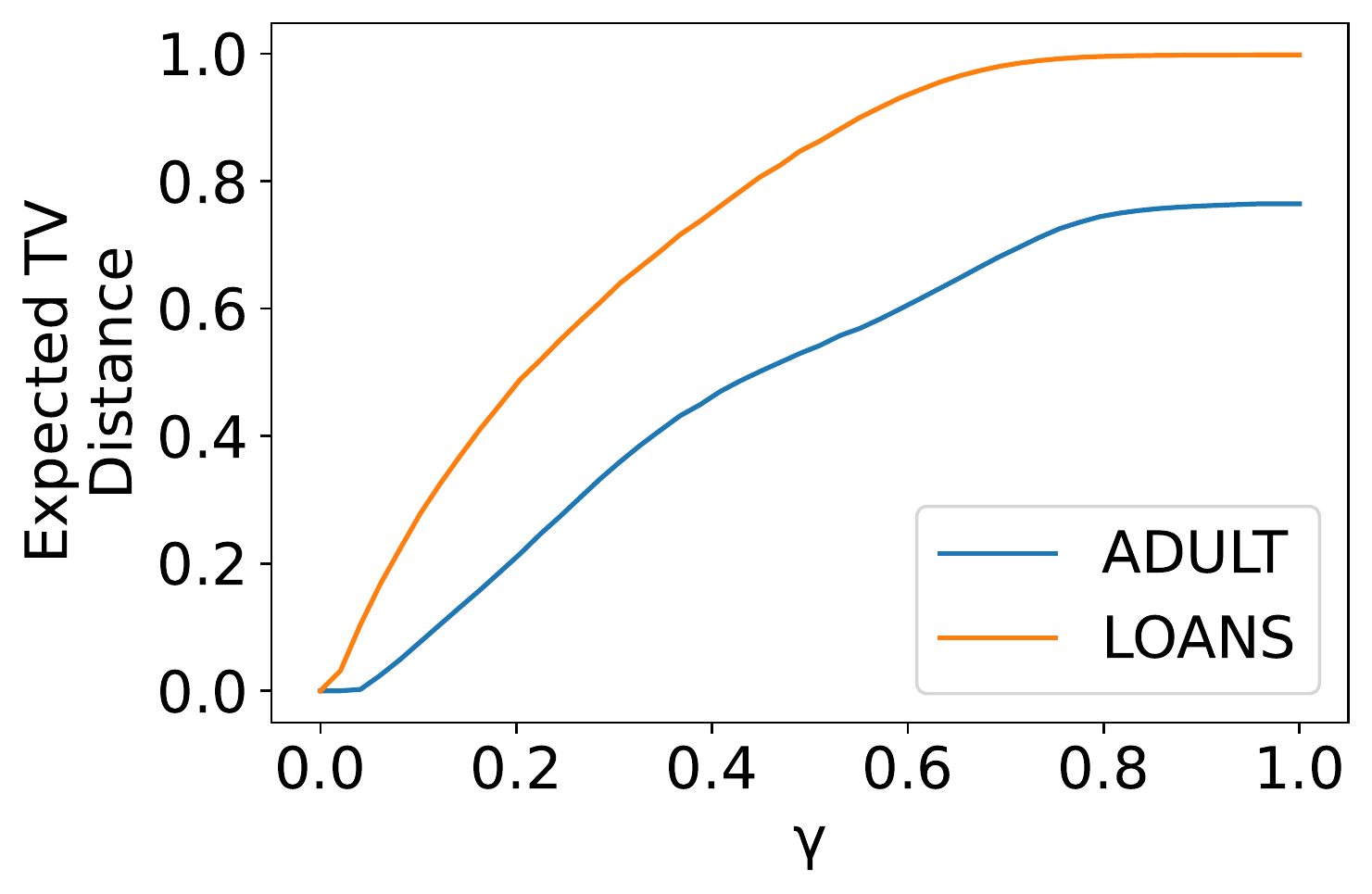}
\caption{Effect of drift parameter $\gamma$ on the total variation distance between the historical features distribution $\cF_H$ and the future features distribution $\cF_F$, with an initial Geometric distribution for $\cF_H$ on the ADULT and LOANS datasets.}
\label{fig:generalizing-drift-tvs}
\end{figure}

\subsection{Measuring and Computing Utility}
Having concretely defined the partial knowledge setting, we formally define a utility measure to quantify how well a mechanism can answer future thresholds based on the historical thresholds it was given access to; i.e., a measure quantifying how well the mechanism generalizes.
We then describe how to empirically evaluate this defined utility measure in an efficient way.

In this setting, we are interested in the mechanism's error across its answers to the consistent queries of $r$-of-$k$ thresholds drawn from $\cT_F$.
This new utility measure is based on the classic utility measure used in the prespecified queries setting (Definition \ref{def:present-err}), with the only difference being that the randomness of the future thresholds distribution $\cT_F$ is now explicitly taken into account.
We thus define \textit{future utility} which we measure in terms of the negative of \textit{future error}; i.e., a mechanism with low future error has high future utility, and vice versa.
Specifically, future error is the expected absolute error taken over the randomness of both $M$ and $\cT_F$, formally defined as follows.
\begin{definition}[Future error]
Let $a = Q(D)$ be the true answers to all queries in $Q$ on $D$, and let $\tilde{a}$ be mechanism $M$'s corresponding answers. Then $\err_F$ is the future error of mechanism $M$, defined as $\err_F(M,D,\cT_F) = \E_{M(D), Q\leftarrow \cT_F} \Vert a - \tilde{a} \Vert_\infty$, where the expectation is over the randomness of both the mechanism and future threshold distribution.
\end{definition}

Theoretically evaluating $\err_F$ of a mechanism on \textit{a priori} unknown threshold distributions without resorting to worst-case bounds is a challenging problem.
Experimentally, however, we are able to efficiently and accurately estimate $\err_F$ for the \rap\ mechanism as follows:
\begin{enumerate}[leftmargin=40pt]
\item Construct feature distributions $\cF_H$ and $\cF_F$ according to real-world phenomena, which in turn define threshold distributions $\cT_H$ and $\cT_F$.
\item Generate a workload $W_H$ of historical thresholds, yielding query vector $Q_H \xleftarrow{W_H} \cT_H$. Independently, generate a workload $W_F$ of future thresholds, yielding query vector $Q_F \xleftarrow{W_F} \cT_F$.
\item Provide $Q_H$ as the input queries to \rap\ in order to generate a synthetic dataset.
\item Use the synthetic dataset to answer $Q_F$, recording the mean error (and optionally, the corresponding confidence intervals to quantify how faithfully $\err_F$ was approximated).
\end{enumerate}
This evaluation approach is analogous to standard practice in empirical machine learning research where data is split into ``training'' and ``test'' sets randomly (to ensure distributional similarity)~\cite{hastie2009elements}.
The model is then learned on the training set, and subsequently evaluated on the test set to measure how well it generalizes.

\subsection{Evaluating \rap's Future Utility}
As our final contribution, we empirically evaluate \rap's future utility for answering $r$-of-$k$ thresholds.
The experiments that we perform on \rap\ to understand its suitability in this new partial knowledge setting are as follows:
\begin{itemize}[leftmargin=30pt]
\item Evaluating the effects that the threshold distribution concentration and the historical threshold workload size $|W_H|$ have on \rap's future utility.
\item Evaluating the effect that ``overfitting'' in the synthetic data optimization step has on \rap's future utility.
\item Evaluating the effect that the distribution drift amount $\gamma$ has on \rap's future utility.
\end{itemize}
These experiments are designed to assess the distinct new ways (beyond those in the previous prespecified queries setting) in which \rap's inputs may influence its future utility.

\subsubsection{Effect of Threshold Distribution Concentration \& Historical Workload Size}
To empirically evaluate \rap's future utility in the exact partial knowledge setting, we must specify the particular threshold distribution $\cT_H = \cT_F$ from which we generate both the input queries $Q_H$ and future queries $Q_F$ used to evaluate $\err_F$.
As previously discussed, we do so by specifying feature distributions $\cF_H$ and $\cF_F$ that, in turn, define the threshold distributions.
As a baseline, we choose what is intuitively the most challenging extreme: setting $\cF_H$ and $\cF_F$ to be the Uniform distribution.
We expect the future utility of this baseline to be the lowest among all possible distributions since it is the least concentrated, implying that it provides the least amount of information possible to the mechanism about any particular region of the threshold space.

In an effort to model the real-world phenomena that certain features are likely to be more relevant to analysts than other features, we utilize the following two feature distributions.
For a highly concentrated distribution, we use the exponentially-tailed Geometric distribution.
For a mildly concentrated distribution, we use the heavy-tailed Zipfian distribution.
Both distributions are commonly used in practice when modeling real-world phenomena; e.g., ~\cite{miller1989modeling, yu2004false, zeng2012topics, okada2018modeling}.
We hypothesize that the highly concentrated Geometric distribution will induce high-utility results, since many of the same features in $Q_H$ will also appear in $Q_F$.
Analogously, we hypothesize that the mildly concentrated Zipfian distribution will induce lower-utility results (although still higher than the Uniform distribution baseline).

With a fixed threshold distribution $\cT_H$ defined by the feature distribution $\cF_H$, we must specify how many thresholds will be randomly sampled to form the historical threshold workload $W_H$ (and corresponding vector of all consistent queries $Q_H$) that \rap\ takes as input.
Obtaining a clear understanding what impact the historical workload size $|W_H|$ has on \rap's future utility is important because there may be a tension between the number of historical $r$-of-$k$ thresholds and \rap's future utility.
On the one hand, the more sampled thresholds there are, the more information \rap\ has about the underlying distribution $\cT_F$ from which future thresholds will be generated.
This suggests that the more historical $r$-of-$k$ thresholds there are, the higher \rap's future utility should be.
On the other hand, to optimize \rap's underlying synthetic dataset, its privacy budget is split between all queries consistent with the historical thresholds.
This implies that the more historical thresholds there are, the more noise will be added to each consistent query's answer, which seems to suggest that this will cause the future utility to be lower.
Thus, we seek to understand whether one of these two possibilities is correct, or whether there is a ``sweet spot'' where a certain number of historical thresholds is just enough for the mechanism to implicitly learn $\cT_F$ but does not result in the privacy budget being spread too thin.

\begin{table}
\centering
\begin{tabular}{ |c|c| } 
 \hline
 Primary Mechanism & \rap \\
  \hline
 Baseline Mechanism & \bma \\
 \hline
 Utility Measure & $\err_F$ \\
 \hline
  $D$ & ADULT, LOANS \\
 \hline
 $\epsilon$ & $0.1$ \\
  \hline
 $\delta$ & $1/|D|^2$ \\
 \hline
 $|W_H|$ & $1, 4, 16, 64, 256, 1024$ \\
 \hline
 $n^\prime$ & $1000$ \\
  \hline
 $T$ & $1, 4, 16, 64$ \\
 \hline
 $K$ & $4, 16, 64, 256$ \\ 
 \hline
 $r$ & $1$ \\
  \hline
 $k$ & $3$ \\
 \hline
 $\cT_H, \cT_F$ & Uniform, Zipf, Geometric \\
 \hline
  $\gamma$ & $0, 0.05, 0.1, 0.2, 0.5, 1$ \\
 \hline
\end{tabular}
\caption{Experimental reference table for evaluating the future utility of \rap\ on $r$-of-$k$ thresholds.}
\label{tab:generalizability-experiments}
\end{table}

To empirically quantify the effect of both the threshold distribution concentration as well as historical workload size on \rap's future utility, we evaluate \rap\ across a range of workload sizes using the three specified distributions over features in both the ADULT and LOANS datasets.
To put \rap's future utility into context, we also evaluate the future utility of the \bma\ baseline mechanism.
Refer to Table~\ref{tab:generalizability-experiments} for a summary of this experiment.

\begin{figure}
\centering
\begin{tabular}{c}
\hspace*{-0.65cm}
  \includegraphics[width=\linewidth]{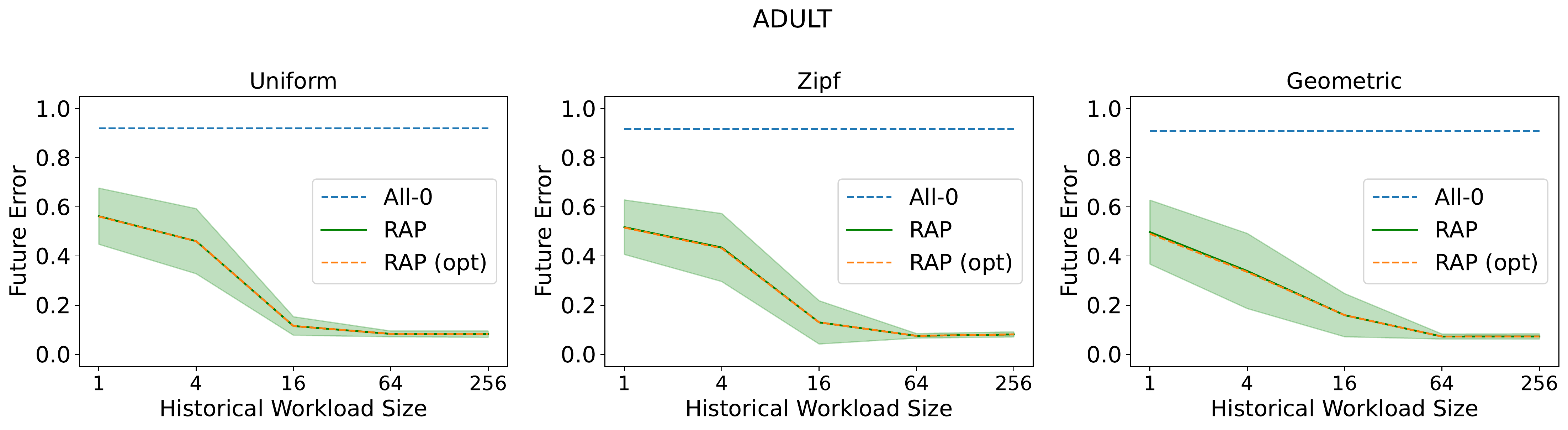} \\
\hline
\hspace*{-0.65cm}
  \includegraphics[width=\linewidth]{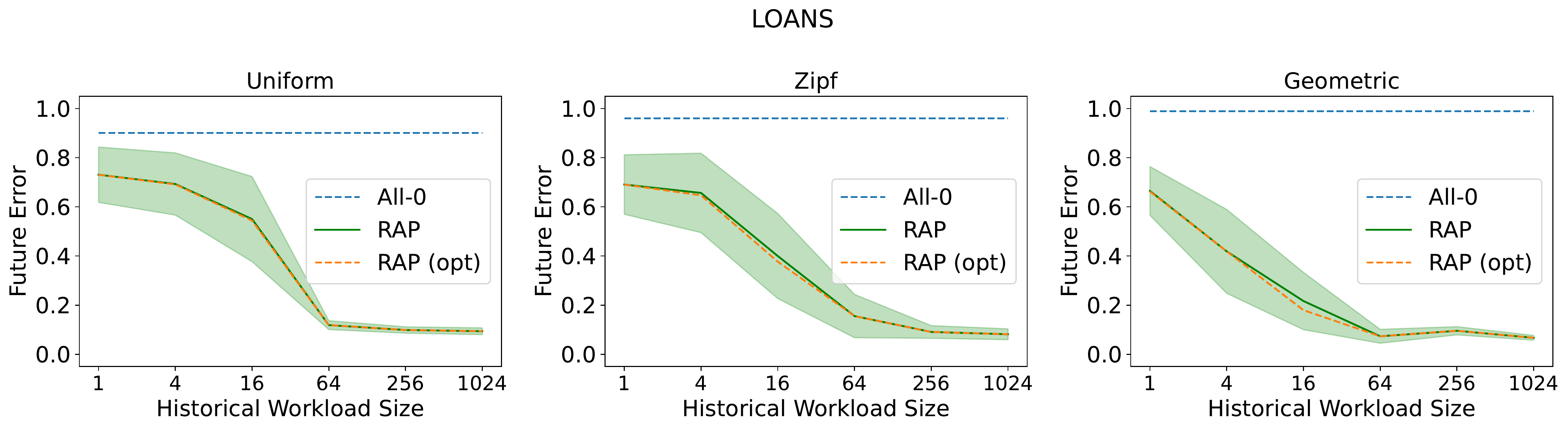}
\end{tabular}
\caption{\rap's future error (and 95\% confidence intervals) across all $T,K$ values considered where \rap\ achieves minimal present error, plotted across a range of workload sizes and historical threshold distributions. ``\rap\ (opt)'' represents \rap's future error across all $T,K$ values considered where \rap\ achieves minimal future error. Future error of \bma\ included as a baseline.}
\label{fig:generalizing-dist-lines}
\end{figure}

\begin{figure}
\centering
\begin{tabular}{cc}
\hspace*{-0.65cm}
  \includegraphics[width=.45\linewidth]{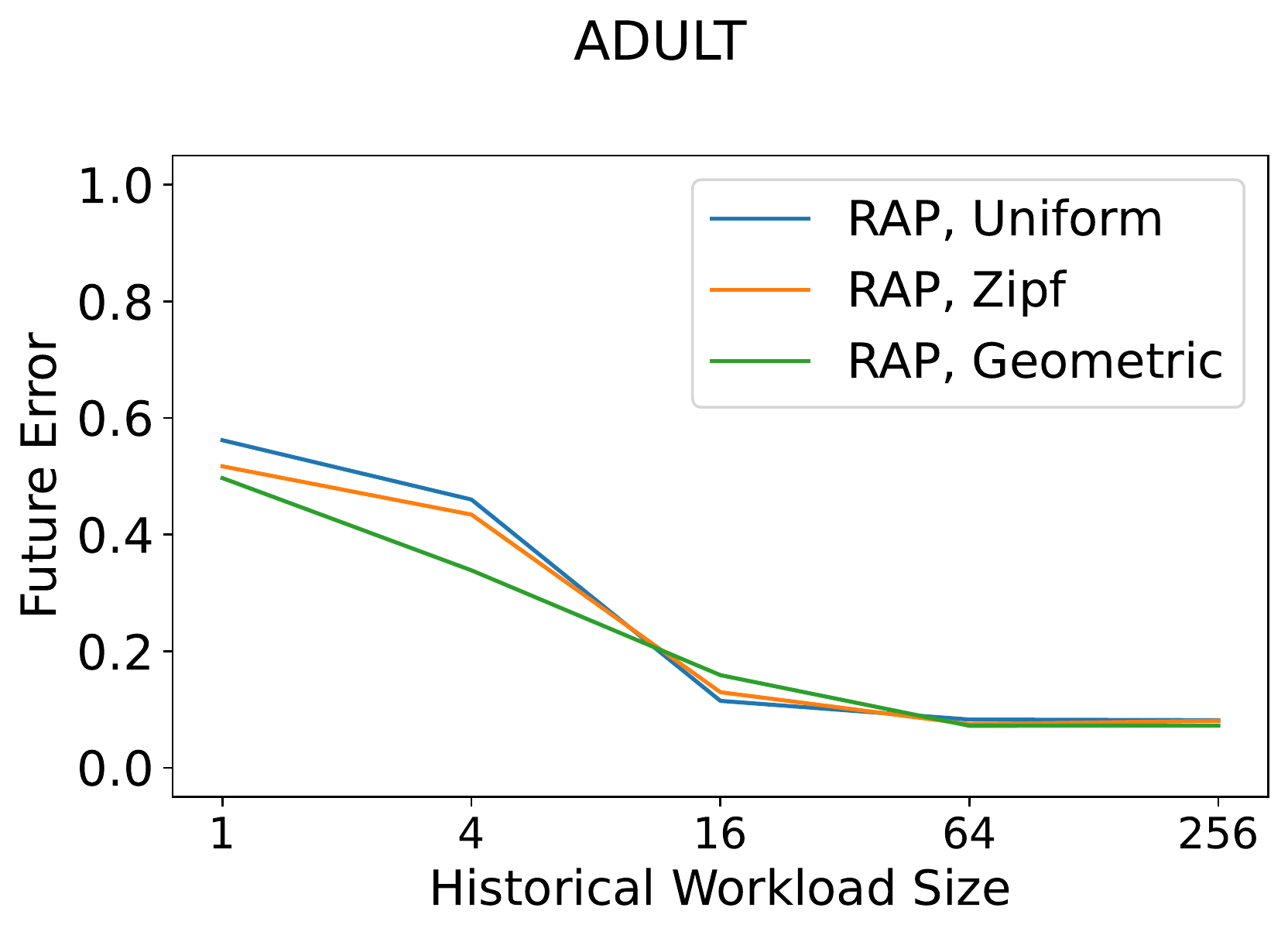} & \includegraphics[width=.45\linewidth]{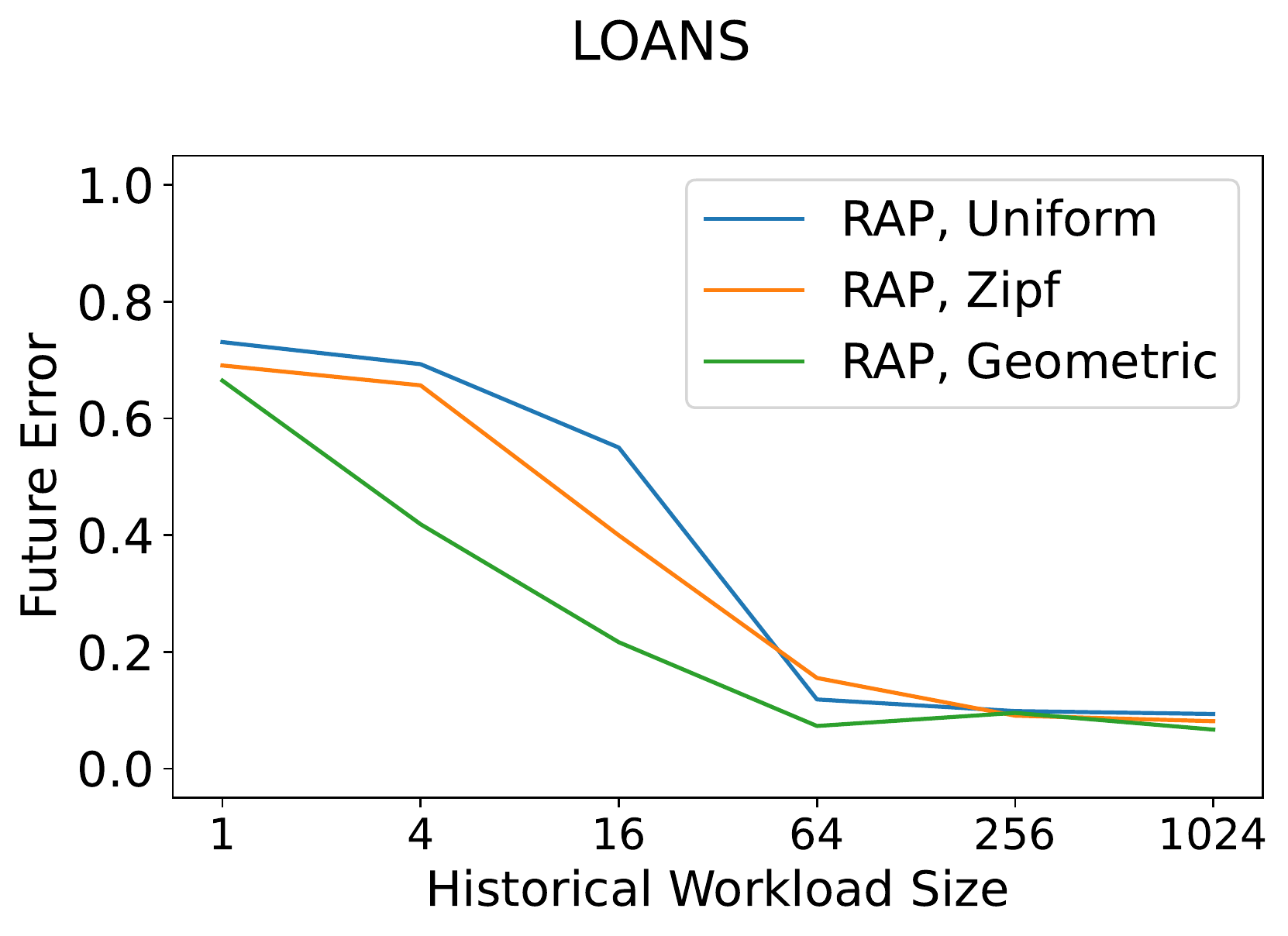}
\end{tabular}
\caption{\rap's future utility on each threshold distribution across a range of workload sizes.}
\label{fig:generalizing-dist-overlays}
\end{figure}

Figure~\ref{fig:generalizing-dist-lines} shows the results of this experiment.
As in our prior experiments, each point of the \rap\ line is taken to be where \rap\ achieves minimal \textit{present error} across all combinations of $T,K$ evaluated.
The future error at this minimizing $T,K$ pair is then evaluated and plotted, along with a corresponding 95\% confidence interval to account for randomness both between independent repetitions and across sampling future thresholds from the threshold distribution.
For real-world applications, this reflects what a practitioner using \rap\ would be able to do; i.e., choose the best performing instance of \rap\ across $T,K$ values on the present error metric (since they would not be able to evaluate future error), and then use that instance to answer future queries.
Ideally though, the practitioner would have omnisciently been able to choose the best performing instance of \rap\ across $T,K$ values on the future error metric directly, as this approach will never have larger future error than the former (feasible) approach.
To understand whether there is a significant difference in the future error between these two scenarios, we additionally plot the latter as ``\rap\ (opt)''.
For each distribution individually, we find the results are as expected.
Namely, \rap's future error is always lower than the baseline mechanism \bma's future error, and \rap's future error decreases as the number of historical thresholds that it is given increases.
Interestingly, we find no evidence that there is any point in which the number of historical thresholds given to \rap\ becomes ``too large'' and causes \rap's future error to begin increasing.
Instead, we find that \rap\ benefits from being provided more historical thresholds when the historical workload size is small, and then eventually reaches a point of diminishing returns.
Additionally, we find that the future error corresponding to the \rap\ instance that attains minimal present error across $T,K$ values is nearly identical to the future error corresponding to the \rap\ instance that attains minimal future error across $T,K$ values.
This indicates that in practice, answering future queries using the \rap\ instance that achieved minimal present error across $T,K$ values will likely also yield the minimal future error.

To better visualize the differences across distributions, \rap's future error lines are overlayed in Figure~\ref{fig:generalizing-dist-overlays} for both the ADULT and LOANS datasets.
From this, we see that the differences between \rap's future error across all three distributions are not as striking as one may expect, although for small historical workload sizes (less than 16 and 64 on the ADULT and LOANS datasets respectively) we find that the results roughly align with our intuition: the least concentrated (Uniform) distribution induces the highest future error, while the most concentrated (Geometric) distribution induces the lowest future error.
These findings, taken together with those of Figure~\ref{fig:generalizing-dist-lines}, yield a simple, useful insight into how to achieve low future error with \rap\ in practice.
Specifically, if the size of the historical workload is small, a practitioner can simply augment it by adding uniformly randomly sampled thresholds from the space of all possible thresholds (regardless of what the underlying threshold distribution $\cT_H$ is).
In the worst case, \rap's future error will be essentially unaffected (if $|W_H|$ was already in the region where returns are diminishing); in the best case, \rap's future error will be reduced significantly.

\subsubsection{Effect of ``Overfitting" the Synthetic Dataset}
When answering a prespecified set of queries using \rap, the goal in the relaxed projection step is to achieve as close to a global minima as possible.
In fact, although such an achievement is unlikely in practice, Aydore et al.'s theoretical utility result relies on a global minima having been reached.
However, when the goal is to learn a model that generalizes to unseen data, it is well known that optimizing the loss function to a global minima will lead to an extremely overfit model.
In the exact partial knowledge setting where future utility is the metric of choice, we seek to determine whether a conceptually similar ``overfitting'' phenomena may be occurring when \rap\ uses the historical threshold workload to generalize to future thresholds.

Towards this, we recall our finding from Figure~\ref{fig:generalizing-dist-lines}.
Specifically, that \rap\ does not seem to noticeably overfit to the historical queries when selecting the adaptivity parameters $T$ and $K$ based on the instance of \rap\ that had minimal present error.
However, this finding does not eliminate the possibility that \rap\ is overfitting to the historical queries during the synthetic dataset optimization procedure itself.
For instance, in Figure~\ref{fig:generalizing-dist-overlays} on the LOANS dataset at a historical workload size of 4, there is a significant difference between \rap's future errors on the Uniform vs.\ Geometric distributions.
This could be explained either by \rap\ overfitting to the historical workload generated from the Uniform distribution (which is relatively less informative regarding which thresholds are likely to be sampled in the future), or it could simply indicate that the historical workload does not contain enough information about the relevant space of thresholds that \rap\ needs in order to generalize well.
To analyze this possibility, we perform the same experiment as above while simultaneously evaluating \rap's future utility not just at the end of the optimization procedure, but after each iteration of the optimization procedure.
Figure~\ref{fig:generalizing-training-progress} displays the results, along with \rap's training loss and present error after each iteration of the optimization procedure.
In classic ML, a canonical symptom of overfitting is observing a point in the training progress where the training error continues decreasing, but where the test error begins steadily increasing.
In our setting, the analogue would be observing a point where the present error continues decreasing, but where the future error begins increasing.
However, we do not observe such behavior in either graph, as the future error steadily decreases throughout the entire training procedure.
The primary difference between the two graphs is that \rap's decrease in future error under the Uniform distribution is much smaller than under the Geometric distribution.
This simply indicates that, as expected, \rap\ is able to take advantage of the significantly more informative (with respect to the relevant portions of the space future thresholds will be drawn from) historical workload from the Geometric distribution.
Viewed differently, in the case of the Uniform distribution, \rap\ did not ``overfit'' to the historical workload --- rather, the historical workload simply did not provide enough information to \rap\ about the relevant remainder of the query space.

The take-away from these findings is that while \rap\ would have benefited from having a larger historical threshold workload, it would not have benefited from introducing analogues to other classic overfitting remedies.
For example, a practitioner may be tempted to reserve a held-out set of thresholds from the historical workload with the intention of using them between iterations as a proxy to estimate future utility, stopping the training early when the error on the held-out set begins increasing.
Not only do these findings indicate that such a strategy would not be beneficial, but combined with the findings from the previous experiment, we conclude that such a strategy would result in relatively \textit{greater} future error due to the reduced historical workload that \rap\ is given.

\begin{figure}
\centering
\begin{tabular}{cc}
\hspace*{-0.65cm}
  \includegraphics[width=.45\linewidth]{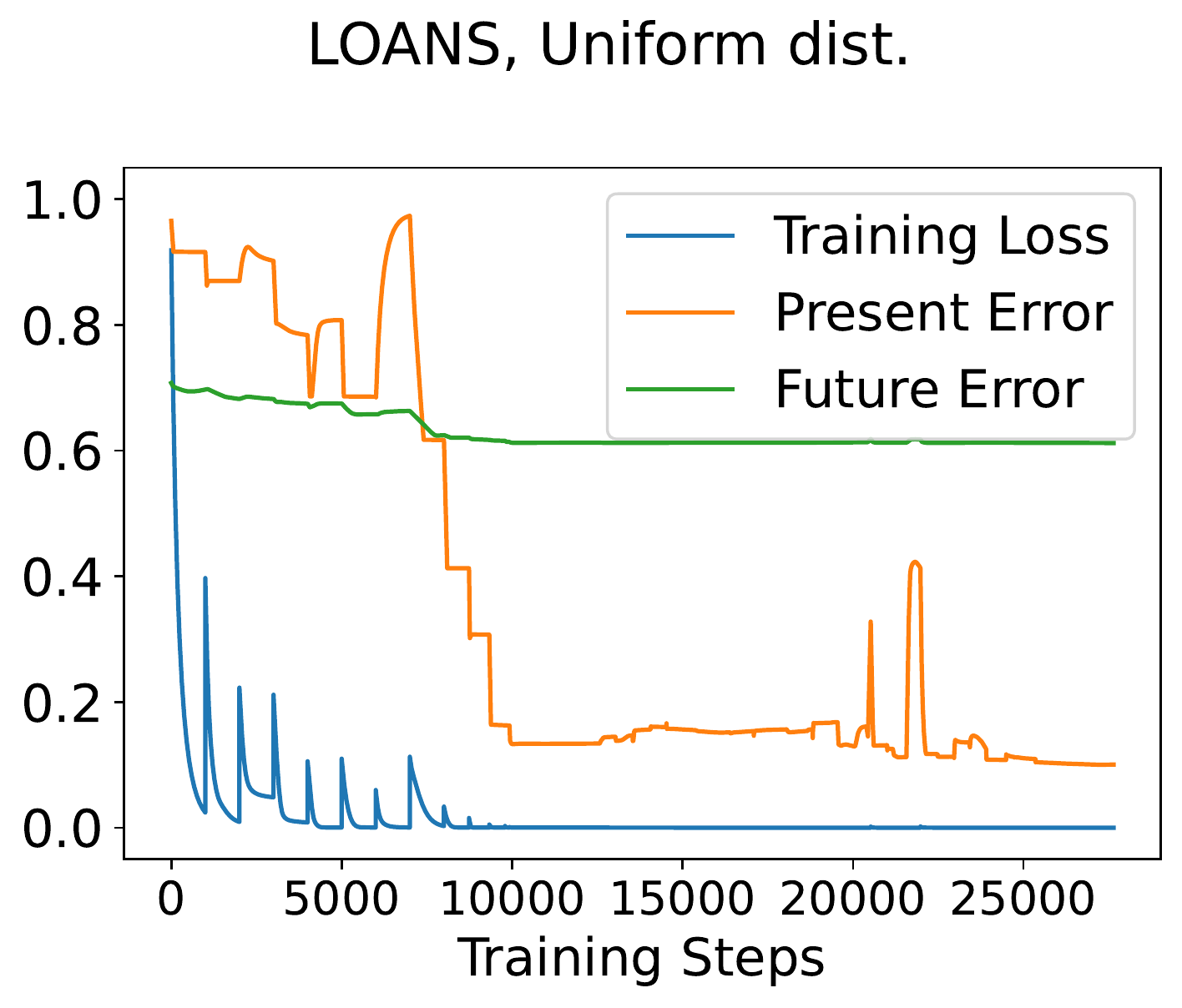} & \includegraphics[width=.45\linewidth]{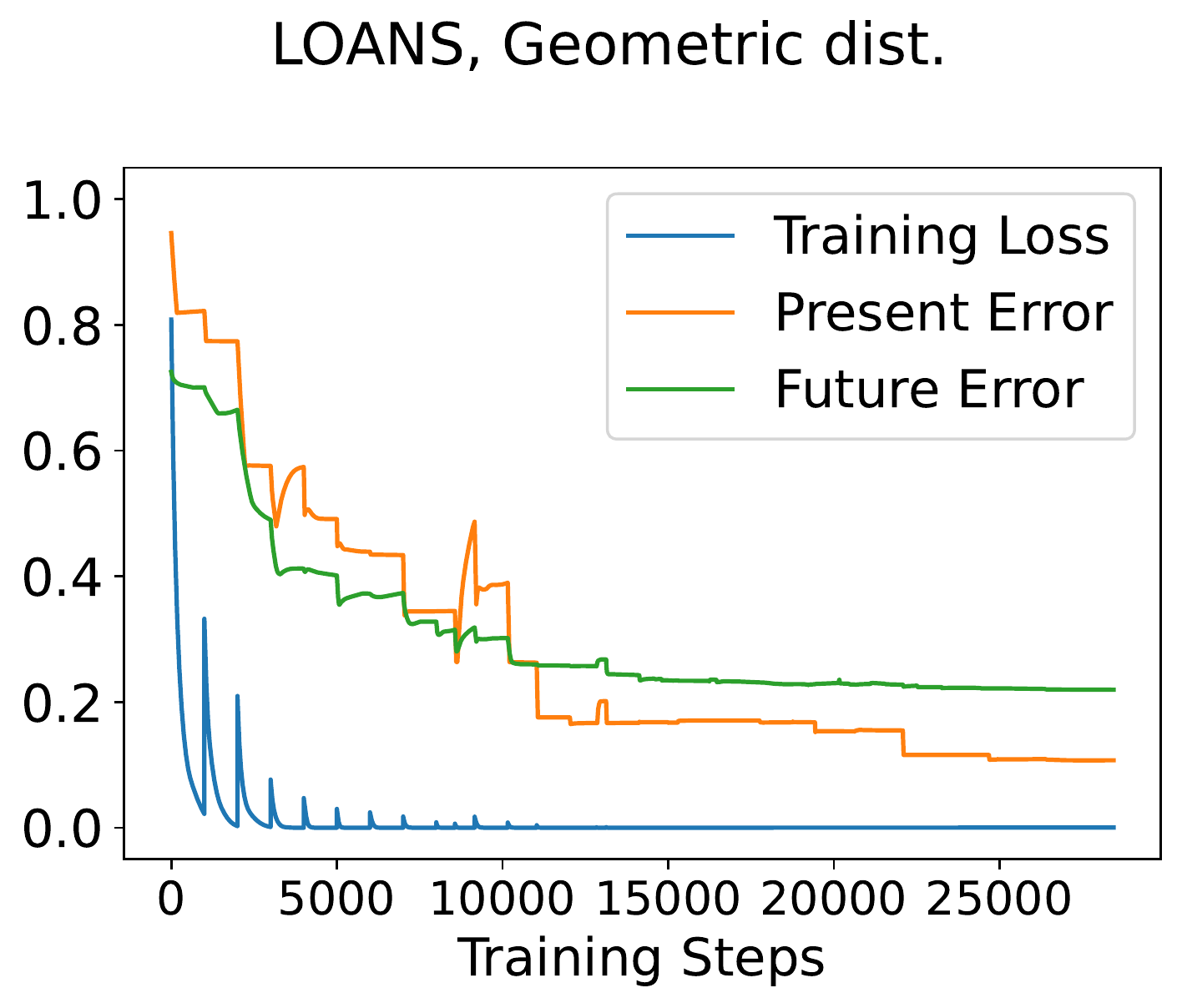}
\end{tabular}
\caption{Training progress across iterations for \rap\ on Uniform vs.\ Geometric distributions over features in LOANS dataset, both with a small historical workload size of 4.}
\label{fig:generalizing-training-progress}
\end{figure}

\subsubsection{Effect of Threshold Distribution Drift}
In the drifting partial knowledge setting, as the future features distribution $\cF_F$ drifts further from the historical features distribution $\cF_H$, it is clear that \rap's future utility should decrease.
However, it is unclear how \textit{sensitive} \rap's future utility is to such drift.
Thus, we seek to quantify the extent to which \rap\ can tolerate distributional drift while maintaining high future utility.

To achieve this, we evaluate \rap's future utility in the following experiment.
We first define the historical features distribution $\cF_H$ using the aforementioned highly concentrated Geometric distribution over features in both the ADULT and LOANS datasets.
We then measure \rap's future error across a range of drift amounts.
Because \rap\ achieved low future error in the exact partial knowledge setting on all distributions when the workload size was large enough, we anticipate that distributional drift will similarly not have a significant impact when the historical workload size is large.
Thus, in Figure~\ref{fig:generalizing-drift-effect}, we evaluate the impact of distributional drift specifically with small historical workload sizes of 4 and 16 on the ADULT and LOANS datasets respectively.

\begin{figure}
\centering
\includegraphics[width=.5\linewidth]{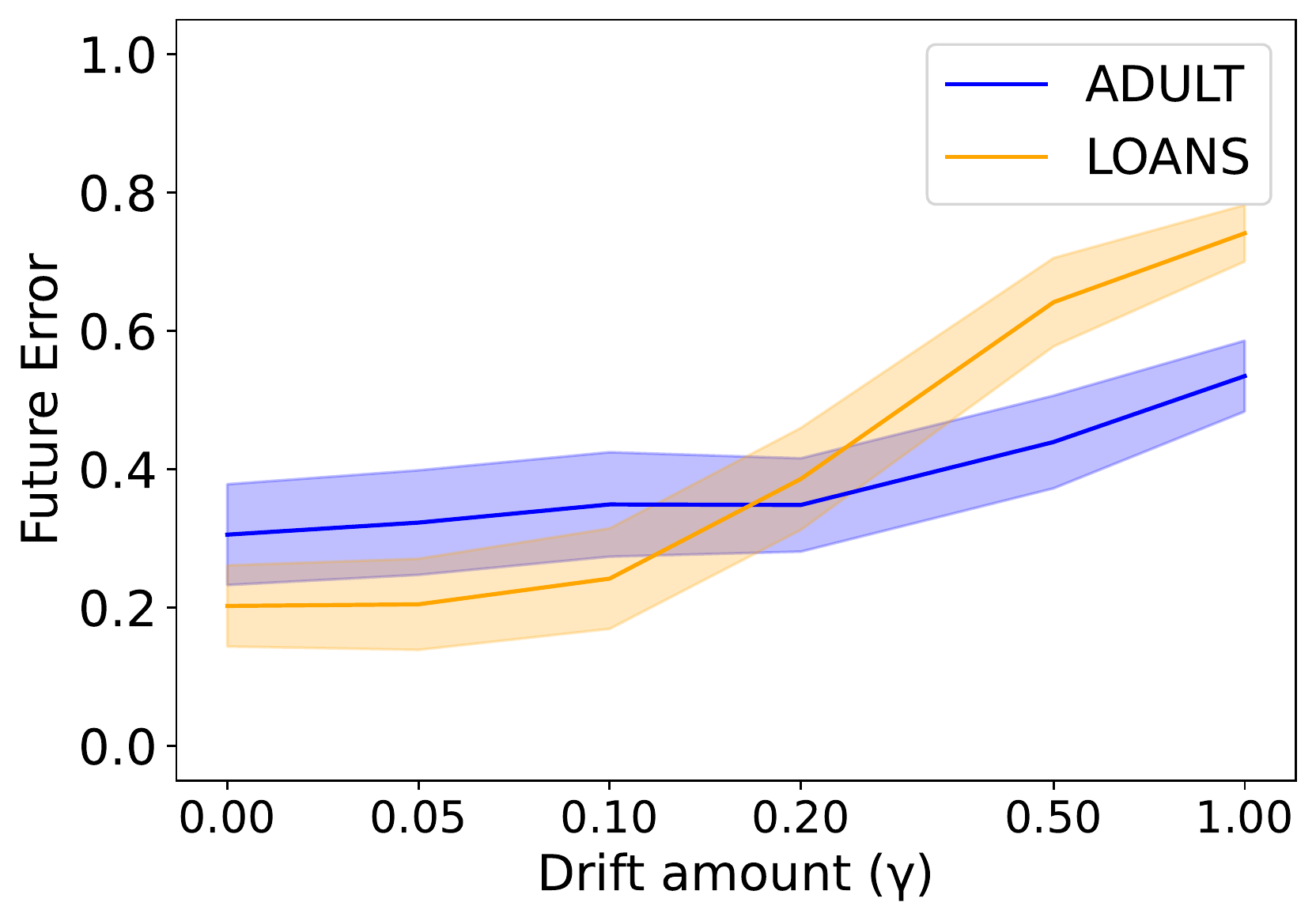}
\caption{Future error of \rap\ across a range of range of distributional drift amounts on the ADULT and LOANS datasets, given small historical workload sizes of 4 and 16 respectively.}
\label{fig:generalizing-drift-effect}
\end{figure}

The results of this experiment reveal that on both datasets, \rap\ is fairly impervious to distributional drift.
\rap's future error only begins to exhibit a significant increase at approximately $\gamma=0.4$ on the ADULT dataset and $\gamma=0.1$ on the LOANS dataset.
Comparing with Figure~\ref{fig:generalizing-drift-tvs}, these points both correspond to an expected total variation distance between the historical and future features distributions of approximately $0.5$ on their respective datasets.
Thus, we are able to conclude that even if the future features distribution drifts from the historical features distribution by a moderate amount, \rap\ can still be expected to maintain high utility.

\section{Additional Related Works} \label{sec:many-queries-related-works}
In this section, in addition to the prior works on large-scale query answering previously discussed (Section~\ref{sec:many-queries-prior-work}), we discuss other important works related to differentially private query answering.
We begin by discussing some works (concurrent with and subsequent to our work) related to answering large sets of prespecified queries.
For the mechanisms defined in these works, a prime direction for future research would be to evaluate them analogously to our evaluation of \rap\ in this work.
For instance, evaluating their scalability to larger query spaces as well as their generalizability for answering queries posed in the future, perhaps in a manner similar to Tao et al.~\cite{tao2021benchmarking}).
We then briefly discuss some lines of research related to the general problem and settings explored in this work.

\subsubsection*{Answering Many Queries}
One closely related work to the goals of this work is that of Liu et al.~\cite{liu2021iterative}, which studies the problem of constructing an algorithmic framework for privately answering a prespecified set of statistical queries --- our first setting of interest.
Concretely, the framework they construct unifies several DP mechanisms which specifically answer queries by building a synthetic dataset through iterative, adaptive updates.
These mechanisms include the previous practical state-of-the-art mechanisms~\cite{gaboardi2014dual, vietri2020new}, as well as a modified variant of a \textit{preliminary} version of the \rap\ mechanism (where a softmax transformation~\cite{bridle1990training} is applied to each row of the synthetic dataset $D$ after each iteration of \rap's optimization procedure).
Liu et al. then leverage their framework to design two new mechanisms for answering prespecified sets of queries, and empirically show that both achieve high utility.
However, in their empirical evaluations, Liu et al. find that the modified \rap\ mechanism's utility is on par with the utility of their two newly proposed mechanisms, and that \rap\ is computationally cheaper to execute.
Thus, we do not additionally evaluate their two new mechanisms in this work.
Moreover, Aydore et al.\ have subsequently updated the \rap\ mechanism to incorporate a similar modification (applying the Sparsemax transformation~\cite{martins2016softmax}, and optionally finishing with randomized rounding) and showed that it further improves utility --- in turn, further justifying our focus on the \rap\ mechanism.

Along similar lines, another closely related work is the recently introduced \textit{Adaptive and Iterative Mechanism} (AIM) by McKenna et al.~\cite{mckenna2022aim}.
AIM is a mechanism for DP synthetic data generation to specifically answer workloads of marginal queries.
The high-level idea of their approach is similar to that of \rap\ and Liu et al.'s work~\cite{liu2021iterative}, adaptively selecting marginals to use to optimize the synthetic dataset.
However, their work takes this a step further by designing a method to more intelligently perform the selection.
Moreover, they develop new techniques to quantify the uncertainty to answers derived from the generated synthetic data.
Empirically evaluating AIM, they show that it generally outperforms prior state-of-the-art mechanisms, including \rap.
However, their evaluation setting was somewhat different; specifically, they reduced the domain size of the datasets by discretizing numerical features into 32 equal-width bins.
This makes the optimization problem significantly easier for all mechanisms they evaluate, which is highly useful when running a wide range of experiments across many random trials.
However, it leaves AIM's utility unclear when the data is unbinned and sparse (e.g., for a numerical attribute with 100 possible values).
Moreover, since the source code of AIM's implementation was never released, we consider a ground-up reimplementation of AIM amenable to large-scale evaluations on large and sparse data spaces to be out of the scope of this work.
Performing such evaluations, especially in connection to the computational resources required by each method (AIM, \rap, and others), is a prime direction for future work.

Another closely related work is the concurrent theoretical work of Nikolov~\cite{nikolov2022private}, which proposes and analyzes a new mechanism for answering sets of prespecified queries with differential privacy.
Their new mechanism is based on randomly projecting the queries to a lower dimensional space, answering the projected queries with a simple DP additive-noise mechanism, then lifting the answers back into their original dimension.
The primary focus and contribution of their work is the thorough mathematical analysis of the mechanism's utility, showing that it achieves optimal worst case sample complexity under an average error metric.
Such results are less directly relevant to our work though, as our focus is on different error metrics for fixed real-world datasets (rather than in the worst case across all possible datasets).
However, conceptually, Nikolov's newly proposed mechanism could be used to tackle the same problem as our work.
Practically though, the runtime of Nikolov's mechanism (although polynomial) would prevent it from being used to answer the large number of queries that we answer with \rap\ in this work.
An intriguing direction for future work would be adapting Nikolov's new mechanism for practical query answering, and determining ways to scale it up to accurately answer queries on a truly large scale.

A final line of closely related work is the subsequent work of Vietri et al.~\cite{vietri2022private}.
The focus of their work is explicitly on enhancing the \rap\ mechanism, creating a new mechanism they call \rap++.
Their goal is orthogonal to the goal of this work, in that they seek to extend the original \rap\ mechanism so that it is able to support numerical features natively.
Prior to their work, \rap\ required one-hot discretization of any numerical features in the dataset.
For features with wide numerical ranges, one-hot discretization greatly increases the dimensionality of \rap's optimization problem, which in turn increases the computational burden and simultaneously decreases the mechanism's overall utility.
In \rap++, Vietri et al.\ incorporate tempered sigmoid annealing and random linear projection queries into \rap\ in order to handle a mixture of categorical and numerical features without any discretization.
They perform several empirical evaluations on \rap++, finding that it achieves state-of-the-art utility and runtime.
Despite their goal being orthogonal to the goal of this work, our findings from this work could be used to further improve the \rap++ mechanism and its evaluation.

\subsubsection*{Related Lines of Research}
One related (but disjoint) line of research is on the \textit{public/private} model of differential privacy, where some data must be protected with differential privacy while the remaining ``public'' data requires no privacy protections \cite{beimel2013private, ji2013differential, hamm2016learning, papernot2016semi, bassily2020private, alon2019limits, liu2021leveraging, tao2021prior}.
These works have shown that mechanisms can be designed which make use of a small amount of public data in order to significantly boost utility.
Our work differs from this model in that it does not directly make use of any public data.
In our newly defined partial knowledge setting, we instead assume that the entire set of user data $D$ is private, but that there exist publicly known historically posed queries $Q_H$ which are not privacy sensitive.
Assuming that $Q_H$ was generated from a random distribution $\cT_H$, we seek to understand the extent to which the \rap\ mechanism is able to take advantage of $Q_H$ using $D$ in order to accurately answer future queries generated from a distribution $\cT_F$ related to $\cT_H$.

The final related line of work is on reconstruction attacks, which studies how accurately sets of queries can be answered before private information in the dataset can be recovered.
The high level results of this research can be summarized through the Fundamental Law of Information Recovery~\cite{dwork2014algorithmic}: ``overly accurate answers to too many questions will destroy privacy in a spectacular way.''
Initial work on reconstruction attacks~\cite{dinur2003revealing} inspired the conception of DP, and subsequent works have improved the computational efficiency of attacks, improved the theoretical analyses of attacks, or crafted highly effective attacks to specific cases~\cite{dwork2007price, dwork2008new, muthukrishnan2012optimal, dwork2017exposed, garfinkel2019understanding}.
Although somewhat related, the focus of this line of work significantly differs from the focus of our work.
In research on reconstruction attacks, the basic goal is to find worst-case sets of queries (or the minimal sizes thereof) such that it is impossible to answer them all accurately while simultaneously maintaining privacy.
In this work, our focus is not on generic worst-case queries, but rather on efficiently and accurately answering practical sets of prespecified or randomly sampled queries with privacy.
Thus, the works on reconstruction attacks are not directly relevant to our problem in either of the two settings we consider.

\section{Conclusions} \label{sec:conclusions}
In this work, we address the high-level research question: \textit{to what extent are differentially private mechanisms able to answer a large number of statistical queries efficiently and with low error?}
We analyze this problem in two settings, the classic prespecified queries setting, and a new setting that we introduced where only partial knowledge of the queries is available to the DP mechanism in advance.
In both settings, our contributions are grounded in the state-of-the-art DP mechanism for answering large numbers of queries, the \rap\ mechanism.
In the prespecified queries setting, we perform a focused but thorough reproducibility study on Aydore et al.'s original evaluation of \rap\ in order to clarify its value and strengthen its adoptability for practical uses.
We also expand the class of queries that \rap\ is capable of evaluating, thus extending \rap's applicability in practice.
Aside from the prespecified queries setting, we concretely specify a new partial knowledge setting where a mechanism is provided with a set of historically posed queries which are similar to queries that will be posed in the future.
In this setting, we define a machine learning inspired utility measure to quantify a mechanism's ability to answer such future queries.
Then, utilizing this utility measure, we evaluate \rap's suitability for generating synthetic datasets to answer queries posed in the future, finding that it is both efficient and effective.
Our findings in this chapter further the state of the art in differentially private large-scale query answering, and additionally open new directions for future work on other problems in differential privacy within our newly defined partial knowledge setting.

\section*{Acknowledgements} \label{sec:acknowledgements}
This work was funded in part by a Privacy Enhancing Technologies Award\footnote{\url{ https://research.facebook.com/research-award-recipients/year/2021/?s=korolova}} from Facebook and NSF awards \#1916153, \#1956435, and \#1943584.

\bibliographystyle{alpha}
\bibliography{references}

\appendix

\section{Appendix} \label{app:many-queries-appendix}

\subsection*{Deferred Regression Analysis Details}

In this portion, we present the details of the setup and results for the regression analysis on the utility impact of filtering ``large'' marginals out of \rap's evaluation.

\subsubsection*{Present Error vs.\ Workload Size}
For this regression analysis on each dataset, we define the following regression variables:
\begin{itemize}
\item $x_1, x_2$: dummy variable encodings for the three levels of $\epsilon$ evaluated. I.e., 
\begin{itemize}[label=$\circ$]
\item $x_1 = x_2 = 0$ represents $\epsilon = 0.01$.
\item $x_1 = 1, x_2 = 0$ represents $\epsilon = 0.1$.
\item $x_1 = 0, x_2 = 1$ represents $\epsilon = 1$.
\end{itemize}
\item $x_3$: logarithm of the workload size.
\item $x_4$: indicator variable representing whether thresholding was applied. I.e., $x_4=0$ if thresholding was not applied, $x_4=1$ if it was.
\item $\zeta$: stochasticity in the process (e.g., from randomness in the \rap\ mechanism due to privacy, from randomness in the marginal selection process across independent trials, etc.).
\end{itemize}
With these variables defined, we state the full regression model with interactions as
\[ \textstyle\err_P = \beta_0 + \beta_1 x_1 + \beta_2 x_2 + (\beta_3 + \beta_4 x_1 + \beta_5 x_2)x_3 + (\beta_6 + \beta_7 x_1 + \beta_8 x_2 + (\beta_9 + \beta_{10} x_1 + \beta_{11} x_2)x_3)x_4 + \zeta, \]
and the restricted regression model as
\[ \textstyle\err_P = \beta_0 + \beta_1 x_1 + \beta_2 x_2 + (\beta_3 + \beta_4 x_1 + \beta_5 x_2)x_3 + \zeta. \]
We then fit both the full and restricted regression models to the results of the \rap\ evaluations for the ADULT and LOANS datasets (separately).
Regression results for the full models (ADULT on left and LOANS on right) are stated below.

\begin{table}[!htb]
    \begin{minipage}{.5\linewidth}
\begin{singlespace}
\scriptsize
\begin{tabular}{lclc}
\toprule
\textbf{Dep. Variable:}                                                                           &   present\_err   & \textbf{  R-squared:         } &     0.963   \\
\textbf{Model:}                                                                                   &       OLS        & \textbf{  Adj. R-squared:    } &     0.959   \\
\textbf{Method:}                                                                                  &  Least Squares   & \textbf{  F-statistic:       } &     266.6   \\
\textbf{Covariance Type:}                                                                         &    nonrobust     & \textbf{  Prob (F-statistic):} &  7.40e-76   \\
\textbf{No. Observations:}                                                                        &         126      & \textbf{  Log-Likelihood:    } &    295.45   \\
\textbf{Df Residuals:}                                                                            &         114      & \textbf{  AIC:               } &    -566.9   \\
\textbf{Df Model:}                                                                                &          11      & \textbf{  BIC:               } &    -532.9   \\
\bottomrule
\end{tabular}\\
\begin{tabular}{lcccccc}
                                                                                                  & \textbf{coef} & \textbf{std err} & \textbf{t} & \textbf{P$> |$t$|$} & \textbf{[0.025} & \textbf{0.975]}  \\
\midrule
$\beta_{0}$                                                                               &       0.0320  &        0.009     &     3.415  &         0.001        &        0.013    &        0.051     \\
$\beta_{1}$                                                                 &      -0.0066  &        0.013     &    -0.495  &         0.621        &       -0.033    &        0.020     \\
$\beta_{2}$                                                                 &      -0.0248  &        0.013     &    -1.869  &         0.064        &       -0.051    &        0.001     \\
$\beta_{3}$                                                                   &       0.0650  &        0.003     &    24.536  &         0.000        &        0.060    &        0.070     \\
$\beta_{4}$                                          &      -0.0528  &        0.004     &   -14.075  &         0.000        &       -0.060    &       -0.045     \\
$\beta_{5}$                                          &      -0.0600  &        0.004     &   -16.015  &         0.000        &       -0.067    &       -0.053     \\
$\beta_{6}$                                                 &       0.0280  &        0.013     &     2.120  &         0.036        &        0.002    &        0.054     \\
$\beta_{7}$                        &      -0.0309  &        0.019     &    -1.649  &         0.102        &       -0.068    &        0.006     \\
$\beta_{8}$                        &      -0.0277  &        0.019     &    -1.482  &         0.141        &       -0.065    &        0.009     \\
$\beta_{9}$                          &      -0.0036  &        0.004     &    -0.952  &         0.343        &       -0.011    &        0.004     \\
$\beta_{10}$ &       0.0052  &        0.005     &     0.988  &         0.325        &       -0.005    &        0.016     \\
$\beta_{11}$ &       0.0040  &        0.005     &     0.762  &         0.448        &       -0.006    &        0.014     \\
\bottomrule
\end{tabular}\\
\begin{tabular}{lclc}
\textbf{Omnibus:}       & 24.270 & \textbf{  Durbin-Watson:     } &    1.693  \\
\textbf{Prob(Omnibus):} &  0.000 & \textbf{  Jarque-Bera (JB):  } &  114.122  \\
\textbf{Skew:}          &  0.434 & \textbf{  Prob(JB):          } & 1.65e-25  \\
\textbf{Kurtosis:}      &  7.581 & \textbf{  Cond. No.          } &     64.4  \\
\bottomrule
\end{tabular}\\
\normalsize
\end{singlespace}
    \end{minipage}%
\begin{minipage}{.05\linewidth}
\phantom{.}
\end{minipage}
    \begin{minipage}{.5\linewidth}
\begin{singlespace}
\scriptsize
\begin{tabular}{lclc}
\toprule
\textbf{Dep. Variable:}                                                                           &   present\_err   & \textbf{  R-squared:         } &     0.942   \\
\textbf{Model:}                                                                                   &       OLS        & \textbf{  Adj. R-squared:    } &     0.937   \\
\textbf{Method:}                                                                                  &  Least Squares   & \textbf{  F-statistic:       } &     193.4   \\
\textbf{Covariance Type:}                                                                         &    nonrobust     & \textbf{  Prob (F-statistic):} &  1.05e-75   \\
\textbf{No. Observations:}                                                                        &         144      & \textbf{  Log-Likelihood:    } &    228.17   \\
\textbf{Df Residuals:}                                                                            &         132      & \textbf{  AIC:               } &    -432.3   \\
\textbf{Df Model:}                                                                                &          11      & \textbf{  BIC:               } &    -396.7   \\
\bottomrule
\end{tabular}\\
\begin{tabular}{lcccccc}                                                                                                  & \textbf{coef} & \textbf{std err} & \textbf{t} & \textbf{P$> |$t$|$} & \textbf{[0.025} & \textbf{0.975]}  \\
\midrule
$\beta_{0}$                                                                                &       0.0372  &        0.019     &     1.982  &         0.050        &     7.32e-05    &        0.074     \\
$\beta_{1}$                                                                 &      -0.0113  &        0.027     &    -0.425  &         0.671        &       -0.064    &        0.041     \\
$\beta_{2}$                                                                 &      -0.0282  &        0.027     &    -1.062  &         0.290        &       -0.081    &        0.024     \\
$\beta_{3}$                                                                   &       0.0966  &        0.004     &    21.626  &         0.000        &        0.088    &        0.105     \\
$\beta_{4}$                                          &      -0.0767  &        0.006     &   -12.134  &         0.000        &       -0.089    &       -0.064     \\
$\beta_{5}$                                          &      -0.0882  &        0.006     &   -13.953  &         0.000        &       -0.101    &       -0.076     \\
$\beta_{6}$                                                 &       0.0215  &        0.027     &     0.812  &         0.418        &       -0.031    &        0.074     \\
$\beta_{7}$                        &      -0.0273  &        0.038     &    -0.729  &         0.467        &       -0.102    &        0.047     \\
$\beta_{8}$                        &      -0.0275  &        0.038     &    -0.733  &         0.465        &       -0.102    &        0.047     \\
$\beta_{9}$                          &      -0.0039  &        0.006     &    -0.619  &         0.537        &       -0.016    &        0.009     \\
$\beta_{10}$ &       0.0039  &        0.009     &     0.437  &         0.663        &       -0.014    &        0.022     \\
$\beta_{11}$ &       0.0051  &        0.009     &     0.574  &         0.567        &       -0.013    &        0.023     \\
\bottomrule
\end{tabular}\\
\begin{tabular}{lclc}
\textbf{Omnibus:}       & 29.738 & \textbf{  Durbin-Watson:     } &    2.677  \\
\textbf{Prob(Omnibus):} &  0.000 & \textbf{  Jarque-Bera (JB):  } &  208.504  \\
\textbf{Skew:}          &  0.355 & \textbf{  Prob(JB):          } & 5.29e-46  \\
\textbf{Kurtosis:}      &  8.852 & \textbf{  Cond. No.          } &     75.6  \\
\bottomrule
\end{tabular}\\
\normalsize
\end{singlespace}
    \end{minipage}%
\end{table}

\subsubsection*{Present Error vs.\ Number of Queries}
For this regression analysis on each dataset, we define the same variables as in before, with the only change being that $x_3$ now represents the logarithm of the total number of consistent queries that \rap\ evaluates (rather than the size of the workload that \rap\ evaluates).
With these variables, we define the same full and restricted regression models as before, and we fit both to the results of the \rap\ evaluations.
Regression results for the full models (ADULT on left and LOANS on right) are stated below.

\begin{table}[!htb]
    \begin{minipage}{.5\linewidth}
\begin{singlespace}
\scriptsize
\begin{tabular}{lclc}
\toprule
\textbf{Dep. Variable:}                                                                         &   present\_err   & \textbf{  R-squared:         } &     0.889   \\
\textbf{Model:}                                                                                 &       OLS        & \textbf{  Adj. R-squared:    } &     0.879   \\
\textbf{Method:}                                                                                &  Least Squares   & \textbf{  F-statistic:       } &     83.19   \\
\textbf{Covariance Type:}                                                                       &    nonrobust     & \textbf{  Prob (F-statistic):} &  3.83e-49   \\
\textbf{No. Observations:}                                                                      &         126      & \textbf{  Log-Likelihood:    } &    227.07   \\
\textbf{Df Residuals:}                                                                          &         114      & \textbf{  AIC:               } &    -430.1   \\
\textbf{Df Model:}                                                                              &          11      & \textbf{  BIC:               } &    -396.1   \\
\bottomrule
\end{tabular}\\
\begin{tabular}{lcccccc}
                                                                                                & \textbf{coef} & \textbf{std err} & \textbf{t} & \textbf{P$> |$t$|$} & \textbf{[0.025} & \textbf{0.975]}  \\
\midrule
$\beta_{0}$                                                                              &      -0.3210  &        0.043     &    -7.438  &         0.000        &       -0.406    &       -0.235     \\
$\beta_{1}$                                                               &       0.2882  &        0.061     &     4.722  &         0.000        &        0.167    &        0.409     \\
$\beta_{2}$                                                               &       0.3014  &        0.061     &     4.939  &         0.000        &        0.181    &        0.422     \\
$\beta_{3}$                                                                   &       0.0472  &        0.004     &    12.856  &         0.000        &        0.040    &        0.054     \\
$\beta_{4}$                                          &      -0.0390  &        0.005     &    -7.516  &         0.000        &       -0.049    &       -0.029     \\
$\beta_{5}$                                          &      -0.0436  &        0.005     &    -8.398  &         0.000        &       -0.054    &       -0.033     \\
$\beta_{6}$                                               &       0.1198  &        0.057     &     2.110  &         0.037        &        0.007    &        0.232     \\
$\beta_{7}$                      &      -0.1237  &        0.080     &    -1.540  &         0.126        &       -0.283    &        0.035     \\
$\beta_{8}$                      &      -0.1189  &        0.080     &    -1.480  &         0.142        &       -0.278    &        0.040     \\
$\beta_{9}$                          &      -0.0127  &        0.005     &    -2.742  &         0.007        &       -0.022    &       -0.004     \\
$\beta_{10}$ &       0.0123  &        0.007     &     1.886  &         0.062        &       -0.001    &        0.025     \\
$\beta_{11}$ &       0.0124  &        0.007     &     1.894  &         0.061        &       -0.001    &        0.025     \\
\bottomrule
\end{tabular}\\
\begin{tabular}{lclc}
\textbf{Omnibus:}       & 53.796 & \textbf{  Durbin-Watson:     } &    1.512  \\
\textbf{Prob(Omnibus):} &  0.000 & \textbf{  Jarque-Bera (JB):  } &  189.737  \\
\textbf{Skew:}          & -1.528 & \textbf{  Prob(JB):          } & 6.30e-42  \\
\textbf{Kurtosis:}      &  8.177 & \textbf{  Cond. No.          } &     572.  \\
\bottomrule
\end{tabular}\\
\normalsize
\end{singlespace}
    \end{minipage}%
\begin{minipage}{.05\linewidth}
\phantom{.}
\end{minipage}
    \begin{minipage}{.5\linewidth}
\begin{singlespace}
\scriptsize
\begin{tabular}{lclc}
\toprule
\textbf{Dep. Variable:}                                                                         &   present\_err   & \textbf{  R-squared:         } &     0.887   \\
\textbf{Model:}                                                                                 &       OLS        & \textbf{  Adj. R-squared:    } &     0.877   \\
\textbf{Method:}                                                                                &  Least Squares   & \textbf{  F-statistic:       } &     93.96   \\
\textbf{Covariance Type:}                                                                       &    nonrobust  & \textbf{  Prob (F-statistic):} &  7.68e-57   \\
\textbf{No. Observations:}                                                                      &         144      & \textbf{  Log-Likelihood:    } &    180.50   \\
\textbf{Df Residuals:}                                                                          &         132      & \textbf{  AIC:               } &    -337.0   \\
\textbf{Df Model:}                                                                              &          11      & \textbf{  BIC:               } &    -301.4   \\
\bottomrule
\end{tabular}\\
\begin{tabular}{lcccccc}
                                                                                                & \textbf{coef} & \textbf{std err} & \textbf{t} & \textbf{P$> |$t$|$} & \textbf{[0.025} & \textbf{0.975]}  \\
\midrule
$\beta_{0}$                                                                              &      -0.6398  &        0.070     &    -9.171  &         0.000        &       -0.778    &       -0.502     \\
$\beta_{1}$                                                               &       0.5254  &        0.099     &     5.326  &         0.000        &        0.330    &        0.721     \\
$\beta_{2}$                                                               &       0.5873  &        0.099     &     5.952  &         0.000        &        0.392    &        0.782     \\
$\beta_{3}$                                                                   &       0.0779  &        0.005     &    14.839  &         0.000        &        0.068    &        0.088     \\
$\beta_{4}$                                          &      -0.0618  &        0.007     &    -8.321  &         0.000        &       -0.076    &       -0.047     \\
$\beta_{5}$                                          &      -0.0709  &        0.007     &    -9.550  &         0.000        &       -0.086    &       -0.056     \\
$\beta_{6}$                                               &       0.1453  &        0.096     &     1.509  &         0.134        &       -0.045    &        0.336     \\
$\beta_{7}$                      &      -0.1293  &        0.136     &    -0.949  &         0.344        &       -0.399    &        0.140     \\
$\beta_{8}$                      &      -0.1474  &        0.136     &    -1.082  &         0.281        &       -0.417    &        0.122     \\
$\beta_{9}$                          &      -0.0240  &        0.007     &    -3.647  &         0.000        &       -0.037    &       -0.011     \\
$\beta_{10}$ &       0.0195  &        0.009     &     2.088  &         0.039        &        0.001    &        0.038     \\
$\beta_{11}$ &       0.0227  &        0.009     &     2.431  &         0.016        &        0.004    &        0.041     \\
\bottomrule
\end{tabular}\\
\begin{tabular}{lclc}
\textbf{Omnibus:}       & 18.588 & \textbf{  Durbin-Watson:     } &    1.775  \\
\textbf{Prob(Omnibus):} &  0.000 & \textbf{  Jarque-Bera (JB):  } &   78.893  \\
\textbf{Skew:}          & -0.142 & \textbf{  Prob(JB):          } & 7.39e-18  \\
\textbf{Kurtosis:}      &  6.615 & \textbf{  Cond. No.          } &     726.  \\
\bottomrule
\end{tabular}\\
\normalsize
\end{singlespace}
    \end{minipage}%
\end{table}

\end{document}